\newcommand{\globalcolor}[1]{%
  \color{#1}\global\let\default@color\current@color
}
\newtheorem{theorem}{Theorem}
\newtheorem{lemma}[theorem]{Lemma}
\newtheorem{prop}[theorem]{Proposition}
\newtheorem{cor}[theorem]{Corollary}
\newtheorem{obs}[theorem]{Observation}
\newtheorem{question}[theorem]{Question}
\newtheorem{problem}[theorem]{Problem}
\theoremstyle{definition}
\newtheorem{definition}[theorem]{Definition}
\newtheorem{remark}[theorem]{Remark}
\newtheorem{fact}[theorem]{Fact}
\newtheorem{example}[theorem]{Example}
\newcommand{\tinyspace}{\mspace{1mu}}
\newcommand{\rank}{\operatorname{rank}}
\newcommand{\abs}[1]{\lvert #1 \rvert}
\newcommand{\ip}[2]{\langle #1 , #2\rangle}
\newcommand{\ceil}[1]{\lceil #1 \rceil}
\newcommand{\floor}[1]{\lfloor #1 \rfloor}
\newcommand{\norm}[1]{\lVert\tinyspace #1 \tinyspace\rVert}
\newcommand{\I}{\mathds{1}}
\newcommand{\setft}[1]{\mathrm{#1}}
\newcommand{\complex}{\mathbb{C}}
\newcommand{\field}{\mathbb{F}}
\newcommand{\real}{\mathbb{R}}
\renewcommand{\natural}{\mathbb{N}}
\newcommand\X{\mathcal{X}}
\newcommand\Y{\mathcal{Y}}
\newcommand\Z{\mathcal{Z}}
\newcommand\W{\mathcal{W}}
\newcommand\A{\mathcal{A}}
\newcommand\B{\mathcal{B}}
\newcommand\V{\mathcal{V}}
\newcommand\U{\mathcal{U}}
\newcommand\C{\mathcal{C}}
\newcommand\D{\mathcal{D}}
\newcommand\T{\mathcal{T}}
\newcommand\K{\mathcal{K}}
\renewcommand\L{\mathcal{L}}
\DeclareMathOperator{\spn}{span}
\newcommand{\eql}[2]{\begin{align}\label{#1}#2\end{align}}
\newcommand{\eq}[2]{
\ifthenelse{\equal{#1}{}}{\begin{align}#2\end{align}}{\eql{#1}{#2}}}
\newcommand{\ha}[2][]{
\ifthenelse{\equal{#1}{}}{#2}{#1, #2}}
\newcommand{\hook}{\mathbin{\lrcorner}}
\newcommand{\codim}{\setft{codim}}
\newcommand{\trouble}{T}
  \newcommand{\anote}[1]{}
 \newcommand{\bnote}[1]{}
 \newcommand{\nnote}[1]{}
\def\ba#1\ea{\begin{align}#1\end{align}}
\begin{document}

\emergencystretch 3em
\title{\bf
Computing linear sections of varieties: quantum entanglement, tensor decompositions and beyond
}

 \author[$\dagger$]{Nathaniel Johnston\thanks{emails: njohnston@mta.ca, benjamin.lovitz@gmail.com, aravindv@northwestern.edu}}
 \author[$* \ddagger$]{
   Benjamin Lovitz}
  \author[$* \S$]{Aravindan Vijayaraghavan}
  \affil[$\dagger$]{Department of Mathematics and Computer Science, Mount Allison University, Sackville, New Brunswick, Canada}
 \affil[$\ddagger$]{Department of Mathematics, Northeastern University, Boston, Massachusetts, USA}
 \affil[$\S$]{Department of Computer Science, Northwestern University, Evanston, Illinois, USA}

\date{\today}

\maketitle
\begin{abstract}

We study the problem of finding elements in the intersection of an arbitrary conic variety in $\field^n$ with a given linear subspace (where $\field$ can be the real or complex field). This problem captures a rich family of algorithmic problems under different choices of the variety. The special case of the variety consisting of rank-$1$ matrices already has strong connections to central problems in different areas like quantum information theory and tensor decompositions. 
This problem is known to be NP-hard in the worst case, even for the variety of rank-$1$ matrices. 

In this work, we propose and analyze an algorithm for solving this problem. \anote{5/5: should we skip saying that algorithm based on simultaneous decomposition+Nullstellensatz?} 
Surprisingly, despite the above hardness results we show that our algorithm solves this problem efficiently for ``typical" subspaces. Here, the subspace $\U \subseteq \field^n$ is chosen \emph{generically} of a certain dimension, potentially with some {generic} elements of the variety contained in it. 
Our main result is a guarantee that our algorithm recovers all the elements of $\U$ that lie in the variety, under some mild non-degeneracy assumptions on the variety. 
As corollaries, we obtain the following new results:
\begin{itemize}
    \item Polynomial time algorithms for several \emph{entangled subspaces} problems in quantum entanglement, including determining $r$-entanglement, complete entanglement, and genuine entanglement of a subspace. While all of these problems are NP-hard in the worst case, our algorithm solves them in polynomial time for generic subspaces of dimension up to a constant multiple of the maximum possible.
        \item Uniqueness results and polynomial time algorithmic guarantees for {generic} instances of a broad class of low-rank decomposition problems that go beyond tensor decompositions. Here, we recover a decomposition of the form $\sum_{i=1}^R v_i \otimes w_i$, where the $v_i$ are elements of the given variety $\X$. This implies new uniqueness results and genericity guarantees even in the special case of tensor decompositions.
\end{itemize}
\end{abstract}
\newpage
\tableofcontents
\newpage
\section{Introduction}\label{intro}
%
%
%

Consider an $n$-dimensional vector space $\V$ over a field $\field$ that is either $\real$ or $\complex$. An (algebraic) variety $\X \subset \V$ is {\em cut out} by a collection of polynomials $f_1,\dots, f_p$, i.e. it is given by the common zeroes 
$$
\X=\{x \in \V : f_1(x)=0, f_2(x)=0, \dots, f_p(x)=0\}.
$$

We study the problem of finding points in the intersection of the given algebraic variety $\X$ with a linear subspace $\U$. The subspace $\U$ is specified by some basis $\{u_1, \dots, u_{R}\} \subseteq \V$, while the variety $\X$ is specified by a set of polynomials that cut it out. We will focus on the general class of \textit{conic} varieties, which are those  that are closed under scalar multiplication.\footnote{A conic variety is the affine cone over a projective variety. These results can equivalently be formulated in terms of projective varieties.} Conic varieties are cut out by homogeneous polynomials, which can be chosen to all have the same degree $d$.

\begin{problem}\label{problem:main}
Given as input a subspace $\U \subseteq \V$ specified by a basis $\{u_1, \dots, u_R\}$, and an arbitrary conic variety $\X\subseteq \V$ cut out by homogeneous degree-$d$ polynomials $f_1, \dots, f_p$, can we either certify that ${\U \cap \X = \{0\}}$ or else find a non-zero point $v \in \U \cap \X$?
\end{problem}

The above question encompasses a natural class of algorithmic problems that vary with the different choices of the variety. Even the special case of \textit{determinantal varieties} i.e., varieties of matrices of bounded rank, has rich connections to central problems in diverse areas such as quantum information theory and tensor decompositions. The set of $n_1 \times n_2$ matrices of rank at most $1$ forms a determinantal variety cut out by homogeneous polynomials of degree $2$ (corresponding to the determinants of all $2 \times 2$ submatrices being $0$). More generally, the set of matrices of rank at most $r$ forms a determinantal  variety cut out by polynomials of degree $r+1$.
Problem~\ref{problem:main} has the following applications in the context of quantum entanglement and low-rank decompositions (even for the special case of determinantal varieties): 
\begin{itemize}
\item In a bipartite quantum system, an {\em entangled subspace} is a linear subspace $\U$ of matrices that contains no product state, i.e no rank-$1$ matrix. 
Entangled subspaces have applications to certifying entanglement of mixed states~\cite{Hor97,BDMSST99}, constructing entanglement witnesses~\cite{ATL11,CS14}, certifying that a set of vectors is an unextendible product basis \cite{BDMSST99,DMSST03}, and designing quantum error correcting codes \cite{GW07,HG20}. An important algorithmic question in this context is determining whether a given subspace is entangled  ~\cite{Par04,Bha06}.
This algorithmic problem is a special case of  Problem~\ref{problem:main}, and is already NP-hard in the worst case~\cite{buss1999computational}.
Measuring and certifying other notions of entanglement e.g., $r$-entanglement, complete entanglement, genuine entanglement  are also captured by Problem~\ref{problem:main} for different choices of varieties.
    \item A \textit{rank $R$ decomposition} of a tensor $T$ is an expression of $T$ as a sum of $R$ rank-$1$ tensors. The \textit{tensor rank} of $T$ is the smallest integer for which a decomposition of that rank exists for $T$. The algorithmic goal in tensor decompositions is to find a rank $R$ decomposition of a given tensor $T$ if it exists. While this problem is NP-hard in the worst-case~\cite{HL}, there exist polynomial time algorithms that work for a broad range of the rank $R$ tensors on {\em generic} instances of the problem (i.e. the algorithm is successful on all but a zero measure set of instances). 
   The key subroutine in a famous algorithm due to De Lathauwer, Cardoso, and Castaing ~\cite{de2006link, DLCC} tries to find all the rank-$1$ matrices in a certain generic subspace, and is an instantiation of Problem~\ref{problem:main}.
\end{itemize}

In light of the computational intractability of Problem~\ref{problem:main}, our goal is to design algorithms with polynomial time guarantees for ``typical'' or {\em generic} instances. 
It is well known that a  generic linear subspace $\U$ of sufficiently small dimension $R$ (depending on the \textit{Krull dimension} of $\X$) does not contain any elements of the conic variety $\X$. 
%
%
For example, in the case of $n_1 \times n_2$ dimensional matrices, a generic linear subspace of dimension $R_0 \leq (n_1-1)(n_2-1)$  does not contain any rank-$1$ matrix almost surely~\cite{harris2013algebraic,CMW08}. 
Hence, if we consider a generic ${R \le R_0}$-dimensional subspace that contains $s \le R$ generic rank-1 matrices, we can hope to recover all of these $s$ planted elements.

In this work, we propose and analyze a polynomial-time algorithm (which we call \textit{Algorithm 1}, see Section~\ref{sec:algorithm}) to recover all the elements of $\U$ that lie in the variety $\X$. In more details, on input a collection of homogeneous degree-$d$ polynomials cutting out $\X\subseteq \field^n$ and any basis for the linear subspace $\U \subseteq \field^n$, Algorithm 1 runs in $n^{O(d)}$ time and either outputs ``Fail," or else outputs a finite collection of elements of the intersection $\U \cap \X$, along with a $n^{O(d)}$-time certificate that these are the \textit{only} elements of $\U \cap \X$ (up to scale). 
To simplify the analysis, we ignore issues of numerical precision (formally, we prove polynomial time guarantees in the real model of computation, given access to a constant number of calls to an oracle to diagonalize polynomial-sized matrices\footnote{A $k \times k$ diagonalizable matrix can be diagonalized to precision $\epsilon$ in time $O(k^{\omega} \log^2(k/\epsilon))$, where $\omega$ is the exponent of matrix multiplication~\cite{9317903}. Our algorithm requires a constant number of diagonalizations to run the simultaneous diagonalization algorithm as a subroutine, which itself has been shown to be numerically stable under some natural conditions~\cite{GVX14, BCMV}.}). Our Algorithm~1 is based on Hilbert's Nullstellensatz and a ``lifted" version of the simultaneous diagonalization algorithm to the space of degree-$d$ polynomials. In the special case when $\X$ is the set of rank-one matrices, our algorithm captures the tensor decomposition method proposed in~\cite{de2006link}. 
Our main result (Theorem~\ref{thm:intro:intersection}) guarantees that Algorithm 1 is always correct (i.e., any output that is not ``Fail'' is guaranteed to be correct), and does not output ``Fail" almost surely when $\dim(\U)$ is small enough.\footnote{Note that assumptions on genericity and dimension of the subspace are necessary: If $\U$ is a worst-case input or $\dim(\U)$ is too large, then $\U \cap \X$ could have an infinite number of non-parallel elements.}

\subsection{Our results}
We will require two technical assumptions on the variety $\X$, which will be satisfied by many varieties of interest: We say $\X$ is \textit{irreducible} if it cannot be written as a union of smaller varieties, and we say that an irreducible variety $\X$ is {\em non-degenerate of order $\tilde{d}$}
if $\X$ has no equations in degree $\tilde{d}$. For example, $\X$ is non-degenerate of order 1 if $\spn(\X)=\V$.
We say that an element is \textit{generically chosen} if it is chosen from a Zariski open dense subset of the underlying instance space (this also commonly referred to as a~\textit{general} element). Proving that a property holds for a generically chosen element is a standard algebraic-geometric approach to showing that it holds almost surely over the underlying instance space; see Section~\ref{sec:MP:AG} for more details. 

The following result applies when the field $\field$ is either $\real$ or $\complex$. 
The notation $\X^{\times s} \times \V^{\times (R-s)}$ denotes the set of $R$-tuples of elements of $\V$, the first $s$ of which are chosen from $\X$.

%

\begin{theorem}\label{thm:intro:intersection}
Let $\X \subseteq \V = \field^n$ be an irreducible variety cut out by $p= \delta \binom{n+d-1}{d}$ linearly independent homogeneous degree-$d$ polynomials $f_1,\dots, f_p \in \field[x_1,\dots, x_n]_d$, for constants $d \geq 2$ and $\delta \in (0,1)$. Suppose furthermore that $\X$ is non-degenerate of order $d-1$. 
Then a linear subspace $\U \subseteq \V$ of dimension
\begin{equation}\label{eq:R_cd}
R \le \frac{1}{d!} \cdot  \delta   (n+d-1), 
\end{equation}
spanned by a generically chosen element of $\X^{\times s} \times \V^{\times (R-s)}$ for some $s \in \{0,1,\dots, R\}$, contains \emph{only} $s$ elements in its intersection with $\X$ (up to scalar multiples), and on input any basis of $\U$ our Algorithm 1 correctly outputs these elements in $n^{O(d)}$ time. When $s=0$, Algorithm 1 certifies that $\U \cap \X = \{0\}$ in $n^{O(d)}$ time.
\end{theorem}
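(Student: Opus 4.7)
The plan is to argue in two stages: first, that under the stated genericity hypotheses $\U \cap \X$ consists of precisely the $s$ planted lines $\field v_1, \ldots, \field v_s$, and second, that Algorithm~1 recovers them together with its certificate in $n^{O(d)}$ time. Let $\F := \spn(f_1, \ldots, f_p)$ and write $\rho \colon \F \to \field[\U]_d$ for the restriction map induced by any basis of $\U$. The non-degeneracy-of-order-$(d-1)$ hypothesis implies that $\F = I(\X)_d$, the degree-$d$ component of the full vanishing ideal of $\X$, so by the homogeneous Nullstellensatz $\U \cap \X$ is exactly the common vanishing locus on $\U$ of $\F_\U := \rho(\F) \subseteq \field[\U]_d$. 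The central identity I would then prove is
\eq{}{
\F_\U \;=\; I_\U(v_1, \ldots, v_s)_d,
}
where the right-hand side is the degree-$d$ piece of the vanishing ideal in $\field[\U]$ of the $s$ planted lines. The inclusion $\subseteq$ is automatic since each $f_j$ vanishes on $\X \supseteq \{v_1, \ldots, v_s\}$, so equality reduces to a dimension count: the right-hand side has dimension $\binom{R+d-1}{d} - s$ for generic $v_i$, and the hypothesis $R \le \tfrac{1}{d!}\delta(n+d-1)$ together with $d \ge 2$ and $\delta \in (0,1)$ yields (after a short estimate)
\eq{}{
\binom{R+d-1}{d} \;\le\; \delta\binom{n+d-1}{d} \;=\; p,
}
which is enough room for $\rho$ to attain the maximal possible rank subject to the constraint that its image lies in $I_\U(v_1, \ldots, v_s)_d$. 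Once this identity holds, the common vanishing locus of $\F_\U$ on $\U$ is forced to equal $\{\field v_1, \ldots, \field v_s\}$, proving the uniqueness of $\U \cap \X$.

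For the algorithmic part, given $\F_\U$ as a computable subspace of $\field[\U]_d$, Algorithm~1 reconstructs the $v_i$ by a ``lifted'' simultaneous diagonalization. First, I would recover the next-lower degree piece of $I := I_\U(v_1, \ldots, v_s)$ via the colon-ideal operation
\eq{}{
I_{d-1} \;=\; \bigl\{\, g \in \field[\U]_{d-1} \;:\; g \cdot \ell \in \F_\U \text{ for every } \ell \in \field[\U]_1 \,\bigr\},
}
which for generic $v_i$ agrees with the expected Hilbert-function value $\binom{R+d-2}{d-1} - s$. For two generic linear forms $\ell_1, \ell_2 \in \field[\U]_1$, the multiplication maps $M_{\ell_j} \colon (\field[\U]/I)_{d-1} \to (\field[\U]/I)_d$ between $s$-dimensional quotient spaces are invertible and simultaneously diagonalize to $\mathrm{diag}(\ell_j(v_1), \ldots, \ell_j(v_s))$ in the common eigenbasis given by evaluation at the $v_i$. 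The composition $M_{\ell_1}^{-1} M_{\ell_2}$ then has the distinct eigenvalues $\ell_2(v_i)/\ell_1(v_i)$ almost surely, and its eigenvectors decode each $v_i$ up to scalar; running a joint simultaneous-diagonalization routine on several such operators gives robustness and covers the case $\field = \real$. In the special case $s = 0$ the algorithm's certificate reduces to the equality $\F_\U = \field[\U]_d$, which forces $\{0\}$ to be the only common zero.

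The main obstacle is establishing that $\rho$ actually attains the rank claimed above, i.e., that no unexpected linear dependence among the restricted forms $f_j|_\U$ arises from the fact that $\U$ is constrained to contain the $s$ planted points of $\X$. I expect to handle this by an incidence-variety argument: one parametrizes pairs (subspace $\U$ of the prescribed form, candidate ``extra'' linear relation in $\F \cap I(\U)_d$ beyond the expected count) and shows, using the irreducibility and non-degeneracy-of-order-$(d-1)$ hypotheses on $\X$, that this parameter space projects with strictly positive-codimensional image into $\X^{\times s} \times \V^{\times (R-s)}$. Secondarily, one must verify that the colon-ideal recovery of $I_{d-1}$ returns the correct Hilbert function value and that the simultaneous-diagonalization step encounters distinct eigenvalues on a Zariski-dense subset of the parameter space; both are standard genericity checks on top of the central rank claim.
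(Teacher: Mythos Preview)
Your reduction is dual to the paper's: showing $\F_\U = I_\U(v_1,\dots,v_s)_d$ is exactly equivalent to the paper's claim that $S^d(\U)\cap I_d^\perp=\spn\{v_1^{\otimes d},\dots,v_s^{\otimes d}\}$. But you have not actually proved this; you have only observed the trivial inclusion, checked a necessary dimension inequality $\binom{R+d-1}{d}\le p$, and then deferred the entire content to an unspecified ``incidence-variety argument.'' That deferred step \emph{is} the theorem. A naive incidence-variety dimension count does not obviously work here, because the subspace $\U$ is not generic in the Grassmannian---it is constrained to pass through $s$ points of $\X$---and one must rule out that this constraint forces extra elements of $\F$ to vanish on $\U$. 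In particular, your sketch gives no indication of where the non-degeneracy-of-order-$(d-1)$ hypothesis enters, yet without it the conclusion is false (take $\X$ a linear subspace, cut out in degree $2$: then $\U\cap\X$ contains $\spn\{v_1,\dots,v_s\}$, not just the $s$ lines).

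The paper handles this step by an entirely different mechanism (Theorem~\ref{thm:generic_cert:simplified} and Proposition~\ref{prop:simplified}): it fixes a carefully chosen total order on $[R]^{\vee d}$, and proves by induction on $d$ that each symmetrized tensor $v_{i_1}\vee\cdots\vee v_{i_d}$ with $(i_1,\dots,i_d)\notin\Delta_s$ lies outside the span of the strictly smaller ones plus $\K=I_d^\perp$. The inductive step peels off the repeated index $i_1=\dots=i_\ell$ by contracting against $v_{i_1}^{\otimes\ell}$, and the key input is Lemma~\ref{lemma:generic_hook:new}, which bounds from below the dimension of a generic contraction $v^{\otimes\ell}\hook\U$ for $v\in\X$. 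That lemma is exactly where non-degeneracy of order $d-1$ is used: it guarantees enough elements $v\in\X$ so that $\{v^{\otimes\ell}\}$ spans $S^\ell(\V)$. None of this structure is visible in your sketch.

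Two smaller points. First, your assertion that non-degeneracy of order $d-1$ forces $\F=I(\X)_d$ is unjustified (it only says $I(\X)_{d-1}=0$) and in any case unnecessary: the paper works with $I_d:=\F$ throughout, not with $I(\X)_d$. Second, your algorithmic description (colon ideals and multiplication maps on $(\field[\U]/I)_{d-1}\to(\field[\U]/I)_d$) is a reasonable eigenvalue-method variant, but it is not Algorithm~1 as defined in the paper, which computes a basis of $S^d(\U)\cap I_d^\perp$ and runs simultaneous diagonalization on the associated $3$-tensor in $\field^s\otimes\V\otimes\V^{\otimes(d-1)}$; since the theorem is about that specific algorithm, you need to argue correctness for it.
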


We remark that in the above theorem, the choice of the variety $\X$ is arbitrary (subject to the irreducibility and non-degeneracy conditions), while the subspace $\U$ is chosen generically. The theorem shows that when $\delta = \Omega(1)$ (this is the parameter setting for many varieties of interest), we get genericity guarantees for $R$ going up to a constant fraction of the maximum possible dimension $n$.  
As stated in the theorem, our algorithm runs in polynomial time (in the dimension $n$) as long as $d$ is fixed. Note that our algorithm assumes knowledge of the coefficients of the $p$ homogenous degree-$d$ polynomials $f_1, \dots, f_p$, which in itself requires $p \binom{n+d-1}{d}$ time.

As alluded to earlier, it is classically well known that a linear subspace $\U\subseteq \V$ of dimension $R\leq \setft{codim}(\X)$ spanned by a generically chosen point in $\X^{\times s} \times \V^{\times (R-s)}$ contains only $s$ elements in its intersection with $\X$ (up to scalar multiples) when $\X$ is irreducible and non-degenerate of order 1~\cite[Theorem 4.6.14]{flenner1999joins}, \cite[Definition 11.2]{harris2013algebraic}. 
However, for a particular subspace $\U$, it is NP-hard in general to \textit{find} these elements of the intersection and to \textit{certify} that they are the only ones \cite{buss1999computational}. Despite this hardness result, our Algorithm 1 runs in polynomial time, and either outputs ``Fail," or else finds elements of the intersection and certifies that they are the only ones.
Theorem~\ref{thm:intro:intersection} guarantees that our algorithm will almost surely output the latter, provided that $\U \subseteq \V$ has dimension $R$ upper bounded by~\eqref{eq:R_cd}.\footnote{  
We remark that the righthand side of~\eqref{eq:R_cd} can be verified to be always less than or equal to $\setft{codim}(\X)$.} 
We call this a \textit{genericity guarantee} for Algorithm~1.

It is natural to ask if the irreducibility and non-degeneracy conditions can be removed. The \emph{irreducibility} condition can indeed be removed, by assuming the non-degeneracy condition holds for every irreducible component of $\X$.
The \emph{non-degeneracy} assumption can also removed if $s=0$ (i.e. the last sentence of the theorem holds without any non-degeneracy assumption nor irreducibility assumption on $\X$). See Corollary~\ref{cor:generic_simplified}. Some form of non-degeneracy assumption on $\X$ is necessary for general $s$: For example, if $\X$ is a linear subspace, then $\X$ can be cut out by degree-$2$ polynomials, but the intersection $\U \cap \X$ contains the entire span of $\{v_1,\dots, v_s\}$, so for $s \geq 2$ we cannot hope to recover $v_1,\dots, v_s$. (See also the discussion after Theorem~\ref{thm:decomp}).
%
%
Moreover, many commonly studied varieties satisfy this non-degeneracy assumption, as we will see below.

Consider the specific case of the variety of rank-$1$ matrices $\X_1=\{ M \in \field^{n_1 \times n_2}: \text{rank}(M)\leq 1 \}$. This is an irreducible variety that is cut out by $p=\binom{n_1}{2} \binom{n_2}{2}$ homogenous polynomials of degree $d=2$. 
Furthermore $\X_1$ is non-degenerate of order 1, i.e. $\text{span}(\X_1) = \field^{n_1\times n_2}$ . Hence we get the following immediate corollary, which already implies new results for quantum entanglement and tensor decompositions:


\begin{cor}\label{corr:intro:rank1}
A linear subspace $\U \subseteq \V=\field^{n_1 \times n_2}$ of dimension
\begin{equation}
R \le \frac{\binom{n_1}{2}  \binom{n_2}{2}}{2\binom{n_1 n_2+1}{2}} \cdot (n_1 n_2+1)= \frac{1}{4} (n_1-1) (n_2-1),
\end{equation}
spanned by a generically chosen element of $\X_1^s \times \V^{\times (R-s)}$ for some $s \in \{0,1,\dots, R\}$, contains only $s$ elements in its intersection with $\X_1$ (up to scalar multiples), and our Algorithm 1 correctly outputs these elements in $(n_1n_2)^{O(1)}$ time. When $s=0$, Algorithm 1 certifies that $\U \cap \X_1=\{0\}$ in $(n_1n_2)^{O(1)}$ time.
\end{cor}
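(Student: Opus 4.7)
The plan is to derive this corollary as a direct instantiation of Theorem~\ref{thm:intro:intersection} for the variety $\X_1$ of rank-$\leq 1$ matrices in $\V = \field^{n_1 \times n_2}$, so the work reduces to verifying the three hypotheses of that theorem and then plugging in the parameters.

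First, I would confirm that $\X_1$ is an irreducible conic variety cut out by homogeneous polynomials of degree $d = 2$. Irreducibility follows by writing $\X_1$ as the image of the (polynomial) Segre-style map $\field^{n_1} \times \field^{n_2} \to \V$ sending $(v,w) \mapsto v w^{\t}$: the source is irreducible, the image is already closed (it equals the full set of rank-$\leq 1$ matrices), and the image of an irreducible set under a polynomial map with closed image is irreducible. Its defining equations are the $2 \times 2$ minors of a generic matrix, of which there are exactly $\binom{n_1}{2}\binom{n_2}{2}$, and linear independence is easy to verify: the minor indexed by rows $\{i,k\}$ and columns $\{j,l\}$ is the unique one among them whose monomial support lies inside the row-index pair $\{i,k\}$ and the column-index pair $\{j,l\}$.

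Next, I would verify the non-degeneracy condition of order $d - 1 = 1$, i.e., $\spn(\X_1) = \V$. This is immediate because the elementary rank-$1$ matrices $e_i e_j^{\t}$ form a basis of $\V$. With $n = n_1 n_2$, $d = 2$, and $p = \binom{n_1}{2}\binom{n_2}{2}$, the parameter $\delta$ in Theorem~\ref{thm:intro:intersection} becomes $\delta = p / \binom{n + d - 1}{d} = \binom{n_1}{2}\binom{n_2}{2} / \binom{n_1 n_2 + 1}{2}$, which lies in $(0,1)$ as required.

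Finally, I would substitute these parameters into the dimension bound~\eqref{eq:R_cd}. A short calculation using $\binom{n_1 n_2 + 1}{2} = \tfrac{1}{2}(n_1 n_2 + 1)(n_1 n_2)$ and $\binom{n_i}{2} = \tfrac{1}{2} n_i (n_i - 1)$ collapses $\tfrac{1}{d!}\, \delta\, (n + d - 1) = \tfrac{1}{2}\, \delta\, (n_1 n_2 + 1)$ to $\tfrac{1}{4}(n_1-1)(n_2-1)$, matching the stated bound. The conclusion, together with the $(n_1 n_2)^{O(1)}$ runtime (since $d = 2$ is constant), then follows directly from Theorem~\ref{thm:intro:intersection}. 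There is no substantive obstacle in this corollary; the only step requiring mild care is the linear independence of the $2 \times 2$ minors, which is dispatched by the monomial-support argument above.
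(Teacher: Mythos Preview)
Your proposal is correct and follows essentially the same approach as the paper: the paper states this corollary as an immediate consequence of Theorem~\ref{thm:intro:intersection} after noting that $\X_1$ is irreducible, cut out by the $\binom{n_1}{2}\binom{n_2}{2}$ many $2\times 2$ minors, and non-degenerate of order $1$. You supply a bit more detail (the Segre-map argument for irreducibility and the monomial-support argument for linear independence of the minors), but the substance is identical.
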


More generally, the set of matrices of rank at most $r$, $\X_r=\{ M \in \field^{n_1 \times n_2}: \text{rank}(M)\leq r \}$, forms an irreducible variety cut out by $p=\binom{n_1}{r+1} \binom{n_2}{r+1}$ homogenous polynomials of degree $d=r+1$, and is non-degenerate of order $r$. We thus obtain the following consequence of Theorem~\ref{thm:intro:intersection}:

\begin{cor}\label{corr:intro:rankr}
Let $r$ be a fixed positive integer, and let $n_1,n_2 > r$ be integers. Then for a linear subspace $\U \subseteq \V=\field^{n_1 \times n_2}$ of dimension
\begin{equation}
R \le \frac{\binom{n_1}{r+1}  \binom{n_2}{r+1}}{(r+1)! \binom{n_1 n_2+r}{r+1}} \cdot (n_1 n_2+r),~~ \Big(\text{note that }\frac{\binom{n_1}{r+1}  \binom{n_2}{r+1} (n_1 n_2+r)}{(r+1)!\binom{n_1 n_2+r}{r+1}}=\Omega_r(n_1 n_2) \Big),
\end{equation}
spanned by a generically chosen element of $\X_r^s \times \V^{\times (R-s)}$ for some $s \in \{0,1,\dots, R\}$, contains only $s$ elements in its intersection with $\X_r$ (up to scalar multiples), and our Algorithm 1 correctly outputs these elements in $(n_1n_2)^{O(1)}$ time. When $s=0$, Algorithm 1 certifies that $\U \cap \X_r=\{0\}$ in $(n_1n_2)^{O(1)}$ time.
\end{cor}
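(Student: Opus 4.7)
The plan is to verify that the determinantal variety $\X_r$ satisfies all the hypotheses of Theorem~\ref{thm:intro:intersection}, and then apply that theorem with the appropriate choice of parameters. First I would collect the relevant classical facts about $\X_r$: (i) it is irreducible, since it is the Zariski closure of the image of the dominant polynomial map $\field^{n_1 \times r} \times \field^{r \times n_2} \to \field^{n_1 \times n_2}$ sending $(A,B)\mapsto AB$; (ii) its ideal $I(\X_r)$ is generated by the $p = \binom{n_1}{r+1}\binom{n_2}{r+1}$ many $(r+1)\times(r+1)$ minors, which are homogeneous of degree $d = r+1$; (iii) these minors are linearly independent in $\field[x_{ij}]_{r+1}$, which follows from a leading-monomial argument under a diagonal term order, as each minor contributes a unique squarefree monomial indexed by its row/column selection; and (iv) $\X_r$ is non-degenerate of order $r$, because $I(\X_r)$ is generated in degree exactly $r+1$ (a classical fact about determinantal ideals) and therefore contains no nonzero polynomial of degree at most $r$.

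With these facts in hand, I would apply Theorem~\ref{thm:intro:intersection} with $n = n_1 n_2$, $d = r+1$, and $\delta \defeq p / \binom{n_1 n_2 + r}{r+1}$. A direct comparison of the two products of binomials shows $\delta \in (0,1)$ for all $n_1, n_2 > r$; in fact $\delta \to 1/(r+1)!$ as $n_1, n_2 \to \infty$, so the hypothesis $\delta \in (0,1)$ of the theorem is satisfied. Substituting these choices into the bound $R \leq \tfrac{1}{d!}\,\delta(n+d-1)$ of \eqref{eq:R_cd} yields exactly the inequality stated in the corollary, and the uniqueness, algorithmic correctness, and $(n_1 n_2)^{O(1)}$ runtime conclusions (for fixed $r$) transfer verbatim from Theorem~\ref{thm:intro:intersection}.

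For the asymptotic estimate $\binom{n_1}{r+1}\binom{n_2}{r+1}(n_1 n_2 + r) / \bigl((r+1)!\binom{n_1 n_2 + r}{r+1}\bigr) = \Omega_r(n_1 n_2)$ noted parenthetically in the statement, I would use the elementary bounds $\binom{n_i}{r+1} = \Theta_r(n_i^{r+1})$ and $\binom{n_1 n_2 + r}{r+1} = \Theta_r((n_1 n_2)^{r+1})$ to compute that the ratio behaves like $\tfrac{1}{(r+1)!^2}(n_1 n_2)$ to leading order in $n_1, n_2$, confirming the claim.

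The proof has no substantial technical obstacle: it is essentially a verification of the structural hypotheses on $\X_r$ together with routine binomial arithmetic. The only point requiring a little care is item (iv), the non-degeneracy of order $r$, which depends on the nontrivial commutative-algebra fact that the ideal of $(r+1)\times(r+1)$ minors is saturated and generated in degree $r+1$; I would invoke this from the standard literature on determinantal varieties rather than reprove it. Once (i)--(iv) are established, the corollary is an immediate specialization of Theorem~\ref{thm:intro:intersection}.
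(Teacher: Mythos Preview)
Your proposal is correct and follows essentially the same approach as the paper: the paper simply states (with references, in Section~\ref{sec:XR}) that $\X_r$ is irreducible, cut out by the $\binom{n_1}{r+1}\binom{n_2}{r+1}$ minors of degree $r+1$, and non-degenerate of order $r$, and then invokes Theorem~\ref{thm:intro:intersection}. Your additional detail on the irreducibility (via the parametrization $(A,B)\mapsto AB$), the linear independence of the minors, and the asymptotic computation is all sound and matches what the paper either asserts or cites.
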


In the remainder of this introduction, we describe applications of our algorithm to quantum entanglement and low-rank decomposition problems over varieties.

%
%
%
%
%

\subsection{Entangled subspaces} \label{sec:intro:entangled}

In the context of quantum information theory, there are various choices of varieties $\X$ for which it is useful to determine whether or not a given linear subspace $\U$ intersects $\X$. For example, if $\field = \complex$ and $\V = \field^{n_1} \otimes \field^{n_2} \cong \field^{n_1 \times n_2}$ then the unit vectors in $\V$ are called \emph{pure quantum states}. The states in the variety $\X_1=\{M \in \V : \rank(M) \leq  1\}$ are called \emph{separable states}, while those in $\V\setminus \X_1$ are said to be \emph{entangled}. Entangled states are of central importance in this area, as they are required as a starting point for many quantum algorithms and protocols, like quantum teleportation \cite{BBCJPW93} and superdense coding \cite{BW92}. More generally, the states in the determinantal variety $\X_r=\{M \in \V : \rank(M)\leq r\}$ are said to have \emph{Schmidt rank} at most $r$, and this notion of rank is regarded as a rough measure of \emph{how} entangled the quantum state is \cite{NC00}.

A linear subspace $\U \subseteq \V$ in which every pure state is highly entangled (i.e., has Schmidt rank strictly larger than $r$) is called \emph{$r$-entangled} (or just \emph{entangled} if $r = 1$). Such subspaces have found an abundance of applications in quantum entanglement theory and quantum error correction \cite{Hor97,BDMSST99,ATL11,CS14,HM10}. Determining whether or not a subspace $\U$ is $r$-entangled is exactly Problem~\ref{problem:main} in the case of the variety $\X = \X_r$, and this problem is known to be NP-hard in the worst case, even for $r=1$~\cite{buss1999computational}. To our knowledge, the best known algorithm requires a certain $\epsilon$-promise and takes $\exp(\tilde{O}({\sqrt{n_1}/\epsilon}))$ time in the worst case when $r=1$ and $n_1=n_2$~\cite{barak2017quantum} (see Section~\ref{sec:entanglement} for more details). Existing algorithms for solving similar problems either lack complexity-theoretic guarantees or only work in limited situations, such as when the subspace's dimension is smaller than $\min\{n_1,n_2\}$ \cite{LPS06,GR08,bousse2018linear,DRA21}. Surprisingly, by Corollaries~\ref{corr:intro:rank1} and~\ref{corr:intro:rankr}, our algorithm solves this problem for generic instances of $\U$ (and does not require the $\epsilon$-promise), as long as $\dim(\U)$ is less than a constant fraction of the total dimension $n_1 n_2$. For example, when $r=1$ we obtain the following, which is just the $s=0, s=1$ cases of Corollary~\ref{corr:intro:rank1}:

\begin{cor}\label{corr:intro:cert_rank1}
Suppose $\field = \complex$ and let $\U \subseteq \field^{n_1} \otimes \field^{n_2}$ be a generically chosen linear subspace of dimension
\begin{equation}
R \le \frac{\binom{n_1}{2}  \binom{n_2}{2}}{2\binom{n_1 n_2+1}{2}} \cdot (n_1 n_2+1)= \frac{1}{4} (n_1-1) (n_2-1)
\end{equation}
with possibly a generically chosen planted separable state. Then, in $(n_1n_2)^{O(1)}$ time, our algorithm either certifies that $\U$ is entangled or else produces the planted separable state in $\U$ and a certificate that this is the only separable state in $\U$.
\end{cor}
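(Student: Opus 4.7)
The plan is to derive this corollary as a direct specialization of Corollary~\ref{corr:intro:rank1} to the cases $s=0$ (no planted separable state) and $s=1$ (one planted separable state), after identifying the variety of pure separable states with $\X_1$ via the standard isomorphism $\field^{n_1} \otimes \field^{n_2} \cong \field^{n_1 \times n_2}$.

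First, I would recall that under this isomorphism, a simple tensor $x \otimes y$ corresponds to the rank-1 matrix $xy^\t$, so the affine cone over the set of pure separable states is exactly the determinantal variety $\X_1 = \{M \in \field^{n_1\times n_2}: \rank(M) \le 1\}$. Consequently, a linear subspace $\U \subseteq \field^{n_1}\otimes\field^{n_2}$ is entangled precisely when $\U \cap \X_1 = \{0\}$, and a separable state in $\U$ (up to scale) is exactly a nonzero rank-1 element of $\U \cap \X_1$. This reduction is purely notational, but it is the step that lets us invoke the matrix-variety results directly.

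Next, I would apply Corollary~\ref{corr:intro:rank1} twice. In the $s=0$ case, applied to a generically chosen $R$-dimensional subspace $\U$ with $R \le \tfrac{1}{4}(n_1-1)(n_2-1)$, Algorithm~1 runs in $(n_1 n_2)^{O(1)}$ time and certifies $\U \cap \X_1 = \{0\}$, which translates into a certificate that $\U$ is entangled. In the $s=1$ case, applied to a subspace $\U$ spanned by a generically chosen tuple in $\X_1 \times \V^{\times(R-1)}$ (which encodes the ``planted separable state'' scenario), Corollary~\ref{corr:intro:rank1} guarantees that $\U \cap \X_1$ consists of a single element up to scale and that Algorithm~1 both outputs this element and certifies that it is the only rank-1 element of $\U$.

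Combining the two instantiations above yields the claimed dichotomy. There is no substantive obstacle to address beyond the translation between the tensor-product picture and the matrix picture, together with the observation that the hypotheses of Corollary~\ref{corr:intro:rank1} are met for $\X_1$ (irreducibility, non-degeneracy of order $1$ since $\spn(\X_1)=\field^{n_1\times n_2}$, and linear independence of the $\binom{n_1}{2}\binom{n_2}{2}$ degree-$2$ minors). All of the substantive content—the genericity analysis and the Nullstellensatz-based certification procedure executed by Algorithm~1—has already been carried out in Theorem~\ref{thm:intro:intersection} and its rank-$1$ specialization.
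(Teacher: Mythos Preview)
Your proposal is correct and mirrors the paper's own argument, which simply states that this corollary is the $s=0$ and $s=1$ instantiations of Corollary~\ref{corr:intro:rank1}. The extra translation you spell out (identifying separable states with $\X_1$ via $\field^{n_1}\otimes\field^{n_2}\cong\field^{n_1\times n_2}$) is exactly the intended reduction, and no further argument is needed.
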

Note that, even if $\U$ contains a separable state, generically this will be the \textit{only} separable state contained in $\U$, and our algorithm will certify this efficiently. In fact, even if $\U$ contains several (but less than $\dim(\U)$) generic separable states, our algorithm finds them all and certifies that there are no others. This extends the range of applications of our algorithm beyond situations where it is useful to know that a subspace is entangled, such as in the construction of an \textit{unextendible product basis} \cite{BDMSST99}, to situations where it is useful to know that that a subspace does not contain enough separable states to span it, such as in the construction of an \textit{uncompleteable product basis} \cite{DMSST03}.

More generally, we use Theorem~\ref{thm:intro:intersection} to obtain similar guarantees for our algorithm to determine whether a subspace exhibits other notions of entanglement, which corresponds to answering Problem~\ref{problem:main} for other varieties $\X$:
\begin{itemize}
\item When $\V=\field^{n_1}\otimes \dots \otimes \field^{n_m}$ and $\X=\X_{\setft{Sep}} \subseteq \field^{n_1}\otimes \dots \otimes \field^{n_m}$ is the set of \textit{separable tensors} (tensors of the form $v_1\otimes v_2 \otimes \dots \otimes v_m$), our algorithm determines in $O(n_1\cdots n_m)$ time whether $\U \cap \X_{\setft{Sep}}=\{0\}$ (i.e., whether $\U$ is \textit{completely entangled}) for generically chosen subspaces $\U$ of dimension up to a constant multiple of the total dimension $n_1 n_2 \cdots n_m$.
\item When $\X=\X_B$ is the set of \textit{biseparable tensors} (tensors which are rank $1$ with respect to one of the $2^{m-1}$ different ways to view a tensor $T \in \field^{n_1}\otimes \dots \otimes \field^{n_m}$ as a matrix by grouping factors), our algorithm determines in $O(2^m n_1\cdots n_m)$ time whether $\U \cap \X_{\setft{Sep}}=\{0\}$ (i.e., whether $\U$ is \textit{genuinely entangled}) for generically chosen subspaces $\U$ of dimension up to a constant multiple of the total dimension $n_1 n_2 \cdots n_m$.
\item As a final application, which does not necessarily directly apply to studying quantum entanglement, we use our algorithm to determine whether a subspace $\U$ intersects the variety $\X_S$ of tensors of \textit{slice rank 1} (see Section~\ref{sec:XR}). In this case, our algorithm determines in $O(m n_1\cdots n_m)$ time whether $\U \cap \X_{S}=\{0\}$ for generically chosen subspaces $\U$ of dimension up to a constant multiple of the total dimension $n_1 n_2 \cdots n_m$. The slice rank has recently arisen as a useful tool for studying basic questions in computer science such as the capset and sunflower problems~\cite{petrov2016combinatorial,kleinberg2016growth,blasiak2016cap,naslund2017upper,fox2017tight}.
\end{itemize}

Our algorithm generalizes a recent algorithm studied in~\cite{JLV22} for \textit{certifying} entanglement in a subspace, in two ways: First, our algorithm can not only certify that a subspace trivially intersects $\X$, but it can also produce elements of $\U \cap \X$ (if they exist) and prove that these are the only elements of $\U\cap \X$ in polynomial time. Second, our algorithm has provable genericity guarantees for arbitrary conic varieties $\X$ that satisfy the non-degeneracy assumption.

\subsection{Low-rank decompositions over varieties} \label{sec:intro:decomp}

Low-rank decompositions of matrices and tensors form a powerful algorithmic toolkit that are used in data analysis, machine learning and high-dimensional statistics. Consider a general decomposition problem, 
where we are given a tensor $T$ that has a rank-$R$ decomposition of the form
\begin{equation} \label{eq:intro:lowrank}
    T=\sum_{i=1}^{R} v_i \otimes w_i,
\end{equation}   
where $v_1, \dots, v_R$ lie in a variety $\X \subseteq \V$ and $w_1, \dots, w_R$ are arbitrary vectors in $\W$; here $\V$ and $\W$ are vector spaces over a field $\field$ (either $\real$ or $\complex$). 
The goal is to recover a rank-$R$ decomposition given $T$, and when possible recover the above decomposition. 
These $(\X, \W)$\textit{-decompositions}, also known as \textit{simultaneous $\X$-decompositions},
specialize to other well-studied decomposition problems such as block decompositions (see Sections~\ref{sec:tensor:related} and \ref{sec:tensors}).  
When $\X$ is the entire space $\V$, these are standard matrix decompositions. 
When $\X$ is the variety corresponding to rank-$1$ matrices (or more generally, rank-$1$ tensors), then this leads to the {\em tensor decomposition problem} where the decomposition has the form $\sum_{i=1}^R y_i \otimes z_i \otimes w_i$.\footnote{One can also get symmetric decompositions of the form $\sum_i^r u_i^{\otimes 3}$ (by restricting $y_i, z_i$ to be equal, and setting $w_i$ appropriately).} More generally, this gives a rich class of higher order decomposition problems depending on the choice of the variety. 

A remarkable property of low-rank tensor decompositions is that their minimum rank decompositions are often unique up to trivial scaling and relabeling of terms.  This is in sharp contrast to matrix decompositions, which are not unique for any rank $R \ge 2$.\footnote{For any matrix $M$ with a rank $R \ge 2$ decomposition $M=\sum_{i=1}^R v_i \otimes w_i$, there exist several other rank $R$ decompositions $\sum_i^R v'_i \otimes w'_i$ where $v'_i=O v_i$ and $w'_i = O w_i$  for any matrix $O$ with $O O^T=I_{R}$ (this is called the rotation problem).} 
The first uniqueness result for tensor decompositions was due to Harshman~\cite{harshman1970foundations} (who in turn credits it to Jennrich) --- 
if an $n \times n \times n$ tensor $T$ has a decomposition 
$T=\sum_{i=1}^R y_i \otimes z_i \otimes w_i$ for $R \le n$, then for generic choices of $\{y_i, z_i, w_i\}$ this is the {\em unique decomposition} of rank $R$ up to permuting the terms. 
Moreover, while computing the minimum rank decomposition is NP-hard in the worst-case~\cite{Has90, HL}, under the same genericity conditions as above there exists a polynomial time algorithm that recovers the decomposition~\cite{Har72, LRA93}. A rich body of subsequent work 
 gives stronger uniqueness results and algorithms  for tensor decompositions~\cite{kruskal1977three, CO12, de2006link,DLCC}.   
%
%
%
These efficient algorithms and uniqueness results for tensors are powerful algorithmic tools that have found numerous applications including efficient polynomial time algorithms for parameter estimation of latent variable models like mixtures of Gaussians, hidden Markov models, and even for learning shallow neural networks; see \cite{Moitrabook, Anandkumarbook, Vij20} for more on this literature. This prompts the following question:  

\begin{question}\label{qn:decomp}
When can we design efficient algorithms that achieve unique recovery for low-rank decomposition problems beyond tensor decompositions?
\end{question}

In answer to this question, we prove that one can establish uniqueness and efficiently recover decompositions of the form~\eqref{eq:intro:lowrank} for any irreducible conic variety satisfying the non-degeneracy assumption introduced above:

\begin{theorem}[Uniqueness and efficient algorithm for decompositions]\label{thm:decomp}
Let $\X \subseteq \V=\field^n$ be an irreducible conic variety cut out by $p=\delta \binom{n+d-1}{d}$ linearly independent homogeneous degree-$d$ polynomials for constants $d \ge 2$ and $\delta\in(0,1)$. Suppose furthermore that $\X$ is non-degenerate of order $d-1$. Then there is an $n^{O(d)}$-time algorithm that, on input a generically chosen tensor $T \in \V \otimes \W$
of $(\X,\W)$-rank 
\begin{equation}\label{eq:rank_bound}
R \leq \min\Big\{ \frac{1}{d!} \cdot \delta (n+d-1) , \dim(\W) \Big\},
\end{equation}
outputs an $(\X,\W)$-rank decomposition of $T$ and certifies that this is the unique $(\X,\W)$-rank decomposition of $T$.
\end{theorem}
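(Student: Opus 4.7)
The plan is to reduce Theorem~\ref{thm:decomp} to Theorem~\ref{thm:intro:intersection} by extracting a $R$-dimensional subspace $\U \subseteq \V$ from $T$ that is spanned by $R$ generic elements of $\X$, and then applying Algorithm 1 to $\U$. Concretely, a generic $T \in \V \otimes \W$ of $(\X,\W)$-rank $R$ can be written as $T = \sum_{i=1}^{R} v_i \otimes w_i$ where $v_1, \dots, v_R$ are generic in $\X$ and $w_1, \dots, w_R$ are generic in $\W$; this follows from a standard dominance argument applied to the map $\X^{\times R} \times \W^{\times R} \to \V \otimes \W$ sending $(v_i, w_i)$ to $\sum_i v_i \otimes w_i$, since the image is exactly the variety of $(\X,\W)$-rank $\leq R$ tensors. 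Because $R \leq \dim(\W)$, the vectors $w_i$ are linearly independent generically; because $R \leq \setft{codim}(\X)$ (ensured by~\eqref{eq:rank_bound} together with the observation after Theorem~\ref{thm:intro:intersection}), the vectors $v_i$ are linearly independent generically as well. Viewing $T$ as a linear map $\W^* \to \V$, we therefore have $\U := \im(T) = \spn\{v_1, \dots, v_R\}$, an $R$-dimensional subspace.

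Next, I would apply Theorem~\ref{thm:intro:intersection} with $s = R$ to the subspace $\U$. The hypothesis is met: $\U$ is spanned by a generically chosen point of $\X^{\times R}$, $\X$ is irreducible and non-degenerate of order $d-1$, and $R \leq \tfrac{1}{d!}\delta(n+d-1)$. The theorem guarantees that $\U \cap \X$ consists of exactly $R$ elements up to scalar multiples, namely $v_1, \dots, v_R$, and that Algorithm 1 on input any basis of $\U$ outputs them in $n^{O(d)}$ time, together with a certificate that these are the only elements of $\U \cap \X$. A basis of $\U$ is computable from $T$ in polynomial time by computing the column span of the matricization of $T$. Once the $v_i$ are recovered (up to scaling), the coefficient vectors $w_i$ are the unique solution of the linear system $T = \sum_i v_i \otimes w_i$ in $\W$, which is well-posed since the $v_i$ are linearly independent; this recovers a decomposition of $T$ in $n^{O(d)}$ time.

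For uniqueness, suppose $T = \sum_{i=1}^{R'} v'_i \otimes w'_i$ is any other $(\X,\W)$-rank decomposition with $R' \leq R$. Then $\spn\{v'_1, \dots, v'_{R'}\} \subseteq \im(T) = \U$, and since $\dim \U = R$, we must have $R' = R$ and $\spn\{v'_i\} = \U$. Each $v'_i$ lies in $\U \cap \X$, and by the intersection statement of Theorem~\ref{thm:intro:intersection} applied to $\U$, we conclude that $\{v'_1, \dots, v'_R\} = \{c_1 v_{\pi(1)}, \dots, c_R v_{\pi(R)}\}$ for some permutation $\pi$ and nonzero scalars $c_i$. Matching terms in $T$ and using the linear independence of the $v_i$ forces $w'_i = c_i^{-1} w_{\pi(i)}$, proving uniqueness up to the trivial rescaling and relabeling. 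The certificate produced by Algorithm 1 that $\U \cap \X$ consists of only $R$ rays then doubles as a certificate that the computed decomposition is the unique $(\X,\W)$-rank decomposition of $T$.

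The main obstacle is the genericity transfer: one must argue carefully that genericity of $T$ in the $(\X,\W)$-rank $\leq R$ variety pulls back to genericity of the factors $v_i, w_i$ in $\X^{\times R} \times \W^{\times R}$, so that both the linear independence conditions above and the hypothesis of Theorem~\ref{thm:intro:intersection} with $s = R$ are met. Since the parameterization map is dominant onto its image and the generic loci involved are Zariski open and dense, the preimage of the generic $T$-locus is Zariski open and dense in $\X^{\times R} \times \W^{\times R}$, which is exactly what is needed; I would verify that the specific open conditions used (nonzero $\dim \U = R$, $\U \cap \X$ finite up to scale, etc.) are all Zariski open in the parameter space, so their intersection is nonempty and therefore dense. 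The remaining steps are then straightforward linear algebra and direct invocations of Theorem~\ref{thm:intro:intersection}.
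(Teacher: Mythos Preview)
Your approach is essentially the paper's: view $T$ as a map $\W^* \to \V$, set $\U = T(\W^*)$, apply Theorem~\ref{thm:intro:intersection} with $s=R$ to recover $v_1,\dots,v_R$, and then solve linearly for the $w_i$ (the paper packages this as Proposition~\ref{prop:suffsimplified} combined with Theorem~\ref{thm:generic_cert:simplified}). One small slip in your uniqueness paragraph: from $T = \sum v'_i \otimes w'_i$ one gets $\im(T) \subseteq \spn\{v'_i\}$, not the reverse; combined with $\dim \im(T) = R \geq R' \geq \dim\spn\{v'_i\}$ this still forces $R'=R$ and equality of spans, so your conclusion stands. Your last paragraph's worry about genericity transfer is moot, since the paper \emph{defines} a generically chosen tensor of $(\X,\W)$-rank at most $R$ directly via a Zariski open dense subset of $\X^{\times R} \times \W^{\times R}$ (Section~\ref{sec:variety_decomp}), so no dominance argument is needed.
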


Theorem~\ref{thm:decomp} follows from Theorem~\ref{thm:intro:intersection} by viewing $T$ as a map $T: \W^* \rightarrow \V$ and running Algorithm 1 on any basis of $\U=T(\W^*)$ (see Theorem~\ref{cor:generic_decomp} for details). 
A similar remark to that of the second paragraph following Theorem~\ref{thm:intro:intersection} is in order: The fact that a generically chosen tensor $T$ of $(\X,\W)$-rank upper bounded by~\eqref{eq:rank_bound} has a unique decomposition follows from known results~\cite[Theorem 4.6.14]{flenner1999joins}. The main contribution in this theorem is the genericity guarantee for our $n^{O(d)}$-time algorithm to \textit{recover} this decomposition and \textit{certify} that it is unique. Similar to Harshman's algorithm, our algorithm is accompanied by a concrete sufficient condition for a given $(\X,\W)$-decomposition to be unique (Proposition~\ref{prop:suffsimplified}).

As in Theorem~\ref{thm:intro:intersection}, our algorithm uses the description of the variety $\X$ as specified by the coefficients of the $p$ homogenous degree-$d$ polynomials that cut out $\X$. A {\em generically chosen} tensor ${T \in \V \otimes \W}$ of $(\X,\W)$-rank $R$ is formed by choosing $v_1,\dots, v_{R}$ generically from the variety $\X$, choosing $w_1, \dots, w_R$ generically from the vector space $\W$, and letting $T=\sum_{i=1}^{R} v_i \otimes w_i$. Note that when $R \le \dim(\W)$, almost surely the vectors $w_1, \dots, w_R$ are linearly independent. Due to the non-degeneracy of $\X$, Theorem~\ref{thm:decomp} also gives guarantees under the same rank condition for decompositions of the form:

$$ T = \sum_{i=1}^R v_i \otimes v'_i, \text{ where } v_1, \dots, v_R \text{ and } v'_1, \dots, v'_R \text{ are chosen {\em generically} from }  \X$$
(see Theorem~\ref{cor:generic_decomp} and the subsequent discussion).

%
%

\paragraph{Implications for tensor decompositions and beyond} While the above result holds for an arbitrary non-degenerate variety, even for the standard tensor decomposition problem where the tensor $T \in \field^{n_1} \otimes \field^{n_2} \otimes \field^{n_3}$ has the form
$$ T = \sum_{i=1}^R x_i \otimes y_i \otimes w_i,$$
we get improved guarantees by restricting our attention to the variety of rank-$1$ matrices in $\field^{n_1 \times n_2}$. 

\begin{cor}\label{cor:intro:tensor_decomp}
For any positive integers $n_1, n_2, n_3$, there is an $(n_1 n_2)^{O(1)}$-time algorithm that, on input a generically chosen tensor $T\in \field^{n_1}\otimes \field^{n_2} \otimes \field^{n_3}$ of tensor rank
\ba
R \leq \min\Bigg\{\frac{1}{4}(n_1-1) (n_2-1), n_3\Bigg\},
\ea
outputs a tensor rank decomposition of $T$ and certifies that this is the unique tensor rank decomposition of $T$.
\end{cor}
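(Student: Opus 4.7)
The plan is to apply Theorem~\ref{thm:decomp} to the variety $\X_1 \subseteq \V = \field^{n_1 \times n_2}$ of rank-$1$ matrices, with $\W = \field^{n_3}$. First I would re-view the input tensor $T \in \field^{n_1}\otimes \field^{n_2}\otimes \field^{n_3}$ as an element of $\V \otimes \W$ via the canonical isomorphism $\field^{n_1}\otimes \field^{n_2}\otimes \field^{n_3} \cong (\field^{n_1 \times n_2})\otimes \field^{n_3}$. Under this identification, a tensor rank decomposition $T = \sum_{i=1}^R x_i \otimes y_i \otimes w_i$ corresponds exactly to an $(\X_1,\W)$-decomposition $T = \sum_{i=1}^R v_i \otimes w_i$ with $v_i = x_i \otimes y_i \in \X_1$. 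Conversely, each rank-$1$ matrix $v_i \in \X_1$ admits an essentially unique factorization $v_i = x_i \otimes y_i$ (up to rescaling $x_i \mapsto \lambda x_i$, $y_i \mapsto \lambda^{-1} y_i$), so any algorithm that finds and certifies the $(\X_1,\W)$-decomposition immediately yields one that finds and certifies the tensor rank decomposition. In particular the tensor rank of $T$ (viewed as a $3$-tensor) equals its $(\X_1,\W)$-rank.

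Next I would verify the hypotheses of Theorem~\ref{thm:decomp} for $\X_1$. The variety $\X_1$ is irreducible, being the image of the Segre map $\field^{n_1}\times \field^{n_2} \to \field^{n_1 \times n_2}$, and it is cut out by its $\binom{n_1}{2}\binom{n_2}{2}$ linearly independent $2 \times 2$ minors, which are homogeneous polynomials of degree $d = 2$. It is non-degenerate of order $d - 1 = 1$ since $\spn(\X_1) = \field^{n_1 \times n_2}$ (every matrix is a sum of rank-$1$ matrices). Substituting into the rank bound~\eqref{eq:rank_bound} with $n = n_1 n_2$, $d = 2$, $p = \binom{n_1}{2}\binom{n_2}{2}$ and $\delta = p/\binom{n_1 n_2+1}{2}$, a short calculation gives
\[
\tfrac{1}{d!}\, \delta \,(n+d-1) \;=\; \frac{\binom{n_1}{2}\binom{n_2}{2}}{n_1 n_2} \;=\; \frac{(n_1-1)(n_2-1)}{4},
\]
which, together with $\dim(\W) = n_3$, matches the rank bound in the statement of the corollary.

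Finally, I would check that a ``generically chosen tensor of tensor rank $R$'' in the corollary coincides with a ``generically chosen tensor of $(\X_1,\W)$-rank $R$'' in the sense used by Theorem~\ref{thm:decomp}. A generic rank-$R$ tensor is of the form $\sum_i x_i \otimes y_i \otimes w_i$ with $(x_i,y_i,w_i)$ generic; since the Segre map $(x,y) \mapsto x \otimes y$ is dominant onto $\X_1$, the products $v_i = x_i \otimes y_i$ are generic elements of $\X_1$, and $w_i$ is generic in $\W$. Hence $T$ is a generic tensor of $(\X_1,\W)$-rank $R$, and Theorem~\ref{thm:decomp} directly produces the advertised $(n_1 n_2)^{O(1)}$-time algorithm together with a uniqueness certificate, which in turn yields the desired tensor rank decomposition and its uniqueness by the discussion in the first paragraph. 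The only step requiring a word of justification is this transfer of genericity, which follows immediately from dominance of the Segre map; essentially all of the algorithmic content has already been extracted in Theorem~\ref{thm:decomp}.
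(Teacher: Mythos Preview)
Your proposal is correct and follows essentially the same route as the paper: specialize Theorem~\ref{thm:decomp} (equivalently Theorem~\ref{cor:generic_decomp}) to $\X=\X_1\subseteq \field^{n_1\times n_2}$ and $\W=\field^{n_3}$, verify the irreducibility, degree-$2$ cutting equations, and non-degeneracy of order~$1$ for $\X_1$, and compute the resulting rank bound to be $\tfrac14(n_1-1)(n_2-1)$. Your additional care in spelling out the correspondence between tensor rank decompositions and $(\X_1,\W)$-decompositions and the transfer of genericity via the Segre map is fine and makes the argument more self-contained than the paper's one-line deduction.
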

See Corollary~\ref{cor:tensor_decomp} in Section~\ref{sec:tensors} for the proof.
To interpret this result, consider the setting when $n_1 = n_2=n \le n_3$. Existing algorithms for order-$3$ tensors (e.g., \cite{Har72, LRA93, EVDL2022}) give genericity guarantees when $R \le n$. On the other hand, Corollary~\ref{cor:intro:tensor_decomp} gives guarantees for rank $\min\{(\tfrac{1}{4}-o(1)) n^2 , n_3\}$ which can be significantly larger -- for a tensor with $n_3=\Omega(n^2)$, we can even handle tensors of rank $R = \Omega(n^2)$, which is the best possible up to constants (an $n \times n \times n^2$ tensor has rank at most $O(n^2)$). 
A similar result to Corollary~\ref{cor:intro:tensor_decomp} was earlier claimed by \cite{de2006link}, but we show that a crucial lemma in their proof is incorrect. See Section~\ref{sec:tensor:related} for a more detailed description, and see Appendix~\ref{app:counterexample} for a counterexample to the lemma in question. This lemma was  also a crucial ingredient in the claimed proof appearing in~\cite{DLCC} that the FOOBI algorithm recovers symmetric decompositions of generically chosen $n \times n \times n \times n$ symmetric tensors of rank up to $O(n^2)$. Our analysis provides corrected proofs of these results using a different proof technique, and applies in a much more general setting. 

For higher order tensors, we obtain algorithms with polynomial time (unique) recovery guarantees for non-symmetric tensors. For generic order-$m$ tensors of rank $R =O(n^{\lfloor m/2 \rfloor})$, our algorithm finds the unique tensor decomposition of rank-$R$ in $n^{O(m)}$ time. (See Corollary~\ref{cor:general_tensor_decomp} for a slightly stronger statement, and Section~\ref{sec:tensors} for more results on tensors.) For even order $m$, we are not aware of prior works that prove such genericity guarantees for rank up to $n^{m/2}$;  such guarantees are known only in the special case of symmetric decompositions~\cite{MSS, BCPV}. See Section~\ref{sec:tensor:related} for related work on tensor decompositions. 

While these results give improvements even in the case of standard tensor decompositions, our algorithmic framework gives uniqueness results and efficient algorithms for a much broader class of low-rank decomposition problems. One such collection of applications are \textit{aided decompositions}, also known as~\textit{block decompositions}, which are generalizations of tensor decompositions that are useful in signal processing and machine learning~\cite{kolda2009tensor,comon2010handbook,cichocki2015tensor,sidiropoulos2017tensor,de2008decompositionspart1,de2008decompositionspart2,de2008decompositionspart3,domanov2020uniqueness}. Our general result (Theorem~\ref{thm:decomp}) also gives guarantees for such block decompositions; see Corollary~\ref{cor:block_decomp}. 


\subsection{Technical overview}\label{sec:overview}

Let $\V=\field^n$. Our main result is an algorithm for finding the points in the intersection of a conic variety $\X \subseteq \V$ with a linear subspace $\U$. 
Our algorithm is based on Hilbert's projective Nullstellensatz from algebraic geometry over $\complex$, as well as a ``lifted" version of the simultaneous diagonalization algorithm (also known as Jennrich's algorithm). 
\anote{5/5: rephrased mildly} Our algorithm generalizes the method of De Lathauwer, Cardoso and Castaing called ``FOOBI''~\cite{de2006link,DLCC}  for finding rank-one matrices in a subspace,\footnote{thereby also establishing genericity guarantees for the FOOBI algorithm and its variants~\cite{de2006link,DLCC}.} which is based on what they call a ``rank-$1$ detecting device'' $\Phi$. We will describe our algorithm using language that is more natural from an algebraic-geometric viewpoint.

We frequently invoke the canonical vector space isomorphism $\field[x_1,\dots, x_n]_d \cong S^d(\V^*)$ between the set of homogeneous polynomials of degree $d$ and the $d$-th symmetric power of $\V^*$, which sends a product of linear forms $f_1 \cdots f_d$ to the projection of $f_1\otimes \dots \otimes f_d$ onto $S^d(\V^*)$ (since $\setft{Char}(\field)=0$, one can take $f_1 \cdots f_d \mapsto \frac{1}{d!}\sum_{\sigma \in \mathfrak{S}_d} f_{i_{\sigma(1)}}\otimes \dots \otimes f_{i_{\sigma(d)}} \in S^d(\V^*)\subseteq (\V^*)^{\otimes d}$).

\subsubsection{The algorithm}

On input a set of homogeneous degree-$d$ polynomials $f_1,\dots, f_p \in S^d(\V^*)$ cutting out a variety $\X \subseteq \V$, and a basis $\{u_1,\dots, u_R\}$ for a linear subspace $\U\subseteq \V$, our algorithm first computes a basis $\{P_1,\dots, P_s\}$ for the linear subspace $S^d(\U) \cap I_d^{\perp} \subseteq S^d(\V)$, where $I_d=\spn\{f_1,\dots, f_p\}$, and $(\cdot)^\perp$ denotes the orthogonal complement in the dual space. Since $f_1,\dots, f_p$ cut out $\X$, the linear subspace $I_d^{\perp}\subseteq S^d(\V)$ satisfies the property that
\ba
\X=\{v \in \V : v^{\otimes d} \in I_d^{\perp}\}.
\ea
If $S^d(\U) \cap I_d^{\perp}=\{0\}$, then $\U\cap \X=\{0\}$, since for any element $v \in \U \cap \X$ it holds that $v^{\otimes d} \in S^d(\U) \cap I_d^{\perp}$.\footnote{This part of our algorithm, which checks whether $S^d(\U) \cap I_d^{\perp}=\{0\}$, constitutes a degree-$d$ \textit{Nullstellensatz certificate} for checking whether $\U \cap \X=\{0\}$ (see Remark~\ref{rmk:nullstellensatz}).} In this case, our algorithm outputs ``$\U$ trivially intersects $\X$."   

\anote{5/5: new para} Otherwise, our algorithm runs the simultaneous diagonalization algorithm on $\{P_1,\dots, P_s\}$ to determine whether there is a basis for $S^d(\U) \cap I_d^{\perp}$ of the form $\{v_1^{\otimes d},\dots, v_s^{\otimes d}\}$ for some linearly independent $v_1,\dots, v_s \in \V$. If this is the case, then clearly $v_1,\dots, v_s \in \U \cap \X$, and in fact the linear independence of $v_1,\dots, v_s$ guarantees that these are the \textit{only} elements of $\U \cap \X$ up to scale (see Observation~\ref{obs:find}). Accordingly, our algorithm outputs ``The only elements of $\U \cap \X$ are $\{v_1,\dots, v_s\}$ (up to scale)." If such a basis does not exist, our algorithm simply outputs ``Fail."
\anote{5/5: added} Note that in both of the above cases, the required condition on $S^d(\U) \cap I_d^{\perp}$ need not hold for certain choices of $\U$; as we describe in Section~\ref{sec:techniques:proof}, Theorem~\ref{thm:intro:intersection} establishes that this indeed holds for generic $\U$.

\anote{5/5: this comparison to FOOBI is quite long and somewhat early, and many readers may not even know what it is. Maybe we should move this to the end of the subsection, with a small paragraph called "Comparison to FOOBI", or even move it to the next section on Related work?}

\subsubsection{Proving the genericity guarantee} \label{sec:techniques:proof}

Our main technical contribution is Theorem~\ref{thm:intro:intersection}, which guarantees that our algorithm works extremely well on generic inputs: Assuming that $R$ is not too large (see~\eqref{eq:R_cd}), when $v_1,\dots, v_s \in \X$ and $v_{s+1},\dots, v_R \in \V$ are generically chosen, our algorithm recovers $v_1,\dots, v_s$ (up to scale) from any basis of $\U:=\spn\{v_1,\dots, v_R\}$ in $n^{O(d)}$ time, and certifies that these are the only elements of $\U \cap \X$. As can be readily verified from the description of our algorithm, this reduces to proving that for generically chosen $v_1,\dots, v_s \in \X$ and $v_{s+1},\dots, v_R \in \V$, it holds that $S^d(\U) \cap I_d^{\perp}\subseteq \spn\{v_1^{\otimes d},\dots, v_s^{\otimes d}\}$ (the reverse inclusion holds automatically).

For the proof, we in fact use nothing about the vector space $I_d^{\perp}\subseteq S^d(\V)$ other than its codimension $p$. To emphasize this, let $\K \subseteq S^d(\V)$ be an arbitrary linear subspace of codimension $p$. To give an idea of the proof, let us first consider the case when $s=0$. In this case we are setting out to prove that for a generic linear subspace $\U \subseteq \V$ of dimension not too large, it holds that $S^d(\U)\cap \K=\{0\}$. This seems to be an interesting multilinear algebraic question in its own right. Of course, a generically chosen subspace $\L \subseteq S^d(\V)$ of dimension $\dim(\L)=\codim(\K)$ will satisfy $\L \cap \K=\{0\}$, but our question is non-trivial because we constrain $\L$ to take the particular form $S^d(\U)$ for some $\U \subseteq \V$. The maximum dimension one could hope for would be the maximal $R$ for which $\binom{n+R-1}{R} \leq \codim(\K)$ (the lefthand side is $\dim(S^d(\U))$ when $\dim(\U)=R$). However, it is not always possible to achieve this.\footnote{Indeed, let $n=3$, $\V=\field^3$, and $\K=S^2(\field^2)\subseteq S^2(\field^3)$. Then for $R=2$ we have $\binom{n+R-1}{R}=6 = \codim(\K)$, but $S^2(\U)\cap \K \supseteq S^2( \U\cap \field^2 )\neq \{0\}$ for any $\U \subseteq \field^3$ of dimension $2$~\cite{415807}.}

More generally, for arbitrary $s \in \{0,1,\dots, R\}$, when $v_1,\dots, v_R$ are linearly independent (which holds generically since $\X$ is non-degenerate), the desired inclusion $S^d(\U) \cap \K\subseteq \spn\{v_1^{\otimes d},\dots, v_s^{\otimes d}\}$ is equivalent to
\ba
\spn\{P_{\V,d}^{\vee}(v_{a_1}\otimes \dots \otimes v_{a_d}) : a \in [R]^{\vee d} \setminus \Delta_s\} \cap \K=\{0\},
\ea
where $[R]^{\vee d}$ is the set of non-decreasing $d$-tuples of elements of $[R]:=\{1,\dots, R\}$, $\Delta_s=\{(1,1,\dots,1),\dots, (s,s,\dots,s)\}$, and $P_{\V,d}^{\vee}: \V^{\otimes d} \rightarrow S^{d}(\V)$ is the projection onto $S^{d}(\V)$.
A common approach in algebraic geometry for proving such a statement is to exhibit one choice of vectors $v_1, \dots, v_R$ for which it holds; this would imply the statement for a generic choice of $v_1, \dots, v_R$.
However, we do not know how to construct such vectors explicitly.

To prove the statement, we first define a certain total order $\succ$ on $[R]^{\vee d}$ for which the largest elements are $(1,\dots, 1)\succ \dots \succ (s,\dots, s)\succ \dots$. We then observe that it suffices to prove that, for all $(i_1,\dots, i_d) \in [R]^{\vee d} \setminus \Delta_s$, it holds that
\ba\label{eq:intro_simplified}
P_{\V,d}^{\vee}(v_{i_1}\otimes \dots \otimes v_{i_d})\notin \spn\{P_{\V,d}^{\vee}(v_{a_1}\otimes \dots \otimes v_{a_d}) : (i_1,\dots,i_d) \succ (a_1,\dots,a_d)\in [R]^{\vee d} \} + \K
\ea
for generically chosen $v_1,\dots, v_s \in \X, v_{s+1},\dots, v_R \in \V$. We prove this statement by induction on $d$. The crucial assumption is that $\X$ is non-degenerate of order $d-1$, or equivalently
\ba\label{eq:nondegenerate}
\spn\{v^{\otimes (d-\ell)} : v \in \X\}=S^{(d-\ell)}(\V)
\ea
for all $1 \leq \ell \leq d-1$. To prove the statement, we first note that the case $i_1=\dots=i_d>s$ follows easily from the fact that the righthand side of~\eqref{eq:intro_simplified} is not the entire space $S^d(\V)$, by a simple dimension count. In the other cases, we let $1\leq \ell\leq d-1$ be the largest integer for which $i_1=\dots=i_{\ell}$, and apply the induction hypothesis in degree $d-\ell$ by showing that it suffices to establish a similar non-containment as~\eqref{eq:intro_simplified} for the tensor $P_{\V,(d-\ell)}^{\vee}(v_{i_{\ell+1}}\otimes \dots \otimes v_{i_d})$. To make this reduction, we contract along $v_{i_1}^{\otimes \ell}$ while ensuring that this does not affect the dimension of the righthand side too much. This requires a clever choice of ordering $\succ$, along with a lower bound on the dimension of a generic contraction of a linear subspace (Lemma~\ref{lemma:generic_hook:new}). After this reduction, the desired statement ``looks like" the $s=0$ case in degree $(d-\ell)$, since by the non-degeneracy assumption the variety $\X$ ``looks like" the entire space $\V$ in degree $(d-\ell)$, in the sense of~\eqref{eq:nondegenerate}. This allows us to apply the induction hypothesis, completing the proof.

\paragraph{Comparison to the FOOBI algorithm}
In earlier work of De Lathauwer, Cardoso and Castaing, an algorithm (often referred to as the \textit{FOOBI algorithm}) is proposed for finding the rank-one matrices in a linear subspace of matrices~\cite{de2006link,DLCC}. Our algorithm specializes to theirs in the case when $\X=\X_1$ is the set of rank-one matrices, as we now briefly describe. In the FOOBI algorithm, a linear map $\Phi_{\X_1}$ is constructed with $\ker(\Phi_{\X_1})=I(\X_1)_2^{\perp}$, where $I(\X_1)_2 \subseteq S^2(\V^*)$ is the span of the $2 \times 2$ determinants, which cut out $\X_1$. Their algorithm is then essentially the same as ours in this specialized setting, with the map $\Phi_{\X_1}$ serving as a proxy for $I(\X_1)_2^{\perp}$.
We have found that working directly with the subspace $I(\X_1)_2^{\perp}$ greatly simplifies most arguments and notation.

The papers~\cite{de2006link,DLCC} also claim a similar genericity guarantee as we do in the special case $\X=\X_1$ (see Lemma 2.3 of \cite{de2006link}).
However, we show that this lemma is false by presenting an explicit counterexample in Appendix~\ref{app:counterexample}, and also identify the incorrect step in their proof. Our analysis thus provides a correct genericity guarantee that holds in a much more general setting. See Section~\ref{sec:tensor:related} for further discussion and other related work.



%

\subsection{Related work on tensor decompositions} \label{sec:tensor:related} 


There is a rich body of work on low-rank tensor decompositions 
where the goal is to express a given tensor as a sum of rank-$1$ tensors. 
Considering the intractability of the tensor decomposition problem~\cite{Has90, HL}, several different assumptions on the input tensor have been made to overcome the worst-case intractability. We focus on algorithms that run in polynomial time and provably recover the rank-$1$ components (this also implies uniqueness) for \emph{generically} chosen tensors. See \cite{Vij20} for references to other related lines of work.    

The first algorithm for tensor decompositions was the simultaneous diagonalization method~\cite{Har72, LRA93}, which was used to recover the decomposition for {\em generically chosen} tensors in $\field^{n \times n \times n}$ of rank $R \le n$.\footnote{This is also sometimes called Jennrich's algorithm, named after Robert Jennrich, who Harshman credits for the first uniqueness result for tensor decompositions~\cite{harshman1970foundations}. Harshman gave an alternate proof of uniqueness using the simultaneous diagonalization method (see the Theorem on page 2 of \cite{Har72}).
} We use this algorithm as a subroutine in our algorithm; see Section~\ref{sec:jennrich} for details. We are not aware of any polynomial time guarantee for generically chosen third-order tensors in $\field^{n \times n \times n}$ of rank $R> (1+\varepsilon) n$ for constant $\varepsilon>0$; see \cite{BCMVopen} for a related open question.\footnote{Some existing algorithms have a running time dependence of $n^{O(t)}$ to handle generic instances of rank $n+t$ ~\cite{domanov2017canonical,ChenRademacher20}.} 

Our algorithm is a broad generalization of an algorithm proposed in~\cite{de2006link} for third order tensor decompositions. This work also claims a similar genericity guarantee to our Corollary~\ref{cor:intro:tensor_decomp}, but we show that a crucial lemma in their proof is incorrect (see Appendix~\ref{app:counterexample}). Unfortunately, this error percolates to genericity guarantees in subsequent works, most notably that of the well-known FOOBI algorithm for symmetric fourth-order tensor decompositions~\cite{DLCC}. Our genericity guarantee (Theorem~\ref{thm:intro:intersection}) provides a correct proof of these results, uses a completely different proof technique, and holds in a much more general setting that applies to other types of decomposition problems and entangled subspace problems. We emphasize that, while the genericity guarantees in these works appear to be incorrect, the algorithms and computational methods presented are correct to our knowledge.

Several other generalizations of tensor decompositions that have been studied previously are also captured by $(\X, \W)$-decompositions.
Some sufficient conditions for generic (non-algorithmic) uniqueness results were explored in \cite{domanov2016generic}. When $\X=\X_r$ is the variety of rank $r$ matrices (of a given dimension), $(\X_r, \W)$-decompositions correspond to $r$\textit{-aided decompositions} (also called $(r,r,1)$\textit{-block decompositions} and \textit{max ML-$(r,r,1)$ decompositions}). Such $r$-aided decompositions have applications in signal processing and machine learning, among others~\cite{kolda2009tensor,comon2010handbook,cichocki2015tensor,sidiropoulos2017tensor}, and were also studied, for example, in~\cite{de2008decompositionspart1,de2008decompositionspart2,de2008decompositionspart3,domanov2020uniqueness}. Our general result in Theorem~\ref{thm:decomp} gives guarantees for $r$-aided decompositions, as described in Corollary~\ref{cor:block_decomp}. We are unaware of such polynomial time genericity guarantees prior to our work. 

In other related work, there also exist algorithmic guarantees for tensor decompositions with random components that can handle larger rank (e.g., random tensors in $\real^{n \times n \times n}$ of rank $\widetilde{O}(n^{3/2})$ ~\cite{GM15}). However, these make strong assumptions about the components like incoherence (near orthogonality), which are not satisfied by \emph{generic} instances.  
There also exists a line of work on smoothed analysis guarantees~\cite{BCMV, MSS, BCPV} that are similar in flavor to genericity guarantees, but provide robust guarantees for tensor decompositions under slightly stronger assumptions. Obtaining smoothed analysis analogs of our results is an interesting open question.



Finally, tensor decompositions have seen a remarkable range of applications for algorithmic problems in data science and machine learning, including parameter estimation of latent variable models like mixtures of Gaussians, hidden Markov models, and even for learning shallow neural networks~\cite{Moitrabook, Anandkumarbook}.
Our work shows strong uniqueness results and efficient polynomial time algorithms for a broader class of low-rank decomposition problems, and may present a powerful algorithmic toolkit for applications in these domains.

\paragraph{Outline}

In Section~\ref{sec:prelim} we introduce some notation, mathematical preliminaries and some existing algorithmic subroutines that will be used in later sections. Section~\ref{sec:algorithm} describes the algorithm and shows some correctness properties of the algorithm. Section~\ref{sec:simplified} proves Theorem~\ref{thm:intro:intersection} (see Corollary~\ref{cor:generic_simplified}  for the formal claim and proof). The applications to quantum entanglement are presented in Section~\ref{sec:entanglement}, while the applications to low-rank decompositions are presented in Section~\ref{sec:tensors}.

\section{Mathematical preliminaries} \label{sec:prelim}
In this section we review some mathematical preliminaries for this paper. We begin with some miscellaneous definitions, and then review the symmetric subspace, basic notions from algebraic geometry, some relevant examples of varieties, decompositions over varieties, and the simultaneous diagonalization algorithm.

Let $[R]=\{1,\dots, R\}$ when $R$ is a positive integer. For a finite, ordered set $S$ and a positive integer $d$, let $S^{\times d}$ be the $d$-fold cartesian product of $S$, and let
\begin{align}
S^{\vee d} = \{(a_1,\dots, a_{d}): a_1,\dots, a_{d} \in S\quad \text{and}\quad a_1 \leq \dots \leq a_d\}.
\end{align}
For example, if $S=[R]$, then
\begin{align}
[R]^{\vee d} = \{(a_1,\dots, a_{d}): 1\leq a_1 \leq \dots \leq a_d\leq R \}.
\end{align}

Throughout this work, we let $\field$ denote either the real or complex field. For an $\field$-vector space $\V$ of dimension $n$, let $\{e_1,\dots, e_n\}$ be a standard basis for $\V$, and let $\{x_1,\dots, x_n\}$ be the dual basis for $\V^*$.

\subsection{The symmetric subspace}
Let $\V$ be an $\field$-vector space of dimension $n$. For a positive integer $d$, let $\field[x_1,\dots, x_n]_d$ be the vector space of homogeneous degree-$d$ polynomials on $\V$ (in addition to the zero polynomial), and let $\field[x_1,\dots, x_n]=\bigoplus_{d=0}^\infty \field[x_1,\dots,x_n]_d$ be the polynomial ring on $\V$. Let $\mathfrak{S}_d$ be the group of permutations of $d$ elements, and let $S^d(\V)$ be the $d$-th symmetric power of $\V$, which can be identified (since $\setft{Char}(\field)=0$) with the set of tensors $T\in \V^{\otimes d}$ that are invariant under the action of $\mathfrak{S}_d$ on $\V^{\otimes d}$ which permutes the copies of $\V$. As mentioned above, we frequently invoke the canonical vector space isomorphism $\field[x_1,\dots, x_n]_d \cong S^d(\V^*)$.

Let $P^{\vee}_{\V,d}: \V^{\otimes d} \rightarrow S^d(\V)$ be the projection map. The standard basis $\{e_1,\dots, e_n\}$ of $\V$ induces a basis of $S^d(\V)$ given by
\ba
\{P^{\vee}_{\V,d}(e_{a_1}\otimes \dots \otimes e_{a_d}) : a \in [n]^{\vee d}\}.
\ea

\paragraph{Contraction or Hook:} For $\field$-vector spaces $\V_1,\dots, \V_m$, an index $i \in [d]$, a vector $v \in \V_i^*$, and a tensor $T \in \V_1 \otimes \dots \otimes \V_d$, we define the \textit{contraction} of $T$ with $v$ in the $i$-th mode, denoted $v \hook_i T$, to be the tensor obtained by regarding $T$ as a map $\V_i^* \rightarrow \V_1 \otimes \dots \otimes \V_{i-1}\otimes \V_{i+1} \otimes \dots \otimes \V_d$ and evaluating at $v$:
\ba
v \hook_i T := T(v) \in \V_1\otimes \dots \otimes \V_{i-1}\otimes \V_{i+1}\otimes \dots \otimes \V_d.
\ea


\subsection{Algebraic geometry}\label{sec:MP:AG}

A \textit{algebraic set} (or an \textit{algebraic variety}, or simply a \textit{variety}) in $\V$ is a subset $\X \subseteq \V$ for which there exists a set of polynomials $f_1,\dots, f_p \in \field[x_1,\dots, x_n]$ such that
\ba
\X=\{v \in \V : f_1(v)=\dots=f_p(v)=0\}.
\ea
In this case, we say that $\X$ is \textit{cut out} by $f_1,\dots, f_p$. We say that a variety $\X$ is \textit{conic} if $\field \X=\X$. It is straightforward to verify that a variety $\X$ is conic if and only if it is cut out by homogeneous polynomials, which can furthermore be chosen to all have the same degree $d$. The \textit{Zariski topology} is the topology on $\V$ with closed sets given by the varieties in $\V$. We therefore also refer to a variety as a \textit{Zariski closed} (or simply, a \textit{closed}) subset of $\V$. A subset of $\V$ is called \textit{locally closed} if it is the intersection of an open subset of $\V$ with a closed subset of $\V$. A subset of $\V$ is called \textit{constructible} if it is a finite union of locally closed subsets of $\V$.
A subset $\A \subseteq \V$ is called \textit{irreducible} if it cannot be written as a finite union of proper closed subsets of $\A$ (with respect to the subspace topology on $\A$). Any Zariski closed subset $\X \subseteq \V$ can be written (uniquely, up to reordering terms) as a finite union of irreducible varieties $\X=\X_1 \cup \dots \cup \X_k$. The irreducible varieties $\X_1,\dots, \X_k \subseteq \V$ are called the \textit{irreducible components} of $\X$.

Let $\X\subseteq \V$ be a conic, irreducible variety. We say that $\X$ is \textit{non-degenerate} if it is not contained in any proper linear subspace of $\V$, i.e. $\spn(\X)=\V$. More generally, we say that $\X \subseteq \V$ is \textit{non-degenerate of order $\tilde{d}$} if there does not exist any homogeneous degree-$\tilde{d}$ polynomials that vanish on $\X$, i.e. if the set
\ba
I(\X)_{\tilde{d}} :=\{f \in \field[x_1,\dots, x_n]_{\tilde{d}} : f(v)=0 \quad \text{for   all} \quad v \in \X\}
\ea
is equal to $\{0\}$. More generally, we will say that a reducible variety $\X$ is non-degenerate of order $\tilde{d}$ if all of its irreducible components are non-degenerate of order $\tilde{d}$. For the purpose of inductive arguments, we adopt the convention that every variety is non-degenerate of order zero. The set $I(\X)_d$ is called the \textit{degree-$d$-component of the ideal of $\X$}. The set $I(\X):= \oplus_{d=0}^{\infty} I(\X)_d$ is called the \textit{ideal of} $\X$.
Viewing the elements of $\field[x_1,\dots, x_n]_d$ as elements of $S^d(\V^*)$, we have
\ba\label{eq:ideal}
I(\X)_d^{\perp}=\spn\{v^{\otimes d} : v \in \X\}\subseteq S^d(\V).
\ea
As a consequence, we see that an irreducible, conic variety $\X$ is non-degenerate of degree $\tilde{d}$ if and only if
\ba
\spn\{v^{\otimes \tilde{d}} : v \in \X\}=S^{\tilde{d}}(\V).
\ea
Note that if $\X$ is cut out in degree $d$, then $\X$ is cut out by $p=\dim(I(\X)_d)$ many linearly independent homogeneous polynomials of degree $d$, where $\dim(I(\X)_d)$ denotes the dimension of $I(\X)_d$ as an $\field$-vector space.

\paragraph{Genericity:} For a variety $\X \subseteq \V$, we say that a property holds for a \textit{generically chosen} element $v \in \X$ if there exists a Zariski open dense subset (in the induced topology on $\X$) $\A \subseteq \X$ such that the property holds for all $v \in \A$. Zariski open dense sets are massive: In particular, Zariski open dense subsets of $\V$ are full measure with respect to any absolutely continuous measure, and Zariski open dense subsets of a variety $\X$ are dense in $\X$ in the Euclidean topology. For varieties $\X_1,\dots, \X_{R} \subseteq \V$, the cartesian product $\X_1\times \dots \times \X_{R} \subseteq \V^{\times R}$ is again a variety, and we say that a property holds for \textit{generically chosen} elements $v_1 \in \X_1,\dots, v_{R} \in \X_{R}$ if there exists a Zariski open dense subset $\A \subseteq \X_1 \times \dots \times \X_{R}$ for which the property holds for all $(v_1,\dots, v_{R})\in \A$ (i.e., if it holds for a generically chosen element $v \in \X_1 \times \dots \times \X_{R}$).

\paragraph{Genericity over $\real$ and $\complex$:} We will be proving and using genericity results over $\real$ and $\complex$ simultaneously. To this end, we present a basic fact which will allow us to translate genericity results over $\complex$ to genericity results over $\real$. Let $\setft{Cl}_Z^{\field}(\cdot)$ denote the Zariski closure over $\field$.

\begin{fact}\label{fact:real_vs_complex}
Let $\X \subseteq \real^n\subseteq \complex^n$ be a real variety, let $\T=\setft{Cl}_Z^{\complex}(\X)$ be its complex Zariski closure, and let $\A \subseteq \T$ be a Zariski open dense subset. Then the following two properties hold:
\begin{enumerate}
\item $\A \cap \X$ is Zariski open in $\X$ over $\real$
\item $\A \cap \X$ is Zariski dense in $\X$ over $\real$.
\end{enumerate}
\end{fact}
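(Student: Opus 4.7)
The plan is to establish (1) by a direct computation involving the splitting of complex polynomials into real and imaginary parts, and to establish (2) by first reducing to the real-irreducible case and then arguing by contradiction using the same splitting trick.

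For (1), I would start from the fact that $\A$ being Zariski open in $\T$ over $\complex$ means $\A = \T \setminus V_\complex(g_1,\dots, g_k)$ for some $g_1,\dots, g_k \in \complex[x_1,\dots, x_n]$. Writing $g_i = h_i + \sqrt{-1}\,\ell_i$ with $h_i, \ell_i \in \real[x_1,\dots, x_n]$, a complex number $a + \sqrt{-1}\,b$ with $a,b \in \real$ vanishes iff $a = b = 0$, which yields $V_\complex(g_1,\dots, g_k) \cap \real^n = V_\real(h_1,\ell_1,\dots, h_k,\ell_k)$. Since $\X \subseteq \real^n$, this gives $\A \cap \X = \X \setminus V_\real(h_1,\ell_1,\dots, h_k,\ell_k)$, which is real Zariski open in $\X$.

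For (2), I would first reduce to the case where $\X$ is real irreducible. If $\X = \X_1 \cup \dots \cup \X_r$ is the decomposition into real irreducible components and $\T_i = \setft{Cl}_Z^\complex(\X_i)$, then $\T = \T_1 \cup \dots \cup \T_r$, and one can argue that $\A \cap \T_i$ is Zariski open dense in $\T_i$ over $\complex$ for each $i$ (density on each $\T_i$ follows from density of $\A$ in $\T$ by passing to the complex irreducible components of each $\T_i$ and using that open density on a reducible variety restricts to open density on each irreducible component). Establishing density of $\A \cap \X_i$ in $\X_i$ over $\real$ for each $i$ then gives density of $\A \cap \X$ in $\X$ via unions of real Zariski closures, so it suffices to treat the real-irreducible case.

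Assume then that $\X$ is real irreducible, and suppose for contradiction that $\A \cap \X \subseteq \Y$ for some proper real Zariski closed $\Y \subsetneq \X$. Then $\X \setminus \Y$ is non-empty open in $\X$, hence real Zariski dense by irreducibility. But $\X \setminus \Y \subseteq \X \setminus \A = \X \cap (\T \setminus \A) = \X \cap V_\real(h_1,\ell_1,\dots, h_k,\ell_k)$ by the splitting of (1), and the right-hand side is real Zariski closed in $\X$. Density of $\X \setminus \Y$ then forces each $h_i$ and $\ell_i$ to vanish on $\X$, so each $g_i = h_i + \sqrt{-1}\,\ell_i$ vanishes on $\X$. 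Because $\T = \setft{Cl}_Z^\complex(\X)$, this means each $g_i$ vanishes on $\T$, forcing $V_\complex(g_1,\dots, g_k) \supseteq \T$ and hence $\A = \emptyset$, which contradicts density of $\A$. The main obstacle here, and the reason (2) is subtler than (1), is that complex Zariski density of $\A$ in $\T$ does not immediately translate into real Zariski density of $\A \cap \X$ in $\X$: a priori a complex open dense set could avoid $\real^n$ entirely. The resolution is the real/imaginary splitting combined with the fact that a complex polynomial vanishes on $\X$ iff it vanishes on $\T$ (since $\T$ is the complex Zariski closure of $\X$), which is exactly what converts the complex density statement into the real one.
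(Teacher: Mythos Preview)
Your proof of (1) is correct and is exactly the paper's argument (the paper relegates the splitting trick to a footnote, but it is the same computation).

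For (2) your argument is also correct, but it takes a different route from the paper. The paper does \emph{not} reduce to the irreducible case: given $\A\cap\X\subseteq\Z\subsetneq\X$, it passes to the complex Zariski closure $\U=\setft{Cl}_Z^{\complex}(\Z)$, observes $\U\cap\X=\Z\subsetneq\X$ so $\U\subsetneq\T$, and then uses the covering $\X\subseteq\U\cup(\T\setminus\A)$ together with density of $\X$ in $\T$ to force $\A\subseteq\U$, contradicting density of $\A$. This avoids any discussion of irreducible components and is shorter.

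Your approach—reduce to real irreducible $\X_i$, then use that a nonempty open set is dense and push the vanishing of $h_j,\ell_j$ from $\X_i$ up to $\T_i$—is perfectly valid, but your justification of the reduction is underdeveloped. You write that ``density on each $\T_i$ follows \dots\ by passing to the complex irreducible components of each $\T_i$,'' which suggests you are allowing $\T_i$ to be reducible; if it were, its components need not be components of $\T$, and the restriction argument would not go through. The clean fix is to note that each $\T_i$ is in fact \emph{complex irreducible}: if $\T_i=\U_1\cup\U_2$ with $\U_j\subsetneq\T_i$ complex closed, then $\X_i=(\U_1\cap\real^n)\cup(\U_2\cap\real^n)$ with each piece a proper real closed subset (properness because $\U_j\cap\real^n=\X_i$ would force $\T_i\subseteq\U_j$), contradicting real irreducibility of $\X_i$. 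Since no $\T_i$ contains another (as $\T_i\subseteq\T_j$ would give $\X_i\subseteq\X_j$), the $\T_i$ are precisely the complex irreducible components of $\T$, and $\A\cap\T_i$ is open dense in $\T_i$ as you need. With this observation in place your argument is complete; the paper's approach simply sidesteps the issue.
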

\begin{proof}
The first property follows from the fact that $\A \cap \X=\A \cap \real^n \cap \X$ by construction, and $\A \cap \real^n\subseteq \real^n$ is Zariski open.\footnote{The real part of a Zariski open set is Zariski open over $\real$. Indeed, $\A$ is the complement of some Zariski closed set $\X=\{v \in \complex^n :f_1(v)=\dots=f_{p}(v)=0\}\subseteq \complex^n,$ so $\A\cap \real^n$ is the complement of the Zariski closed subset of $\real^n$ cut out by the $2 p$ polynomials formed by taking the real and imaginary parts of each $f_i$.}

For the second property, suppose toward contradiction that there exists a real variety $\Z\subseteq \real^n$ for which
\ba
\A \cap \X \subseteq \Z \subsetneq \X.
\ea
Let $\U=\setft{Cl}_Z^{\complex}(\Z) \subseteq \T$ be the complex Zariski closure of $\Z$, and note that $\U \cap \X=\Z$ (this follows from the fact that $\U \cap \real^n = \Z$). This gives
\ba
\A \cap \X \subseteq \U \cap \X \subsetneq \X.
\ea

But this implies that $\X \subseteq \U \cup (\T\setminus \A) \subseteq \T$. Since $\X\subseteq \T$ is Zariski dense, and $ \U \cup (\T\setminus \A) \subseteq \T$ is Zariski closed, it follows that $ \U \cup (\T\setminus \A) = \T$, so $\A \subseteq \U \subsetneq \T$. This is a contradiction to $\A \subseteq \T$ being Zariski-dense, and completes the proof that $\A \cap \X$ is Zariski dense in $\X$ over $\real$.
\end{proof}

\subsection{Examples of varieties}\label{sec:XR}

In this section, we introduce several well known examples of conic varieties, which we will use in later sections to demonstrate applications of our algorithm. These include determinantal varieties of matrices, the variety of product tensors, the variety of biseparable tensors, and the variety of slice rank one tensors.

Let $n_1$, $n_2$, and $r\leq \min\{n_1,n_2\}$ be positive integers, let $\V_1$ and $\V_2$ be $\field$-vector spaces of dimensions $n_1$ and $n_2$, and let
\ba\label{eq:XR}
\X_r := \{v \in \V_1 \otimes \V_2 : \rank(v) \leq r\} \subseteq \V_1 \otimes \V_2,
\ea
where $\rank(v)$ denotes the rank of $v \in \V_2 \otimes \V_2$, viewed as an $n_2 \times n_1$ matrix. More precisely, this is the rank of $v$ when $v$ is viewed as an element of $\setft{Hom}_{\field}(\V_2^*,\V_1)$ under the isomorphism
\ba\label{eq:tensor_iso}
\V_1 \otimes \V_2 \cong \setft{Hom}_{\field}(\V_2^*,\V_1),
\ea
where $\setft{Hom}_{\field}(\V_2^*,\V_1)$ denotes the set of $\field$-linear maps from $\V_2^*$ to $\V_1$. This map sends $v_1 \otimes v_2 \in \V_1 \otimes \V_2$ to the map $f \mapsto f(v_2) v_1$, and extends linearly. In coordinates, this is simply the map which regards a tensor of dimension $\dim(\V_1) \dim(\V_2)$ as a $\dim(\V_1)\times \dim(\V_2)$ matrix.

We will sometimes use the notation $\X_r^{\real}$ and $\X_r^{\complex}$ to emphasize the field.  It is a standard fact that $\X_r$ is a conic variety cut out by the $(r+1) \times (r+1)$ minors (these minors have degree ${r+1}$, and there are $\binom{n_1}{r+1}\binom{n_2}{r+1}$ of them). A slightly less standard fact is that $\X_r$ has no equations in degree $r$. Over $\complex$, this follows from the fact that the $(r+1)\times (r+1)$ minors generate the ideal of $\X_r^{\complex}$~\cite{harris2013algebraic}. Over $\real$, it follows from e.g.~\cite[Theorem 2.2.9.2]{mangolte2020real} that the Zariski closure of $\X_r^{\real}$ in $\complex^n$ is $\X_r^{\complex}$. By Hilbert's Nullstellensatz, it follows that any real polynomial which vanishes on $\X_r^{\real}$ must also vanish on $\X_r^{\complex}$. Thus, the $(r+1)\times (r+1)$ minors also generate the ideal of $\X_r^{\real}$, so in particular, $\X_r^{\real}$ has no equations in degree $r$.

Our further examples will be subsets of tensor product spaces with more factors: Let $n_1,\dots, n_m$ be positive integers, let $\V_1,\dots, \V_m$ be $\field$-vector spaces of dimensions $n_1,\dots, n_m$, and let ${\V=\V_1\otimes \dots \otimes \V_m}$. Let
\ba
\X_{\setft{Sep}}=\left\{v_1 \otimes \dots \otimes v_m :  v_1 \in \V_1,\dots, v_m \in \V_m\right\}
\ea
be the set of~\textit{product tensors} (or~\textit{separable tensors}). Then $\X_{\setft{Sep}}$ is non-degenerate and is cut out by exactly
\ba\label{eq:Sep_p}
p=\binom{n_1 \cdots n_m +1}{2}-\binom{n_1+1}{2}\cdots \binom{n_m+1}{2}
\ea
many linearly independent homogeneous polynomials of degree $d=2$. Indeed, it is well-known that $\X_{\setft{Sep}}$ is non-degenerate and cut out by degree $d=2$ polynomials~\cite{harris2013algebraic}. The number follows from the fact that $p=\dim(I(\X_{\setft{Sep}})_2)$ (see Section~\ref{sec:MP:AG}), equation~\eqref{eq:ideal}, and the fact that
\ba
\spn\{v^{\otimes 2} : v \in \X_{\setft{Sep}}\}=S^2(\V_1)\otimes \dots \otimes S^2(\V_m),
\ea
which has dimension $\binom{n_1+1}{2}\cdots \binom{n_m+1}{2}$. Let
\ba
\X_B=\bigcup_{\substack{T \subseteq [m]\\ 1\leq \abs{T} \leq \floor{m/2}}}\left\{v\in \V : \rank\left(v : \bigotimes_{i \in T} \V_i^* \rightarrow \bigotimes_{j \in [m]\setminus T} \V_j\right) \leq 1\right\}
\ea
be the set of~\textit{biseparable tensors}. Then this is the decomposition of $\X_B$ into irreducible components, and the irreducible component indexed by $T\subseteq [m]$ is non-degenerate and cut out by
\ba
p_T = \binom{\prod_{i \in T} n_i}{2}\binom{\prod_{j \in [m]\setminus T} n_j}{2}
\ea
many linearly independent homogeneous polynomials of degree $d=2$ (this follows directly from the analogous statement for $\X_1$ above). Similarly, let
\ba
\X_S=\bigcup_{i \in [m]}\left\{v\in \V : \rank\left(v : \V_i^* \rightarrow \bigotimes_{j \in [m]\setminus \{i\}} \V_j\right) \leq 1\right\}
\ea
be the set of~\textit{slice rank 1 tensors}. Then $\X_S$ is non-degenerate, this is the decomposition of $\X_S$ into irreducible components, and the component indexed by $i\in [m]$ is non-degenerate and cut out by
\ba
p_i = \binom{n_i}{2}\binom{\prod_{j \in [m]\setminus \{i\}} n_j}{2}
\ea
many linearly independent homogeneous polynomials of degree $d=2$ (this again follows directly from the analogous statement for $\X_1$ above). We will also consider the set of symmetric product tensors. If $\V$ is an $\field$-vector space of dimension $n$, then we define
\ba
\X_{\setft{Sep}}^{\vee} = \X_{\setft{Sep}} \cap S^m(\V) = \{ \alpha v^{\otimes m} : \alpha \in \field, v \in \V\}\subseteq S^m(\V)
\ea
to be the set of \textit{symmetric product tensors} in $S^m(\V)$. The set $\X_{\setft{Sep}}^{\vee}$ forms a non-degenerate algebraic variety that is cut out by
\ba
p=\binom{\binom{n+m-1}{m} +1}{2} - \binom{n+2m+1}{2m}
\ea
many linearly independent homogeneous polynomials of degree $d=2$ (this calculation is similar to the analogous calculation for $\X_{\setft{Sep}}$).

\subsection{Decompositions over varieties}\label{sec:variety_decomp}

For an $\field$-vector space $\T$ of dimension $n$, a conic, non-degenerate variety $\Y \subseteq \T$, and a vector $T \in \T$, a \textit{$\Y$-decomposition} of $T$ is a set $\{v_1,\dots, v_R\} \subseteq \Y$ for which
\ba\label{eq:Y_decomp}
T=\sum_{a \in [R]} v_a.
\ea
The number $R$ is called the \textit{length}, or~\textit{rank} of this decomposition. The \textit{$\Y$-rank} of $T$ is the minimum length of any $\Y$-decomposition of $T$. We say that a $\Y$-rank decomposition $\{v_1,\dots, v_R\} \subseteq \Y$ is the \textit{unique $\Y$-rank decomposition} of $T$ if every other decomposition of $T$ has length greater than $R$. We will sometimes abuse terminology and refer to an expression of the form~\eqref{eq:Y_decomp} as a $\Y$-decomposition.

In this work, we study a particular type of $\Y$-decomposition. For $\field$-vector spaces $\V$ and $\W$ and a conic, non-degenerate variety $\X \subseteq \V$, we study \textit{$(\X,\W)$-decompositions} (also called~\textit{simultaneous $\X$-decompositions}): these are $\Y$-decompositions under the choice
\ba\label{eq:Y}
\Y=\{v \otimes w : v \in \X \quad \text{and} \quad w \in \W\}\subseteq \V\otimes \W.
\ea
For example, when $\V=\V_1\otimes \V_2$ is a tensor product space and $\X_1$ is the determinantal variety introduced in Section~\ref{sec:XR}, $(\X_1,\W)$-decompositions exactly correspond to tensor decompositions, i.e. expressions of a tensor $T \in \V_1 \otimes \V_2 \otimes \W$ as a sum of terms of the form $v_1 \otimes v_2 \otimes w$. More generally, $(\X_r, \W)$-decompositions correspond to $r$-\textit{aided rank decompositions} (also called \textit{max ML rank-$(r,r,1)$ decompositions}, and \textit{$(r,r,1)$-block decompositions}). Aided decompositions have applications in signal processing and machine learning, among others~\cite{kolda2009tensor,comon2010handbook,cichocki2015tensor,sidiropoulos2017tensor} and were also studied, for example, in~\cite{de2008decompositionspart1,de2008decompositionspart2,de2008decompositionspart3,domanov2020uniqueness}. As one more example (which also generalizes $(\X_1,\W)$-decompositons), when $\V=\V_1\otimes \cdots \otimes \V_{m}$ and $\X=\X_{\setft{Sep}}\subseteq \V$ is the set of product tensors, $(\X_{\setft{Sep}},\W)$-decompositions correspond to tensor decompositions in $\V_1\otimes \dots \otimes \V_m \otimes \W$, i.e. expressions of a tensor $T \in \V_1\otimes \dots \otimes \V_m \otimes \W$ as a sum of terms of the form $v_1\otimes \dots \otimes v_m \otimes w$.


%
%

We will say that a property holds for a~\textit{generically chosen} element $T \in \V \otimes \W$ of $(\X,\W)$-rank at most $R$ if there exists a Zariski open dense subset $\A\subseteq \X^{\times R}\times \W^{\times R}$ such that for all $(v_1,\dots, v_R, w_1,\dots, w_R)\in \A$, the property holds for $T=\sum_{a=1}^R v_a \otimes w_a$.

\subsection{Simultaneous diagonalization algorithm}\label{sec:jennrich}

In this section, we review the simultaneous diagonalization algorithm~\cite{Har72} (that is sometimes referred to as Jennrich's algorithm or Harshman's algorithm), which we will use as a subroutine in our algorithm. For $\field$-vector spaces $\V$ and $\W$, we recall the natural isomorphism
\ba
\V \otimes \W \cong \setft{Hom}_{\field}(\W^*,\V),
\ea
(see~\eqref{eq:tensor_iso}). We will invoke this isomorphism several times in the simultaneous diagonalization algorithm and throughout this paper. For example, we will view a tensor $T \in \V_1 \otimes \V_2\otimes \V_3$ as an element of $\setft{Hom}_{\field}(\V_1^*, \V_2\otimes \V_3)$, and also as an element of $\setft{Hom}_{\field}((\V_2 \otimes \V_3)^*,\V_1).$ For a linear map $X \in \setft{Hom}_{\field}(\V_2^*,\V_3)$, let $X^+ \in \setft{Hom}_{\field}(\V_3,\V_2^*)$ be the Moore-Penrose pseudoinverse of $X$. 

\begin{figure}[htbp]
\begin{center}
\fbox{\parbox{0.98\textwidth}{

\begin{center}\textbf{\Large{Simultaneous diagonalization algorithm}}\end{center}
\textbf{Input:} A tensor $T \in \V_1 \otimes \V_2 \otimes \V_3$.
\begin{enumerate}
\item Choose $f, g \in (\field^{n_1})^*$ uniformly at random (according to e.g. the uniform spherical measure).
\item Let $R=\rank(T(f) T(g)^{+}).$ Compute the eigenvalues and eigenvectors of $T(f) T(g)^{+}$. If there are repeated non-zero eigenvalues, output: ``Fail.'' Otherwise, let $\{\lambda_1,\dots, \lambda_R\}$ be the non-zero eigenvalues of $T(f) T(g)^{+}$, and let $\{v_1,\dots, v_R\}$ be the (unique, up to scale) corresponding eigenvectors.
\item Compute the eigenvalues and eigenvectors of $T(f)^+ T(g)$. If the non-zero eigenvalues are not $\{\lambda_1^{-1},\dots,\lambda_R^{-1}\},$ then output: ``Fail." Otherwise, let $\{w_1,\dots, w_R\}$ be the corresponding eigenvectors.
\item Let $\{h_{i} : i \in [R]\} \subseteq (\field^{n_2} \otimes \field^{n_3})^*$ be any set of linear functionals that is dual to $\{v_a \otimes w_a : a \in [R]\}$, i.e. for which $h_a(v_b \otimes w_b) = \delta_{a,b}$ for all $a,b \in [R]$. Let $u_a=T(h_a) \in \field^{n_1}$ for all $a \in [R]$, viewing $T$ as a linear map $(\field^{n_2} \otimes \field^{n_3})^* \rightarrow \field^{n_1}$. If $u_a \in \spn\{u_b\}$ for some $a \neq b \in [R]$, then output: ``Fail.'' Otherwise, output: ``$\{u_a \otimes v_a \otimes w_a : a \in [R]\}$ is the unique tensor rank decomposition of $T$.''
\end{enumerate}
}
}
\end{center}
\end{figure}

\begin{fact}[Correctness of the simultaneous diagonalization algorithm]\label{fact:jennrich}
Let $T \in \field^{n_1}\otimes \field^{n_2} \otimes \field^{n_3}$ be a tensor admitting a decomposition of the form $\{u_a \otimes v_a \otimes w_a : a \in [R]\}$, where (i) $\{v_1,\dots, v_R\}$ is linearly independent, (ii) $\{w_1,\dots, w_R\}$ is linearly independent, and (iii) $u_a \notin \spn\{u_b\}$ for all $a \neq b \in [R]$ i.e., $\{u_1, \dots, u_R\}$ has Kruskal rank at least $2$. Then this is the unique tensor rank decomposition of $T$, and with probability 1 over the choice of $f,g \in (\field^{n_1})^*$ in Step 1, the simultaneous diagonalization algorithm outputs ``$\{u_a \otimes v_a \otimes w_a : a \in [R]\}$ is the unique tensor rank decomposition of $T$."
\end{fact}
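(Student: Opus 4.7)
The plan is to exploit the assumed decomposition $T = \sum_a u_a \otimes v_a \otimes w_a$ to write $T(f), T(g)$ in matrix form, analyze the eigenstructure of $T(f) T(g)^+$ and $T(f)^+ T(g)$, verify that the algorithm's success conditions hold with probability $1$, and confirm that Step 4 recovers the $u_a$ correctly. The uniqueness of the decomposition at rank $R$ will then follow from Kruskal's theorem.

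Assemble the decomposition as matrices: put $V = [v_1\mid\cdots\mid v_R] \in \field^{n_2\times R}$ and $W = [w_1\mid\cdots\mid w_R] \in \field^{n_3\times R}$. By hypothesis both have full column rank $R$. Viewing $T$ as a linear map $(\field^{n_1})^* \to \field^{n_2\times n_3}$, we have $T(f) = V D_f W^\t$ where $D_f = \diag{f(u_1),\dots, f(u_R)}$, and similarly $T(g) = V D_g W^\t$. With probability $1$ over uniformly random $f,g$: (i) all $f(u_a), g(u_a)$ are nonzero (each is a hyperplane avoidance), so $D_f, D_g$ are invertible and $\rank(T(f)) = \rank(T(g)) = R$; and (ii) the ratios $\lambda_a := f(u_a)/g(u_a)$ are pairwise distinct, because the Kruskal-rank-$\geq 2$ hypothesis on $\{u_a\}$ ensures no two $u_a$ are parallel, so $\{(f,g) : f(u_a)g(u_b) = f(u_b)g(u_a)\}$ is a proper algebraic subvariety for each $a \neq b$ and the finite union has measure zero.

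Under these conditions, the pseudoinverse formula $T(g)^+ = W(W^\t W)^{-1} D_g^{-1}(V^\t V)^{-1} V^\t$ (verifiable directly from the Moore--Penrose axioms, using the full column rank of $V$ and $W$) gives
\ba
T(f) T(g)^+ = V \Lambda V^+, \qquad T(f)^+ T(g) = W(W^\t W)^{-1} \Lambda^{-1} W^\t,
\ea
where $V^+ = (V^\t V)^{-1} V^\t$ and $\Lambda = \diag{\lambda_1,\dots,\lambda_R}$. Since $V^+ V = I_R$, the first matrix has exactly the nonzero eigenvalues $\lambda_1,\dots,\lambda_R$ with corresponding eigenvectors $v_1,\dots,v_R$ (unique up to scale by distinctness), so Step 2 succeeds and recovers the $v_a$ up to scale. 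The identity $(W^\t W)^{-1} W^\t w_a = e_a$ likewise shows that $T(f)^+ T(g)$ has eigenvectors $w_a$ with eigenvalues $\lambda_a^{-1}$, so Step 3 succeeds, and the pairing $(v_a, w_a)$ is unambiguous via reciprocal eigenvalues.

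For Step 4, the vectors $\{v_a \otimes w_a\}_{a \in [R]}$ are linearly independent, since $\sum_a c_a v_a w_a^\t = V \diag{c_1,\dots,c_R} W^\t = 0$ forces $c = 0$ by the full column rank of $V, W$; hence a dual basis $\{h_a\}$ exists. Viewing $T$ as a map $(\field^{n_2}\otimes \field^{n_3})^* \to \field^{n_1}$, linearity yields $T(h_a) = \sum_b u_b\, h_a(v_b \otimes w_b) = u_a$, so the $u_a$ are recovered exactly (any nonzero rescalings chosen in Steps 2--3 are absorbed consistently across the three tensor factors). The Kruskal hypothesis $u_a \notin \spn\{u_b\}$ for $a \neq b$ rules out failure at this step. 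Uniqueness of the decomposition at rank $R$ follows from Kruskal's theorem: the Kruskal ranks satisfy $k_U + k_V + k_W \geq 2 + R + R = 2R + 2$. The main technical care lies in the pseudoinverse computation, which collapses cleanly thanks to the full-column-rank structure of $V$ and $W$; the probabilistic conditions on $(f,g)$ are routine genericity arguments.
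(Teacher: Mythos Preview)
Your approach mirrors the paper's: write the slices in factored form $T(f) = VD_fW^\t$, analyze the eigenstructure of $T(f)T(g)^+$ and $T(f)^+T(g)$, recover the $u_a$ via dual functionals, and invoke Kruskal's theorem for uniqueness. The paper's proof is terser (it simply asserts the eigenvalues and eigenvectors without writing out the pseudoinverse), so your explicit matrix computations are a natural expansion of the same argument.

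There is, however, a concrete slip in your eigenvector claim for $T(f)^+T(g)$. From your own formula $T(f)^+T(g) = W(W^\t W)^{-1}\Lambda^{-1}W^\t$, applying to $w_a = We_a$ yields $W(W^\t W)^{-1}\Lambda^{-1}(W^\t W)e_a$, which is \emph{not} $\lambda_a^{-1}w_a$ unless $\Lambda^{-1}$ commutes with $W^\t W$. The identity $(W^\t W)^{-1}W^\t w_a = e_a$ you invoke does not help, since $\Lambda^{-1}$ sits between $(W^\t W)^{-1}$ and $W^\t$ in the product. The actual right eigenvectors for eigenvalue $\lambda_a^{-1}$ are the columns of $W(W^\t W)^{-1}$; the $w_a$ are instead the \emph{left} eigenvectors of $T(f)^+T(g)$ (equivalently, right eigenvectors of its transpose), so a correct version of Step~3 should transpose or compute left eigenvectors. (A secondary technicality: over $\complex$ the Moore--Penrose pseudoinverse uses conjugate transpose, so your formula should carry $V^*, W^*$ rather than $V^\t, W^\t$.) To be fair, the paper's own proof is equally informal on this point---it too asserts without computation that the eigenvectors of $T(f)^+T(g)$ are $\{w_1,\dots,w_R\}$---so this is a shared looseness rather than a departure from the paper's argument.
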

In particular, Fact~\ref{fact:jennrich} shows that for any tensor $T \in \field^{n_1}\otimes \field^{n_2} \otimes \field^{n_3}$ admitting a decomposition of the form $\{u_a \otimes v_a \otimes w_a : a \in [R]\}$, where $\{u_1,\dots, u_R\}, \{v_1,\dots,v_R\},$ and $\{w_1,\dots,w_R\}$ are all linearly independent, this is the unique tensor rank decomposition of $T$, and it is computed by the simultaneous diagonalization algorithm. It also shows that, when $n_1 \geq 2$, the simultaneous diagonalization algorithm computes the (unique) tensor rank decomposition of generically chosen tensors in $\field^{n_1}\otimes \field^{n_2} \otimes \field^{n_3}$ of tensor rank at most $\min\{n_2,n_3\}$.
\begin{proof}[Proof of Fact~\ref{fact:jennrich}]
The fact that $\{u_a \otimes v_a \otimes w_a : a \in [R]\}$ is the unique tensor rank decomposition of $T$ follows from Jennrich's theorem~\cite{harshman1970foundations, Har72} (or more generally, Kruskal's theorem, see~\cite{kruskal1977three} or~\cite{tensor}). If $T$ admits such a decomposition, then
\ba
T=\sum_{a\in [R]} u_a \otimes v_a \otimes w_a,
\ea
so the eigenvalues of $T(f) T(g)^{+}$ are
\ba
\Big\{\frac{f(u_a)}{g(u_a)} : a \in [R]\Big\},
\ea
which are clearly distinct for generically chosen $f,g \in (\field^{n_1})^*$, since $u_a \notin \spn\{u_b\}$ for all $a \neq b \in [R]$. The corresponding eigenvectors are $\{v_1,\dots, v_R\}$. Similarly, the eigenvalues of $T(f)^+ T(g)$ are the reciprocals:
\ba
\Big\{\frac{g(u_a)}{f(u_a)} : a \in [R]\Big\},
\ea
with corresponding eigenvectors $\{w_1,\dots, w_R\}$. It is also clear that $u_a=T(h_a)$, so the simultaneous diagonalization algorithm outputs ``$\{u_a \otimes v_a \otimes w_a : a \in [R]\}$ is the unique tensor rank decomposition of $T$." This completes the proof.
\end{proof}

\section{The algorithm for computing $\U \cap \X$}\label{sec:algorithm}

Suppose we are handed a basis $\{u_1,\dots, u_{R}\}$ for an $R$-dimensional linear subspace $\U \subseteq \V$, and we wish to describe the intersection of $\U$ with a conic variety $\X$. In this section, we propose an algorithm that (if it does not output ``Fail"), either certifies $\U \cap \X=\{0\}$ (in which case we will say that $\U$ \textit{trivially intersects} $\X$), or else finds all the elements of $\U \cap \X$, provided that there are at most $R$ of them up to scalar multiples. Later on, in Section~\ref{sec:simplified} we prove that the algorithm does not return ``Fail'' almost surely under the conditions of Theorem~\ref{thm:intro:intersection}.

Since $\X$ is a conic variety, there exists a positive integer $d$ and a finite set of homogeneous degree-$d$ polynomials $f_1,\dots, f_p \in S^d(\V^*)$ that cut out $\X$.
Let $I=\langle f_1,\dots, f_p \rangle$ be the ideal generated by $f_1,\dots, f_p$, and $I_d=\spn\{f_1,\dots, f_p\}\subseteq S^d(\V^*)$.

Correctness of our algorithm relies on the following two observations: Observation~\ref{obs:cert}, a sufficient condition for $\U$ to trivially intersect $\X$; and Observation~\ref{obs:find}, a sufficient condition for there to be only $s \leq R$ elements of $\U \cap \X$, up to scalar multiples.
\begin{obs}\label{obs:cert}
If $S^d(\U) \cap I_d^{\perp}=\{0\}$, then $\U \cap \X=\{0\}$.
\end{obs}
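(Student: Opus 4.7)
The plan is to prove the contrapositive: assuming there exists a nonzero $v \in \U \cap \X$, I will exhibit a nonzero element of $S^d(\U) \cap I_d^{\perp}$, namely $v^{\otimes d}$. The first inclusion $v^{\otimes d} \in S^d(\U)$ is immediate from $v \in \U$ together with the definition of the symmetric power, and $v^{\otimes d} \neq 0$ since $v \neq 0$. So the whole content is in checking $v^{\otimes d} \in I_d^{\perp}$.

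For this I would invoke the canonical isomorphism $\field[x_1,\dots, x_n]_d \cong S^d(\V^*)$ described in the technical overview. Under this identification, for any vector $v \in \V$ and any homogeneous degree-$d$ polynomial $f \in S^d(\V^*)$, the natural pairing $\langle f, v^{\otimes d}\rangle$ between $S^d(\V^*)$ and $S^d(\V)$ equals (up to a fixed nonzero combinatorial constant depending only on $d$) the evaluation $f(v)$. This is most transparent when $f = \ell_1 \cdots \ell_d$ is a product of linear forms, since under the symmetrization map its image pairs with $v^{\otimes d}$ to give $\tfrac{1}{d!}\sum_{\sigma} \ell_{\sigma(1)}(v)\cdots \ell_{\sigma(d)}(v) = \ell_1(v)\cdots \ell_d(v) = f(v)$, and the general case follows by linearity.

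Applying this to each generator $f_i$ of $I_d$: since $v \in \X$ we have $f_i(v) = 0$, hence $\langle f_i, v^{\otimes d}\rangle = 0$ for every $i$, and therefore $v^{\otimes d}$ lies in the orthogonal complement of $\spn\{f_1,\dots,f_p\} = I_d$. Combining with $v^{\otimes d} \in S^d(\U)$ produces a nonzero element of $S^d(\U)\cap I_d^{\perp}$, contradicting the hypothesis. I do not expect any real obstacle here; the only point to pin down carefully is the compatibility of the polynomial-evaluation map with the duality pairing on $S^d(\V^*) \times S^d(\V)$, which is entirely routine and relies on $\setft{Char}(\field)=0$ so that the symmetrization normalization is well defined.
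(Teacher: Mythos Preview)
Your argument is correct and is essentially the same as the paper's proof, which simply states that for any $u \in \U \cap \X$ one has $u^{\otimes d} \in S^d(\U) \cap I(\X)_d^{\perp}$. You have just unpacked in more detail why the pairing $\langle f_i, v^{\otimes d}\rangle$ equals $f_i(v)$, which the paper takes for granted.
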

\begin{proof}
The observation is immediate from the fact that for any vector $u \in \U \cap \X$, it holds that $u^{\otimes d} \in S^d(\U) \cap I(\X)_d^{\perp}$.
\end{proof}

\bnote{Remark added:}
\begin{remark}[Relation to Hilbert's projective nullstellensatz over $\complex$]\label{rmk:nullstellensatz}
By Hilbert's projective nullstellensatz, if $\field=\complex$ then $\U \cap \X=\{0\}$ if and only if there exists a positive integer $D$ for which $I(\U)_d+I_d = S^D(\V^*)$ (and furthermore, $D$ can be chosen less than $d^{O(n)}$; see e.g. \cite{kollar1988sharp}). If $\field=\real$, then $I(\U)_d+I_d = S^D(\V^*)$ still implies $\U \cap \X=\{0\}$, but the reverse implication no longer holds in general. Dualizing, note that $I(\U)_d+I_d = S^D(\V^*)$ if and only if $S^d(\U) \cap I_d^{\perp}=\{0\}$, so this is just a degree-$d$ \textit{Nullstellensatz certificate} for checking whether $\U\cap \X=\{0\}$.
\end{remark}


\begin{obs}\label{obs:find}
If $d \geq 2$ and there exists a set of linearly independent vectors $v_1,\dots, v_R \in \V$ for which $S^d(\U) \cap I_d^{\perp}=\spn\{v_1^{\otimes d},\dots, v_s^{\otimes d}\},$ then the only elements of $\U \cap \X$ are $v_1,\dots, v_s$ (up to scalar multiples).
\end{obs}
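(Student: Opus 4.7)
The plan is to prove both directions of the claim. For the forward direction, I would verify that each $v_i$ with $i \in [s]$ lies in $\U \cap \X$. The containment $v_i^{\otimes d} \in S^d(\U)$ forces $v_i \in \U$: pairing with any $f \in \U^{\perp} \subseteq \V^*$ gives $f(v_i)^d = \langle v_i^{\otimes d}, f^{\otimes d}\rangle = 0$, since $f^{\otimes d}$ vanishes on $S^d(\U)$, so $v_i$ annihilates $\U^{\perp}$ and hence lies in $\U$. The containment $v_i^{\otimes d} \in I_d^{\perp}$ immediately gives $f_k(v_i) = \langle v_i^{\otimes d}, f_k\rangle = 0$ for every generator $f_k$, so $v_i \in \X$.

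For the reverse inclusion, the key step is to take any nonzero $u \in \U \cap \X$, observe that $u^{\otimes d} \in S^d(\U) \cap I_d^{\perp} = \spn\{v_1^{\otimes d}, \dots, v_s^{\otimes d}\}$, and write $u^{\otimes d} = \sum_{i=1}^s c_i v_i^{\otimes d}$ for some scalars $c_1, \dots, c_s$. I would then extract $u$ from this identity using the hypothesis that $v_1, \dots, v_R$ are linearly independent in $\V$: choose dual linear functionals $f_j \in \V^*$ with $f_j(v_i) = \delta_{ij}$ for $i,j \in [s]$ (which exist by extending $\{v_1, \dots, v_R\}$ to a basis of $\V$), and for each $j \in [s]$ contract both sides of the identity with $f_j$ in $d-1$ of the $d$ tensor factors. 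The left-hand side becomes $f_j(u)^{d-1}\, u$, while on the right-hand side each term $v_i^{\otimes d}$ contracts to $f_j(v_i)^{d-1} v_i = \delta_{ij}^{\,d-1}\, v_i$, collapsing the sum to $c_j v_j$. Thus $f_j(u)^{d-1}\, u = c_j v_j$ for each $j \in [s]$, and whenever $c_j \neq 0$ the vector $u$ must be a scalar multiple of $v_j$. Since $u \neq 0$ forces some $c_j \neq 0$, and the linear independence of $\{v_1, \dots, v_s\}$ prevents $u$ from being proportional to two distinct $v_j$'s, $u$ must be a scalar multiple of exactly one $v_j$.

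The only mildly delicate step is the contraction identity, which crucially requires $d \geq 2$ so that $d-1 \geq 1$ factors are actually consumed; for $d=1$ the hypothesis would reduce to $\U \cap \X \subseteq \spn\{v_1, \dots, v_s\}$, which need not be covered by scalar multiples of the individual $v_j$'s. Beyond that, every step is a routine multilinear-algebra manipulation, and no algebro-geometric machinery beyond dualizing and contracting is required.
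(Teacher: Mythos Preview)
Your proof is correct. The paper's own proof is a one-line appeal to external results (a theorem from the tensor literature stating that, for linearly independent $v_1,\dots,v_s$, the only symmetric rank-one tensors in $\spn\{v_1^{\otimes d},\dots,v_s^{\otimes d}\}$ are scalar multiples of the $v_i^{\otimes d}$). You instead give a direct, fully self-contained argument: the dual-functional-and-contraction trick $f_j(u)^{d-1} u = c_j v_j$ is precisely an elementary proof of that cited fact, so your route is more explicit and avoids black-boxing. You also prove the forward containment $v_i \in \U \cap \X$, which the paper does not address (and which is indeed implicit in how the observation is used downstream). Both approaches hinge on the same idea---that $u^{\otimes d}$ landing in $\spn\{v_i^{\otimes d}\}$ forces $u$ to be parallel to some $v_i$---but yours supplies the mechanism while the paper outsources it.
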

\begin{proof}
By, for example,~\cite[Theorem 3.2]{1751-8121-48-4-045303} or~\cite[Corollary 19]{tensor}, it follows from linear independence of $v_1,\dots, v_R$ that $v_1^{\otimes d}, \dots, v_s^{\otimes d}$ are the only symmetric product tensors in $S^d(\U) \cap I_d^{\perp}$ up to scale, so $v_1,\dots, v_s$ are the only elements of $\U \cap \X$ up to scale.
\end{proof}

This inspires Algorithm~1 for computing the intersection $\U\cap \X$.

\begin{figure}[htbp]
\begin{center}
\fbox{\parbox{0.98\textwidth}{

\begin{center}\textbf{\Large{Algorithm 1: Computing $\U\cap \X$.}}\end{center}
\textbf{Input:} A basis $\{u_1,\dots, u_{R}\}$ for a linear subspace $\U \subseteq \V$, and a collection of homogeneous degree-$d$ polynomials $f_1,\dots, f_p$ that cut out a conic variety $\X\subseteq \V$.
\begin{enumerate}
\item Let $I_d=\spn\{f_1,\dots, f_p\} \subseteq S^d(\V^*)$, and compute a basis $\{P_1,\dots, P_s\} \subseteq S^d(\V)$ for $S^d(\U) \cap I_d^{\perp}$. If $S^d(\U) \cap I_d^{\perp}=0$, then $\U \cap \X=\{0\}$. Output: ``$\U$ trivially intersects $\X$."
\item If $s > R$, then output: ``Fail." Otherwise, construct the tensor
\ba
T=\sum_{i=1}^s e_i \otimes P_i \in \field^{s} \otimes S^d(\V).
\ea
Regarding $T$ as a 3-mode tensor
\ba
T \in \field^s \otimes \V \otimes (\V)^{\otimes d-1},
\ea
run the simultaneous diagonalization algorithm on $T$. If the simultaneous diagonalization algorithm outputs a decomposition of $T$ of the form $\{z_i \otimes v_i^{\otimes d} : i \in [s]\}$ for some linearly independent $z_1,\dots, z_s \in \field^{s}$ and linearly independent $v_1,\dots, v_s \in \V$, then output: ``The only elements of $\U \cap \X$ are $\{v_1,\dots, v_s\}$ (up to scale)." Otherwise, output: ``Fail."
%
%
\end{enumerate}
}
}
\end{center}
\end{figure}

By the above observations, this algorithm is correct:

\begin{fact}[Correctness of Algorithm 1]\label{fact:algorithm_correct}
If Algorithm 1 outputs {\em ``$\U$ trivially intersects $\X$"} then $\U$ indeed trivially intersects $\X$. If Algorithm 1 outputs {\em ``The only elements of $\U \cap \X$ are $\{v_1,\dots, v_s\}$ (up to scale)"} then the only elements of $\U \cap \X$ are indeed $\{v_1,\dots, v_s\}$ (up to scale).
\end{fact}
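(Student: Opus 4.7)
The plan is to handle each of the two assertions separately by invoking the corresponding observation from this section. The first assertion is essentially immediate: Algorithm 1 outputs \textquotedblleft$\U$ trivially intersects $\X$\textquotedblright{} only when $S^d(\U) \cap I_d^{\perp} = \{0\}$, and Observation~\ref{obs:cert} then directly gives $\U \cap \X = \{0\}$.

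For the second assertion, the strategy is to verify the hypothesis of Observation~\ref{obs:find}, namely that $S^d(\U) \cap I_d^{\perp} = \spn\{v_1^{\otimes d},\dots, v_s^{\otimes d}\}$ when Algorithm 1 outputs \textquotedblleft The only elements of $\U \cap \X$ are $\{v_1,\dots, v_s\}$ (up to scale).\textquotedblright{} To this end, I would view $T = \sum_{i=1}^{s} e_i \otimes P_i \in \field^s \otimes S^d(\V)$ as a linear map $T\colon (\field^s)^* \to S^d(\V)$. Since $\{P_1,\dots, P_s\}$ is by construction a basis of $S^d(\U) \cap I_d^{\perp}$, the image of $T$ is precisely $S^d(\U) \cap I_d^{\perp}$. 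On the other hand, by the acceptance condition of Algorithm 1, the simultaneous diagonalization subroutine returned a decomposition $T = \sum_{i=1}^{s} z_i \otimes v_i^{\otimes d}$ with $z_1,\dots, z_s$ linearly independent in $\field^s$ (and hence a basis of $\field^s$); evaluating $T$ on the dual basis of $\{z_1,\dots, z_s\}$ recovers the vectors $v_1^{\otimes d},\dots, v_s^{\otimes d}$, so the image of $T$ also equals $\spn\{v_1^{\otimes d},\dots, v_s^{\otimes d}\}$. Equating these two descriptions of $\operatorname{im}(T)$ yields the required equality.

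With this equality in hand, the linear independence of $v_1,\dots, v_s$ (again part of the acceptance condition) allows me to apply Observation~\ref{obs:find} to conclude that $v_1,\dots, v_s$ are the only elements of $\U \cap \X$ up to scalar multiples. I do not foresee any real obstacle here; the only step that requires any care is the observation that any two bases of the image of $T$ span the same subspace of $S^d(\V)$, which is how the two expressions of $\operatorname{im}(T)$ get identified.
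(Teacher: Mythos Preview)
Your proposal is correct and follows essentially the same approach as the paper: both arguments invoke Observation~\ref{obs:cert} for the first assertion, and for the second assertion both identify the image $T((\field^s)^*)$ simultaneously with $S^d(\U)\cap I_d^\perp$ and with $\spn\{v_1^{\otimes d},\dots,v_s^{\otimes d}\}$, then appeal to Observation~\ref{obs:find}. You simply spell out the image computation in slightly more detail than the paper does.
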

\begin{proof}
The first sentence follows directly from Observation~\ref{obs:cert}. For the second sentence, if this output occurs then $T((\field^s)^*)=S^d(\U)\cap I_d^{\perp}=\spn\{v_1^{\otimes d},\dots, v_s^{\otimes d}\}$, and $\{v_1,\dots, v_s\}$ is linearly independent. This implies by Observation~\ref{obs:find} that $v_1,\dots, v_s$ are the only elements of $\U \cap \X$ (up to scale).
\end{proof}



\section{Proof of the genericity guarantee}\label{sec:simplified}
In this section, we prove the following theorem, which immediately implies our main result Theorem~\ref{thm:intro:intersection}, a genericity guarantee for Algorithm~1. For a positive integer $d$, an $\field$-vector space $\V$ and vectors $v_1,\dots, v_d \in \V$, we define $v_1 \vee \dots \vee v_d=P_{\V,d}(v_1\otimes \dots \otimes v_d)\subseteq S^d(\V)$.
\begin{theorem}\label{thm:generic_cert:simplified}
Let $\V = \field^n$, let $d, R$ be positive integers, and let $s \in \{0,1,\dots, R\}$ be an integer. Let $\K \subseteq S^d(\V)$ be a linear subspace, and let $\X_1,\dots,\X_R \subseteq \V$ be conic varieties for which $\X_1,\dots, \X_s$ are non-degenerate of order $d-1$ and $\X_{s+1},\dots, \X_R$ are non-degenerate of order $d$. If
\ba\label{eq:ranklbsimplified}
\codim(\K) \geq R(d-1)! \binom{n+d-2}{d-1},
\ea
then for a generic choice of $v_1 \in \X_1,\dots,v_R\in \X_R$, it holds that
\ba\label{eq:setofvectors2:simplified}
\spn\{v_{a_1} \vee \dots \vee v_{a_d} : a \in [R]^{\vee d} \setminus \Delta_s\} \cap \K = \{0\},
\ea
where $\Delta_s=\{(1,\dots, 1),\dots, (s,\dots, s)\}$.
\end{theorem}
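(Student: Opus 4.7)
The plan is to prove this by strong induction on $d$. I first reformulate~\eqref{eq:setofvectors2:simplified} into a sequence of non-containments, one per tuple in $[R]^{\vee d}\setminus\Delta_s$. Fix a total order $\succ$ on $[R]^{\vee d}$ whose largest elements are the diagonal tuples $(1,\dots,1)\succ(2,\dots,2)\succ\dots\succ(s,\dots,s)$, and which is otherwise designed so that truncating a common prefix of length $\ell$ preserves the induced order on $[R]^{\vee(d-\ell)}$ (for instance, the lexicographic extension of this diagonal-first ranking). By examining the $\succ$-maximal tuple appearing in any hypothetical nonzero element of the intersection in~\eqref{eq:setofvectors2:simplified}, the statement reduces to proving that for every $a=(i_1,\dots,i_d)\in [R]^{\vee d}\setminus\Delta_s$,
\[
v_{i_1}\vee\dots\vee v_{i_d}\;\notin\;W_a + \K,\qquad W_a\;:=\;\spn\{v_{b_1}\vee\dots\vee v_{b_d} : b\in [R]^{\vee d},\ b\succ a\},
\]
for generic $v_1\in \X_1,\dots,v_R\in \X_R$.

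The base case $d=1$ reads: generic $v_{s+1}\in \X_{s+1},\dots,v_R\in\X_R$ drawn from varieties spanning $\V$ (order-$1$ non-degeneracy) are linearly independent modulo $\K$. This is immediate from $\codim(\K)\geq R$. For $d\geq 2$, fix $a=(i_1,\dots,i_d)\in [R]^{\vee d}\setminus\Delta_s$ and split into two cases. If $i_1=\dots=i_d=i>s$ (the \emph{diagonal} case), the binomial estimate $|[R]^{\vee d}|=\binom{R+d-1}{d}\leq R(d-1)!\binom{n+d-2}{d-1}$ combined with~\eqref{eq:ranklbsimplified} shows that $W_a+\K$ is a proper subspace of $S^d(\V)$; non-degeneracy of $\X_i$ of order $d$ then gives $\spn\{v^{\otimes d}:v\in \X_i\}=S^d(\V)$, so a generic $v_i\in\X_i$ satisfies $v_i^{\otimes d}\notin W_a+\K$.

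In the remaining case, let $\ell\in\{1,\dots,d-1\}$ be the length of the longest constant initial run of $a$, so $i_1=\dots=i_\ell$ and $i_{\ell+1}>i_\ell$. I would reduce to a degree-$(d-\ell)$ instance by contracting with a functional $\phi\in\V^*$ satisfying $\phi(v_{i_1})=1$, applied $\ell$ times. Under $\phi^{\otimes\ell}\hook$, the symmetric tensor $v_{i_1}\vee\dots\vee v_{i_d}$ maps, up to a nonzero scalar, to $v_{i_{\ell+1}}\vee\dots\vee v_{i_d}\in S^{d-\ell}(\V)$; the space $W_a$ maps into the analogous span $W'_{a'}$ for $a'=(i_{\ell+1},\dots,i_d)$ (by the prefix-truncation property of $\succ$); and $\K$ maps to some $\K'\subseteq S^{d-\ell}(\V)$. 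Invoking Lemma~\ref{lemma:generic_hook:new}, a generic choice of $v_{i_1}$ (hence $\phi$) ensures $\codim(\K')\geq R(d-\ell-1)!\binom{n+d-\ell-2}{d-\ell-1}$, so the codimension hypothesis~\eqref{eq:ranklbsimplified} propagates to degree $d-\ell$. Since $\X_{i_1}$ being non-degenerate of order $d-1$ implies non-degeneracy of every order $\leq d-1$, the contracted statement matches the theorem's hypotheses in degree $d-\ell$, and the induction hypothesis yields the desired non-containment.

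The principal obstacle is the contraction step. Two design choices must cooperate: the total order $\succ$ must be engineered so that prefix-truncation sends $\succ$-predecessors to $\succ$-predecessors, a combinatorial constraint that goes well beyond simply placing the diagonal tuples at the top; and the quantitative Hook lemma (Lemma~\ref{lemma:generic_hook:new}) must guarantee that a generic contraction loses at most $R(d-1)!\binom{n+d-2}{d-1}-R(d-\ell-1)!\binom{n+d-\ell-2}{d-\ell-1}$ in codimension. The numerical threshold~\eqref{eq:ranklbsimplified} is calibrated precisely to survive this loss, after which the inductive reduction runs cleanly and the conclusion follows.
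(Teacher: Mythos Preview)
Your overall architecture---induction on $d$, a total order placing $\Delta_s$ at the top, splitting into diagonal and non-diagonal cases, and reducing the degree via contraction---matches the paper. But the contraction step contains a genuine gap.

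You propose to contract $\K\subseteq S^d(\V)$ with $\phi^{\otimes\ell}$ for $\phi\in\V^*$, obtaining $\K'=\phi^{\otimes\ell}\hook\K$, and then to invoke Lemma~\ref{lemma:generic_hook:new} to lower-bound $\codim(\K')$. This does not work: the lemma gives a \emph{lower} bound on the dimension of the image of a contraction, which for $\K$ itself would only \emph{upper}-bound $\codim(\K')$. In fact $\phi^{\otimes\ell}\hook\K$ can be all of $S^{d-\ell}(\V)$ even when $\codim(\K)$ is large (e.g.\ $d=2$, $\K=\spn\{e_i\vee e_j:i\neq j\}$ has codimension $n$ but $\phi\hook\K=\V$ for generic $\phi$). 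The correct move, which the paper makes, is to dualize: one contracts the orthogonal complement $\U_{-I}^\perp\subseteq S^d(\V^*)$ with $v_{i_1}^{\otimes\ell}$ (an element of $\V$, not $\V^*$), so that Lemma~\ref{lemma:generic_hook:new} lower-bounds $\dim(v_{i_1}^{\otimes\ell}\hook\U_{-I}^\perp)$, which is exactly the codimension of $(v_{i_1}^{\otimes\ell}\hook\U_{-I}^\perp)^\perp$. This is also precisely where the hypothesis that $\X_{i_1}$ is non-degenerate of order $d-1$ enters, since the lemma requires $v_{i_1}$ to be generic in such a variety.

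Two further points. First, the paper does not contract $\K$ alone but a larger subspace $\U_{-I}\supseteq\K$ that absorbs \emph{all} symmetric tensors touching any $v_j$ with $j\notin\{i_1,\dots,i_d\}$; this reduces the inductive instance to one with only $k-1=|\{i_{\ell+1},\dots,i_d\}|$ vectors rather than $R$, and the arithmetic in~\eqref{eq:ranklbsimplified} is calibrated to survive this coarser loss, not the one you state. Second, your ``prefix-truncation preserves $\prec$'' requirement is not quite what is needed; the paper's ordering (Definition~\ref{def:ordering}) ranks first by number of distinct indices, and the inductive reduction passes through the set $T(I)$ of tuples with the same support as $I$, handling separately those with $i_1$ repeated more than $\ell$ times (these land in a subspace $\W$ that is absorbed into $\U_{i_1}^{(d-\ell)}$ after contraction). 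A minor slip: in your definition of $W_a$ you wrote $b\succ a$ where you need $b\prec a$.
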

Before proving this theorem, we observe that it implies Theorem~\ref{thm:intro:intersection}.

\begin{cor}[Theorem~\ref{thm:intro:intersection}]\label{cor:generic_simplified}
Let $\V$ be an $\field$-vector space of dimension $n$, let $s \in \{0,1,\dots, R\}$ be an integer, and let $\X \subseteq \V$ be a conic variety cut out by $p= \delta \binom{n+d-1}{d}$ linearly independent homogeneous degree-$d$ polynomials $f_1,\dots, f_p \in S^d(\V^*)$ for a constant $\delta \in (0,1)$. Suppose furthermore that $\X$ is non-degenerate of order $d-1$.
Then for a linear subspace $\U \subseteq \V$ of dimension
\begin{equation}\label{eq:R}
R \le \frac{p}{ (d-1)! \cdot \binom{n+d-2}{d-1}} = \frac{\delta}{d! } \cdot (n+d-1) 
\end{equation}
spanned by a generically chosen element $(v_1,\dots, v_s,v_{s+1},\dots, v_R) \in \X^{\times s} \times \V^{\times(R-s)}$, $\U$ has only $s$ elements in its intersection with $\X$ (up to scalar multiples), and, on input any basis of $\U$, Algorithm~1 correctly outputs ``$\U$ trivially intersects $\X$" if $s=0$ and ``The only elements of $\U \cap \X$ are $\{v_1,\dots, v_s\}$ (up to scale)" if $s >0$. When $s=0$, the statement holds even without the non-degeneracy assumption on $\X$.
\end{cor}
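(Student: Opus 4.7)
The plan is to reduce the corollary to Theorem~\ref{thm:generic_cert:simplified}, after which the algorithmic conclusions follow from the design of Algorithm~1 together with the correctness of the simultaneous diagonalization subroutine. Accordingly, I would first set $\K = I_d^{\perp} \subseteq S^d(\V)$, where $I_d = \spn\{f_1,\dots,f_p\}$, so that $\codim(\K) = p = \delta\binom{n+d-1}{d}$. Take $\X_1 = \dots = \X_s = \X$ (non-degenerate of order $d-1$ by hypothesis) and $\X_{s+1} = \dots = \X_R = \V$ (trivially non-degenerate of every order, since only the zero polynomial vanishes on all of $\V$). Using the identity $\binom{n+d-1}{d} = \frac{n+d-1}{d}\binom{n+d-2}{d-1}$, the rank bound $R \le \delta(n+d-1)/d!$ rearranges to
\[
\codim(\K) \;\ge\; R(d-1)!\binom{n+d-2}{d-1},
\]
which is exactly the hypothesis of Theorem~\ref{thm:generic_cert:simplified}.

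Next I would combine that theorem with a linear-independence argument. Since $\X$ is non-degenerate of order $d-1$ and $d \ge 2$, multiplying any degree-$1$ equation of $\X$ by a monomial of degree $d-2$ would produce a degree-$(d-1)$ equation; hence $\X$ has no degree-$1$ equations and $\spn(\X) = \V$. The rank bound forces $R \le n$, so a generic $(v_1,\dots,v_s,v_{s+1},\dots,v_R) \in \X^{\times s} \times \V^{\times (R-s)}$ consists of linearly independent vectors, in which case $\{v_{a_1} \vee \dots \vee v_{a_d} : a \in [R]^{\vee d}\}$ is a basis of $S^d(\U)$. Combining this with the vanishing~\eqref{eq:setofvectors2:simplified} yields
\[
S^d(\U) \cap I_d^{\perp} \;=\; \spn\{v_1^{\otimes d}, \dots, v_s^{\otimes d}\},
\]
the reverse inclusion being automatic because $v_i \in \X$ implies $v_i^{\otimes d} \in I_d^{\perp}$. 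Observation~\ref{obs:find} then implies that $v_1,\dots,v_s$ are the only elements of $\U \cap \X$ up to scale.

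Finally I would verify the output of Algorithm~1. When $s = 0$, Step~1 computes $S^d(\U) \cap I_d^{\perp} = \{0\}$ and outputs ``$\U$ trivially intersects $\X$''; this case invokes Theorem~\ref{thm:generic_cert:simplified} only with $\X_i = \V$, so the non-degeneracy hypothesis on $\X$ is not used, which handles the last sentence of the corollary. When $s \ge 1$, after a change of basis in $\field^s$ the tensor $T = \sum_{i=1}^s e_i \otimes P_i$ can be written $\sum_{i=1}^s z_i \otimes v_i \otimes v_i^{\otimes(d-1)}$ with $\{z_i\}$ linearly independent in $\field^s$. The factors $\{v_i\}$ and $\{v_i^{\otimes(d-1)}\}$ are linearly independent (the latter because the $v_i$ are and $d \ge 2$), and $\{z_i\}$ has Kruskal rank at least $2$, so Fact~\ref{fact:jennrich} guarantees that Step~2 correctly recovers $v_1,\dots,v_s$ with probability one. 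The real content of the corollary is compressed entirely into Theorem~\ref{thm:generic_cert:simplified}; once that vanishing is in hand, the remaining steps are a routine translation between algebraic and algorithmic language.
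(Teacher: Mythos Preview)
Your proposal is correct and follows essentially the same approach as the paper: set $\K=I_d^\perp$, apply Theorem~\ref{thm:generic_cert:simplified} with $\X_1=\dots=\X_s=\X$ and $\X_{s+1}=\dots=\X_R=\V$, deduce $S^d(\U)\cap I_d^\perp=\spn\{v_1^{\otimes d},\dots,v_s^{\otimes d}\}$, and then invoke Observation~\ref{obs:find} and Fact~\ref{fact:jennrich}. The only cosmetic differences are that you justify generic linear independence of the $v_i$ explicitly (via $\spn(\X)=\V$ and $R\le n$), and you argue the $z_i$ are linearly independent via a change-of-basis argument, whereas the paper writes $z_i=T(f_i^{\otimes d})$ using functionals dual to the $v_i$; both arguments are fine.
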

\begin{proof}
Let $I_d=\spn\{f_1,\dots, f_p\} \subseteq S^d(\V^*)$. Since $\{f_1,\dots, f_p\}$ is linearly independent, we have that $\dim(I_d)=\codim(I_d^{\perp})=p$. By Theorem~\ref{thm:generic_cert:simplified}, for a generically chosen tuple of vectors $(v_1,\dots, v_{R}) \in \X^{\times s} \times \V^{\times (R-s)}$, it holds that $v_1,\dots, v_R$ are linearly independent and
\ba
\spn\{v_{a_1} \vee \dots \vee v_{a_d} : a \in [R]^{\vee d} \setminus \Delta_s\} \cap I_d^{\perp} = \{0\}.
\ea
In particular, $\{v_1^{\otimes d},\dots, v_R^{\otimes d}\}$ forms a basis for $S^d(\U)\cap I_d^{\perp}$. By Observation~\ref{obs:find}, $v_1,\dots, v_s$ are the only elements of $\U \cap \X$ up to scale. In Algorithm~1, the simultaneous diagonalization algorithm is applied to $T=\sum_{i=1}^s e_i \otimes P_i$, where $\{P_1,\dots, P_s\}$ is any basis of $S^d(\U)\cap I_d^{\perp}$. Let $f_1,\dots, f_R \in \V^*$ be dual to $v_1,\dots, v_R$, i.e. $f_i(v_j)=\delta_{i,j}$. Then $T=\sum_{i=1}^s z_i \otimes v_i^{\otimes d}$, where $z_i=T(f_i^{\otimes d})\in \field^s$. By Fact~\ref{fact:jennrich} this is the unique tensor rank decomposition of $T$ and it is recovered by the simultaneous diagonalization algorithm. It remains only to prove the last sentence of the corollary, which follows immediately from Theorem~\ref{thm:generic_cert:simplified} with $\K=I_d^{\perp}$ and $\X_1=\dots=\X_R=\V$. This completes the proof.
\end{proof}

Now we prove Theorem~\ref{thm:generic_cert:simplified}, for which we require the following lemma. For a positive integer $d$, an $\field$-vector space $\V$ and vectors $v_1,\dots, v_d \in \V$, we define $v_1 \vee \dots \vee v_d=P_{\V,d}(v_1\otimes \dots \otimes v_d)$. For a symmetric tensor $u \in S^d(\V^*)$, an integer $\ell \in [d-1]$, and a vector $v \in \V$, we define $v^{\otimes \ell} \hook u \in S^{d-\ell}(\V^*)$ to be the contraction of $u$ with $v^{\otimes \ell}$ in any $\ell$ of the $d$ factors. (The output will be the same regardless of which $\ell$ factors are chosen. We will pick the first $\ell$ factors for concreteness.)

\begin{lemma}\label{lemma:generic_hook:new}
Let $n\in \natural$ be a positive integer, let $d \geq 2$ be an integer, let $\ell \in [d-1]$, let $\X \subseteq \V = \field^n$ be an irreducible variety that is non-degenerate of order $d-1$, and let $\U \subseteq S^d(\V^*)$ be a linear subspace. Then for a generically chosen vector $v \in \X$, it holds  that $v^{\otimes \ell} \hook \U \subseteq S^{(d-\ell)}(\V^*)$, and
\ba\label{eq:generic_hook}
\dim(v^{\otimes \ell}\hook \U)\geq \frac{1}{\binom{n+\ell-1}{\ell}} \cdot \dim(\U).
\ea
\end{lemma}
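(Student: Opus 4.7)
\emph{Plan.} The goal is to lower bound the rank of the linear map $C_v : \U \to S^{d-\ell}(\V^*)$ defined by $C_v(u) = v^{\otimes \ell} \hook u$, for a generic $v \in \X$. That the image lies in $S^{d-\ell}(\V^*)$ is immediate from the symmetry of $u$. Set $k := \binom{n+\ell-1}{\ell} = \dim S^\ell(\V)$. Since the rank of a family of linear maps depending polynomially on a parameter is lower semicontinuous, and $\X$ is irreducible, the quantity $\dim(v^{\otimes \ell} \hook \U)$ attains its maximum $D_{\max}$ on a Zariski open dense subset of $\X$. Thus it suffices to prove $\dim(\U) \leq k \cdot D_{\max}$.

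First I will upgrade the non-degeneracy hypothesis. I claim that non-degeneracy of $\X$ in order $d-1$ implies non-degeneracy in every order $1 \leq \ell \leq d-1$: indeed, if $0 \neq f \in I(\X)_\ell$, then for any $0 \neq g \in S^{d-1-\ell}(\V^*)$ the product $fg \in S^{d-1}(\V^*)$ is nonzero (the polynomial ring is an integral domain) and vanishes on $\X$, contradicting $I(\X)_{d-1} = \{0\}$. Hence $\spn\{v^{\otimes \ell} : v \in \X\} = S^\ell(\V)$, and I may select $v_1, \dots, v_k \in \X$ whose $\ell$-th powers $\{v_i^{\otimes \ell}\}_{i=1}^k$ form a basis of $S^\ell(\V)$.

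The crucial step is to show $\bigcap_{i=1}^k \ker(C_{v_i}) = \{0\}$. Suppose $u \in \U$ satisfies $v_i^{\otimes \ell} \hook u = 0$ for all $i$. By linearity, $w \hook u = 0$ for every $w \in S^\ell(\V)$. For any $w \in S^\ell(\V)$ and $w' \in S^{d-\ell}(\V)$, the duality pairing gives $\langle w \vee w', u \rangle = \langle w', w \hook u \rangle = 0$. Since the multiplication map $S^\ell(\V) \otimes S^{d-\ell}(\V) \to S^d(\V)$ is surjective, $u$ pairs to zero against all of $S^d(\V)$, forcing $u = 0$. Consequently the map $\U \to \bigoplus_{i=1}^k C_{v_i}(\U)$ is injective, so
\begin{equation*}
\dim(\U) \;\leq\; \sum_{i=1}^k \dim\bigl(C_{v_i}(\U)\bigr) \;\leq\; k \cdot D_{\max},
\end{equation*}
and combining with the semicontinuity observation yields $\dim(v^{\otimes \ell} \hook \U) \geq \dim(\U)/k$ on a Zariski open dense subset of $\X$. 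The only mildly delicate ingredient is the non-degeneracy transfer from order $d-1$ down to order $\ell$, which is essentially the integral-domain property of the polynomial ring; everything else is a kernel-intersection/rank argument via a well-chosen basis of $S^\ell(\V)$.
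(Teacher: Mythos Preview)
Your proof is correct and follows essentially the same approach as the paper: both select $v_1,\dots,v_k \in \X$ with $\{v_i^{\otimes \ell}\}$ a basis of $S^\ell(\V)$, establish $\dim(\U) \leq \sum_i \dim(v_i^{\otimes \ell} \hook \U)$, and conclude by pigeonhole/semicontinuity. The only cosmetic differences are that you phrase the key inequality via injectivity of $u \mapsto (C_{v_i}(u))_i$ (the paper instead writes $u=\sum u_i \otimes (v_i^{\otimes \ell}\hook u)$ using the dual basis and reads off the containment $\U \subseteq \sum_i \spn\{u_i\}\otimes(v_i^{\otimes \ell}\hook \U)$), and you explicitly justify the transfer of non-degeneracy from order $d-1$ down to order $\ell$, which the paper uses without comment.
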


\begin{proof}
The fact that $v^{\otimes \ell} \hook \U \subseteq S^{(d-\ell)}(\V^*)$ is obvious, so it suffices to prove the dimension bound. Since the set of $v \in \X$ that satisfy~\eqref{eq:generic_hook} is clearly Zariski open, it suffices prove that it is non-empty, i.e. that there exists a single $v \in \X$ that satisfies~\eqref{eq:generic_hook}. Since $\X$ is non-degenerate of order $d-1$, there exists $v_1,\dots, v_{m} \in \X$, where $m=\binom{n+\ell-1}{\ell}$, for which $\{v_i^{\otimes \ell} : i \in [m]\}$ forms a basis of $S^{\ell}(\V)$. Let $\{u_1,\dots,u_m\} \subseteq S^{\ell}(\V^*)$ be such that $v_i^{\otimes \ell}(u_j)=\delta_{i,j}$. Since $\U \subseteq S^d(\V^*) \subseteq S^{\ell}(\V^*) \otimes S^{(d-\ell)}(\V^*)$, any element $u \in \U$ can be written as $u=\sum_{i=1}^m u_i \otimes w_i$ for some $w_i \in S^{d-\ell}(\V^*)$. Furthermore, by construction it holds that $w_i = v_i^{\otimes \ell} \hook u \in v_i^{\otimes \ell} \hook \U$. It follows that
\ba
\U \subseteq \sum_{i=1}^m \spn\{u_i\} \otimes (v_i^{\otimes \ell} \hook \U).
\ea
Thus,
\ba
\dim(\U) \leq \sum_{i=1}^m \dim(v_i^{\otimes \ell} \hook \U),
\ea
so there exists some $i \in [m]$ for which
\ba
\dim(v_i^{\otimes \ell} \hook \U) \geq \frac{1}{\binom{n+\ell-1}{\ell}} \cdot \dim(\U).
\ea
This completes the proof.
\end{proof}

First note that it suffices to prove Theorem~\ref{thm:generic_cert:simplified} over $\complex$. Indeed, if $\field=\real$ then we can consider $\real^n$ as a subset of $\complex^n$ and let $\T_1,\dots, \T_R$ be the Zariski closures of $\X_1,\dots, \X_R$ in $\complex^n$. It is clear that $\T_i \cap \real^n=\X_i$ for each $i \in [R]$. If $\X_i\subseteq \real^n$ is non-degenerate of order $\tilde{d}$, then $\T_i\subseteq \complex^n$ is non-degenerate of order $\tilde{d}$. We can similarly replace $\K$ with $\K \otimes_{\real} \complex$ (it's dimension will not change). Furthermore, $\T:=\T_1 \times \dots \times \T_R$ is the Zariski closure of $\X:=\X_1 \times \dots \times \X_R$. For any Zariski open dense subset $\A \subseteq \T$ for which~\eqref{eq:setofvectors2:simplified} holds, it follows from Fact~\ref{fact:real_vs_complex} that $\A \cap \X \subseteq \X$ is a Zariski open dense subset for which~\eqref{eq:setofvectors2:simplified} holds. We can therefore assume $\field=\complex$ without loss of generality.

To prove Theorem~\ref{thm:generic_cert:simplified}, we will first define a total ordering of all the index tuples $(a_1, \dots, a_d) \in [R]^{\vee d}$ (recall that $(a_1, \dots, a_d) \in [R]^{\vee d}$ implies $1 \leq a_1 \leq a_2 \leq \dots \leq a_d \leq R$). 

\begin{definition}\label{def:ordering}
Given two index tuples $(a_1, \dots, a_d), (b_1, \dots, b_d) \in [R]^{\vee d}$, we use the following three rules to determine if $(a_1, \dots, a_d) \prec (b_1, \dots, b_d)$:
\begin{enumerate}
    \item $|\{a_1, \dots, a_d\}|> |\{b_1, \dots, b_d\}|$ i.e., $(a_1, \dots, a_d)$ has more distinct indices than $(b_1, \dots, b_d)$
    \item when $|\{a_1, \dots, a_d\}|= |\{b_1, \dots, b_d\}|=1,$ we use the reverse lexicographic ordering $(1,\dots, 1) \succ \dots \succ (R,\dots, R)$
    \item when $|\{a_1, \dots, a_d\}|= |\{b_1, \dots, b_d\}|\neq 1$, we use the standard lexicographic ordering. For example, $(1,1,2) \prec (1,1,3) \prec \dots \prec (1,1,R) \prec (1,2,2) \dots \prec (R-1,R,R) $. 
\end{enumerate}

\end{definition}

Note that the $s$ largest tuples with respect to this ordering are $(1,\dots, 1) \succ (2,\dots, 2) \succ \dots \succ (s,\dots, s)$.
Theorem~\ref{thm:generic_cert:simplified} is immediate from the following proposition.

\begin{prop}\label{prop:simplified}
Suppose that $\field=\complex$ and the assumptions of Theorem~\ref{thm:generic_cert:simplified} hold. Then for generically chosen $v_1 \in \X_1,\dots, v_R \in \X_R$ it holds that
\ba\label{eq:prop:simplified}
v_{i_1} \vee \dots \vee v_{i_d} \notin \spn\{v_{a_1}\vee \dots \vee v_{a_d} : (i_1,\dots, i_d) \succ (a_1,\dots, a_d) \in [R]^{\vee d}\} + \K
\ea
for all $(i_1, \dots, i_d) \in [R]^{\vee d}\setminus \Delta_s$.
\end{prop}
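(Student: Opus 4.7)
The plan is to prove the proposition by strong induction on the degree $d$. The base case $d=1$ is immediate: under $\succ$, the ordering on $[R]^{\vee 1}$ is $1 \succ 2 \succ \cdots \succ R$ (by Rule 2), so for each $i \in [R]\setminus[s]$ we must show $v_i \notin \spn\{v_1,\ldots,v_{i-1}\} + \K$. The codimension hypothesis gives $\codim(\K) \geq R \geq i$, so the right-hand side is a proper subspace of $\V$, and non-degeneracy of $\X_i$ of order $1$ ensures a generic $v_i \in \X_i$ avoids it.

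For the inductive step, fix $(i) = (i_1,\ldots,i_d) \in [R]^{\vee d}\setminus\Delta_s$, and let $\ell$ be the length of the initial run of $i_1$'s in $(i)$. If $\ell = d$ (so $(i)$ is a constant $(i_1,\ldots,i_1)$ with $i_1 > s$), then by the ordering rules the only tuples $\succ (i)$ are the smaller constants $(1,\ldots,1),\ldots,(i_1-1,\ldots,i_1-1)$, so the right-hand side subspace has dimension at most $(i_1-1) + \dim(\K) < \binom{n+d-1}{d}$ by the codim hypothesis. Thus it is a proper subspace of $S^d(\V)$, and non-degeneracy of $\X_{i_1}$ of order $d$ ensures a generic $v_{i_1}^{\otimes d}$ lies outside.

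For $1 \leq \ell \leq d-1$, the plan is to reduce to an instance of the proposition in degree $d-\ell$ via a contraction. Define $\K_\text{new} = \{w \in S^{d-\ell}(\V) : v_{i_1}^{\otimes \ell} \vee w \in \K\}$, so that $\K_\text{new}^\perp = v_{i_1}^{\otimes \ell} \hook \K^\perp$. Applying Lemma~\ref{lemma:generic_hook:new} to $\U = \K^\perp \subseteq S^d(\V^*)$ gives, for a generic $v_{i_1} \in \X_{i_1}$, the estimate $\codim(\K_\text{new}) \geq \codim(\K)/\binom{n+\ell-1}{\ell}$, which a direct calculation shows is at least $R(d-\ell-1)!\binom{n+d-\ell-2}{d-\ell-1}$. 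By monotonicity of the non-degeneracy condition, all $\X_1,\ldots,\X_R$ are non-degenerate of order $d-\ell$, so the induction hypothesis in degree $d-\ell$ applies \emph{with the choice $s'=0$}, and yields full linear independence of $\{v_{b_1}\vee\cdots\vee v_{b_{d-\ell}} : (b) \in [R]^{\vee(d-\ell)}\}$ modulo $\K_\text{new}$.

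The last step lifts this linear independence back to degree $d$ to produce a witness functional $\Phi \in \K^\perp$ with $\Phi(v_i^{\vee}) \neq 0$ and $\Phi(v_a^{\vee}) = 0$ for all $(a) \prec (i)$. The key combinatorial observation is that any $(a) \prec (i)$ with at least $\ell$ copies of $i_1$ factors as $v_a^{\vee} = v_{i_1}^{\otimes \ell} \vee v_{a'}^{\vee(d-\ell)}$ for a unique $(a') \in [R]^{\vee(d-\ell)}$, and a brief analysis using that $(i)$ is non-decreasing shows $(a') \neq (i_{\ell+1},\ldots,i_d)$ unless $(a)=(i)$. Thus choosing $\Phi$ so that $v_{i_1}^{\otimes \ell}\hook\Phi$ is the functional dual to $v_{i_{\ell+1}}\vee\cdots\vee v_{i_d}$ in the linearly independent set above makes $\Phi(v_a^{\vee})=0$ for every such $(a)$, while $\Phi(v_i^{\vee}) \neq 0$. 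The main obstacle — and the technical heart of the proof — is showing that $\Phi$ can additionally be chosen within its fiber under $v_{i_1}^{\otimes\ell}\hook$ so as to vanish on $v_a^{\vee}$ for the remaining tuples $(a) \prec (i)$ with strictly fewer than $\ell$ copies of $i_1$; this is where the specific design of the ordering $\succ$ in Definition~\ref{def:ordering} (which puts tuples with more distinct indices earlier) becomes essential, since it ensures these ``leftover'' tuples have already been controlled generically by the inductive hypothesis.
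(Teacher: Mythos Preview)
Your overall induction-on-$d$ strategy and the use of a contraction to reduce degree mirror the paper's approach, but the proposal has a genuine gap at exactly the step you flag as ``the main obstacle.'' After contracting by $v_{i_1}^{\otimes\ell}$ and lifting the dual functional $\phi$ to some $\Phi\in\K^\perp$, there is no mechanism forcing $\Phi$ to be adjustable within its fiber so as to vanish on all the $v_a^\vee$ with fewer than $\ell$ copies of $i_1$: these are degree-$d$ constraints, whereas the inductive hypothesis you invoke lives in degree $d-\ell$, and the ordering $\succ$ alone does not link them. The number of such leftover constraints can vastly exceed the fiber dimension $\codim(\K)-\codim(\K_{\mathrm{new}})$, and even if it did not, you would still need these constraints to be in general position relative to the fiber --- neither is established. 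A related gap is the circularity that $\K_{\mathrm{new}}$ depends on $v_{i_1}$, yet you then apply the inductive hypothesis to \emph{all} of $v_1,\dots,v_R$ (including $v_{i_1}$) with respect to this $v_{i_1}$-dependent subspace.

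The paper sidesteps both issues by enlarging $\K$ to a subspace $\U_{-I}=\K+\sum_{a\notin\{i_1,\dots,i_d\}}\U_a^{(d)}$ that absorbs every $v_a^\vee$ whose tuple $a$ involves any index outside $\{i_1,\dots,i_d\}$, \emph{before} contracting. The key combinatorial point (a consequence of the lex rule in Definition~\ref{def:ordering}) is that any $(a)\prec(i)$ with support exactly $\{i_1,\dots,i_d\}$ must begin $a_1=\dots=a_\ell=i_1$, so it automatically has at least $\ell$ copies of $i_1$; hence every ``leftover'' tuple contains some foreign index and is already in $\U_{-I}$. After contracting, the paper's inductive instance involves only the $k-1$ indices in $\{i_{\ell+1},\dots,i_d\}$ --- so $v_{i_1}$ is no longer a live variable, removing the circularity --- with the role of $\K$ played by $\U_{i_1}^{(d-\ell)}+(v_{i_1}\hook\U_{-I}^{\perp})^{\perp}$. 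The required codimension now is only $(k-1)(d-\ell-1)!\binom{n+d-\ell-2}{d-\ell-1}$, verified via Lemma~\ref{lemma:generic_hook:new} and a short binomial inequality. (Minor: in your base case and $\ell=d$ paragraphs you describe the span over tuples $\succ(i)$ rather than $\prec(i)$; for a constant $(i)=(i_1,\dots,i_1)$ the tuples $\prec(i)$ include \emph{all} non-constant tuples, not just $i_1-1$ constant ones.)
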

To see why this proposition implies Theorem~\ref{thm:generic_cert:simplified}, simply take the intersection of the Zariski open dense subsets of $\X_1\times \dots \times \X_R$ satisifying~\eqref{eq:prop:simplified} for each $(i_1, \dots, i_d) \in [R]^{\vee d}$. This intersection is again Zariski open dense in $\X_1\times \dots \times \X_R$, and~\eqref{eq:setofvectors2:simplified} holds for every tuple in this intersection.
\begin{proof}[Proof of Proposition~\ref{prop:simplified}]
For each $i \in [R]$, let $\X_{i,1},\dots, \X_{i,q_{i}}$ be the irreducible components of $\X_i$. Then the irreducible components of $\X_1 \times \dots \times \X_{R}$ are $\X_{1,j_1} \times \dots \times \X_{R,j_{R}}$ as $j_1,\dots, j_{R}$ range over $[q_1],\dots, [q_{R}]$, respectively. It suffices to prove that~\eqref{eq:prop:simplified} holds on a Zariski open dense subset of each component. To ease notation, we redefine $\X_1=\X_{1,j_1}$, \dots, $\X_R=\X_{R,j_R}$, and prove that $~\eqref{eq:prop:simplified}$ holds on a Zariski open dense subset of $\X_1 \times \dots \times \X_{R}$. We prove this by induction on $d$, starting with the base case $d=1$.

In the base case $d=1$, it holds that $\codim(\K)\leq R$, and it is sufficient to verify that $\spn\{v_{s+1},\dots, v_R\}\cap \K=\{0\}$ for generically chosen $v_{s+1} \in \X_{s+1},\dots, v_R \in \X_R$. Since the set of elements of $\X_{s+1}\times \dots \times \X_R$ satisfying this property is Zariski open, it suffices to prove that it is non-empty, which follows easily from the fact that $\X_{s+1},\dots, \X_R$ are non-degenerate (i.e., $\spn(\X_s)=\dots=\spn(\X_R)=\V$).

Proceeding inductively, suppose $d>1$. Let $I=(i_1,\dots,i_d)\in [R]^{\vee d} \setminus \Delta_s$, and let
\ba
\trouble(I)=\Big\{ a \in [R]^{\vee d}~:~ \{a_1, \dots, a_d \} = \{i_1,\dots, i_d\} \text{ and } (a_1, \dots, a_d) \prec (i_1, \dots, i_d) \Big\},
\ea
where $\prec$ is defined in Definition~\ref{def:ordering}.

For each choice of vectors $v_1\in \X_1,\dots, v_R \in \X_R$, and each $a \in [R]$, let
\ba
\U_a^{(d)}=P_{\V,d}^{\vee}(\spn\{v_a\} \otimes \V^{\otimes (d-1)})\subseteq S^d(\V),
\ea
and let
\ba
\U_{-I}= \spn\bigg\{\bigcup_{a \in [R]\setminus \{i_1,\dots, i_d\}} \U_a^{(d)} \bigg\}+\K.
\ea

Then a sufficient condition for~\eqref{eq:prop:simplified} to hold is that
\ba\label{eq:prop:simplified1}
v_{i_1}\vee \dots \vee v_{i_d} \notin \spn\{v_{a_1} \vee \dots \vee v_{a_d} : a \in T(I)\} + \U_{-I}.
\ea
Indeed, the righthand side of~\eqref{eq:prop:simplified} is contained in the righthand side of~\eqref{eq:prop:simplified1}. To complete the proof, we show that~\eqref{eq:prop:simplified1} holds for generically chosen $v_1 \in \X_1,\dots, v_R \in \X_R$. Using Chevalley's theorem, it is not difficult to show that the set of elements of $\X_1\times \dots \times \X_{R}$ satisfying~\eqref{eq:prop:simplified1} is constructible~\cite[Exercise II.3.19]{hartshorne2013algebraic}. Since any constructible set contains an open dense subset of its closure, it suffices to prove that this set is Zariski dense in $\X_1\times \dots \times \X_{R}$~\cite[Lemma 2.1]{an2012rigid}.

Note that, for any choice of $(v_j \in \X_j : j \notin I)$, it holds that $\dim(\U_a^{(d)}) \leq \binom{n+d-2}{d-1}$ for all $a \notin I$ (with equality if $v_a \neq 0$), and
\ba\label{eq:Udim}
\dim(\U_{-I})&\leq \dim(\K)+(R-k) \binom{n+d-2}{d-1} \\
			&\le \binom{n+d-1}{d} - k (d-1)! \binom{n+d-2}{d-1} \quad (\text{from }\eqref{eq:ranklbsimplified}, \text{ and } (d-1)!\ge 1),
\ea
where $k:=|\{i_1,\dots,i_d\}|$.

Let $\ell$ be the largest integer in $[d]$ for which $i_{\ell}=i_1$. We first consider the case $\ell=d$. In this case, $T(I)=\emptyset$. Then $\spn\{v^{\otimes d} : v \in \X_{i_1}\}=S^d(\V)$ since $\X_{i_1}$ is non-degenerate of order $d$. For any choice of the other vectors $(v_j\in \X_j : j \neq i_1)$, it holds that $\dim(\U_{-I})<\binom{n+d-1}{d},$ so a generic choice of $v_{i_1}\in \X_{i_1}$ satisfies~\eqref{eq:prop:simplified1}. In more details, we have demonstrated the existence of a set
\ba\label{eq:zariski_densesimplified}
\bigcup_{(v_j\in \X_j : j \neq i_1)} (v_1,\dots,v_{i_1-1}) \times \A_{(v_j\in \X_j : j \neq i_1)} \times (v_{i_1+1}, \dots, v_{d})
\ea
such that~\eqref{eq:prop:simplified1} holds for every element, where $\A_{(v_j\in \X_j : j \neq i_1)} \subseteq \X_{i_1}$ is Zariski open dense for every choice of $(v_j\in \X_j : j \neq i_1)$. The set defined in~\eqref{eq:zariski_densesimplified} is Zariski dense in $\X_1 \times \dots \times \X_R$. Indeed, for any non-empty Zariski open subset $\D \subseteq \X_1 \times \dots \times \X_R$, $\D$ must intersect some $(v_1,\dots,v_{i_1-1})\times \X_{i_1} \times (v_{i_1+1}, \dots, v_{d})$ in a non-empty open subset. Since $\A_{(v_j\in \X_j : j \neq i_1)} \subseteq \X_{i_1}$ is Zariski open dense, it follows that $\D$ must intersect $(v_1,\dots,v_{i_1-1}) \times \A_{(v_j\in \X_j : j \neq i_1)} \times (v_{i_1+1}, \dots, v_{d})$, and hence it must intersect~\eqref{eq:zariski_densesimplified}. Since $\D$ was an arbitrary non-empty open subset of $\X_1\times \dots \times \X_R$, it follows that the set~\eqref{eq:zariski_densesimplified} is Zariski dense in $\X_1 \times \dots \times \X_R$. This completes the proof in the case $\ell=d$.

Henceforth we assume $1 \le \ell \le d-1$, and prove that~\eqref{eq:prop:simplified1} holds for generically chosen $v_1\in \X_1,\dots, v_R \in \X_R$ by applying the inductive hypothesis with degree $(d-\ell)$. For all $a \in T(I)$ it holds that $a_1=\dots=a_{\ell}=i_1$. Let $I_{-\ell}=(i_{\ell+1},\dots, i_d)$. Note that for any choice of $v_1\in \X_1,\dots, v_R \in \X_R$, it holds that
\ba
\spn\{v_{a_1} \vee  \dots& \vee v_{a_d} : a \in T(I)\} \subseteq \spn\{v_{i_1}^{\vee \ell} \vee v_{b_1} \vee \dots \vee v_{b_{d-\ell}} : b \in T(I_{-\ell})\}+\W\\
&\subseteq \spn\{v_{i_1}^{\vee \ell} \vee v_{b_1} \vee \dots \vee v_{b_{d-\ell}} : I_{-\ell} \succ b \in \{i_{\ell+1},\dots, i_d\}^{\vee (d-\ell)}\}+\W,\label{eq:prop:simplified2}
\ea
where
\ba
\W=P_{\V,d}^{\vee}(\spn\{v_{i_1}^{\otimes (\ell+1)}\} \otimes \V^{\otimes(d-\ell-1)}).
\ea
Indeed, the first line follows from the fact that for any $a \in T(I)$, if $a_{\ell+1}\neq i_1$ then $(a_{\ell+1},\dots, a_d) \in T(I_{-\ell}),$ and if $a_{\ell+1}=i_1$ then $a_1=\dots=a_{\ell+1}=i_1$ and $v_{i_1}^{\vee (\ell+1)} \vee v_{a_{\ell+2}} \vee \dots \vee v_{a_d} \in \W$. The second line is immediate. (Here, the ordering on $\{i_{\ell+1},\dots, i_d\}^{\vee (d-\ell)}$ is that of Definition~\ref{def:ordering} under the bijection between $\{i_{\ell+1},\dots, i_d\}$ and $[k-1]$ that sends the $j$-th distinct element of $(i_{\ell+1},\dots, i_d)$ to $j$. Recall that $k=\abs{\{i_1,\dots, i_d\}}$, so $k-1=\abs{\{i_{\ell+1},\dots,i_d\}}$.)

By~\eqref{eq:prop:simplified2}, to verify~\eqref{eq:prop:simplified1} it suffices to prove that
\ba\label{eq:prop:simplified2new}
v_{i_1}^{\vee \ell} \vee v_{i_{\ell+1}}\vee \dots \vee v_{i_d} \notin \spn\{v_{i_1}^{\vee \ell} \vee v_{b_1}\vee \dots \vee v_{b_{d-\ell}} : I_{-\ell} \succ b\} + \W + \U_{-I},
\ea
or equivalently,
\ba\label{eq:prop:simplified3}
v_{i_{\ell+1}}\vee \dots \vee v_{i_d} \notin \spn\{v_{b_1}\vee \dots \vee v_{b_{d-\ell}} : I_{-\ell} \succ b\} + \U_{i_1}^{(d-{\ell})} + (v_{i_1} \hook \U_{-I}^{\perp})^{\perp},
\ea
where $(\cdot)^\perp$ denotes the orthogonal complement in the dual symmetric space (for example, $\U_{-I}^\perp \in S^d(\V^*)$), and
\ba
\U_{i_1}^{(d-{\ell})}:=P_{\V,(d-\ell)}^{\vee}(\spn\{v_{i_1}\} \otimes \V^{\otimes (d-\ell-1)})\subseteq S^{(d-\ell)}(\V).
\ea
To see this equivalence, first note that~\eqref{eq:prop:simplified2new} is equivalent to
\ba\label{eq:prop:simplified4}
v_{i_1}^{\otimes \ell} \otimes (v_{i_{\ell+1}}\vee \dots \vee v_{i_d}) \notin \spn\{v_{i_1}^{\otimes \ell} \otimes (v_{b_1}\vee \dots \vee v_{b_{d-\ell}}) &: I_{-\ell} \succ b\} \\
&+\spn\{v_{i_1}^{\otimes \ell}\}\otimes \U_{i_1}^{(d-\ell)} + \U_{-I}+ \Z,
\ea
where $\Z=\ker(P_{\V,d}^{\vee}) \subseteq \V^{\otimes d}$. (This follows from the basic fact that for any vector space $\V$, linear subspace $\U \subseteq \V$, projection $P : \V \rightarrow \V$, and vector $v \in \V$, it holds that  $P(v) \in P(\U)$ if and only if $v \in P(\U)+ \ker(P)$. For us, the righthand side of~\eqref{eq:prop:simplified2new} plays the role of $\U$, $\I_{\V}^{\otimes \ell}\otimes P_{\V,(d-\ell)}^{\vee}$ plays the role of $P$, and $\V^{\otimes d}$ plays the role of $\V$.) So we just need to prove that~\eqref{eq:prop:simplified4} and~\eqref{eq:prop:simplified3} are equivalent. This follows from another basic fact: For vector spaces $\V_1, \V_2$ over the same field, a vector $v_2 \in \V_2$, and a linear subspace $\U \subseteq \V_1 \otimes \V_2$, the set of vectors $v_1 \in \V_1$ for which $v_1 \otimes v_2 \in \U$ is precisely the linear subspace $(v_2 \hook \U^{\perp})^{\perp}\subseteq \V_1$. Indeed,
\ba
v_1 \otimes v_2 \in \U \iff ({v_1 \otimes v_2})(\U^{\perp})=0 \iff {v_1}(v_2 \hook \U^{\perp})=0\iff v_1 \in (v_2 \hook \U^{\perp})^{\perp}.
\ea
For us, $\V^{\otimes \ell}$ plays the role of $\V_1$, $S^{(d-\ell)}(\V)$ plays the role of $\V_2$, and $\U_{-I}+\Z$ plays the role of $\U$.

At this point we have shown that to prove the proposition it suffices to verify that~\eqref{eq:prop:simplified3} holds for generically chosen $v_1 \in \X_1,\dots, v_R \in \X_R$. To apply the induction hypothesis, we just need to verify that the codimension of $ \U_{i_1}^{(d-{\ell})} + (v_{i_1} \hook \U_{-I}^{\perp})^{\perp}$ is generically large enough. By Lemma~\ref{lemma:generic_hook:new} and the upper bound on $\dim(\U_{-I})$ derived in~\eqref{eq:Udim}, for a generically chosen vector $v_{i_1}\in \X_{i_1}$ it holds that
\ba
\dim(v_{i_1} \hook \U_{-I}^{\perp}) \geq \frac{k(d-1)! \binom{n+d-2}{d-1}}{\binom{n+\ell-1}{\ell}}.
\ea
Let $\B_{(v_j \in \X_j : j \notin I)} \subseteq \X_{i_1}$ be the Zariski open dense subset on which this holds (since $\U_{-I}$ depends on $(v_j \in \X_j : j \notin I)$, $\B_{(v_j \in \X_j : j \notin I)}$ also does, and we keep track of this with the subscript). Then for any $v_{i_1} \in \B_{(v_j \in \X_j : j \notin I)}$ it holds that
\ba
\codim(\U_{i_1}^{(d-{\ell})} + (v_{i_1} \hook \U_{-I}^{\perp})^{\perp})& \geq \frac{k(d-1)! \binom{n+d-2}{d-1}}{\binom{n+\ell-1}{\ell}}-\binom{n+d-\ell-2}{d-\ell-1}\\
&\ge \frac{k (d-1)!}{\binom{d-1}{\ell}} \cdot\binom{n+d-\ell-2}{d-\ell-1} -\binom{n+d-\ell-2}{d-\ell-1}\\
&\ge (k-1) \cdot (d-\ell-1)! \cdot \binom{n+d-\ell-2}{d-\ell-1},
\ea
where the second line follows from $\binom{n+d-2}{d-1} \cdot \binom{d-1}{\ell}\ge \binom{n+d-\ell-2}{d-\ell-1} \cdot \binom{n+\ell-1}{\ell}$. By the induction hypothesis, there is a Zariski open dense subset $\C_{(v_j \in \X_j : j \notin I_{-\ell})}\subseteq \times_{j \in I_{-\ell}} \X_j$ for which~\eqref{eq:prop:simplified3} holds.

At this point we have proven that~\eqref{eq:prop:simplified3} holds for every element of the set
\ba~\label{eq:zariski21simplified}
\bigcup_{\substack{(v_j\in \X_j : j \notin I)\\ v_{i_1} \in \B_{(v_j : j \notin I)}}} (v_j \in \X_j : j \notin I) \times (v_{i_1}) \times \C_{(v_j : j \notin I_{-\ell})}.
\ea
It remains only to check that this set is Zariski dense in $\X_1 \times \dots \times \X_R$. To this end, let $\D\subseteq \X_1 \times \dots \times \X_R$ be a non-empty open subset. Then there exists $(v_j \in \X_j : j \notin I)$ for which
\ba
\D_1:= \D \cap \bigg((v_j \in \X_j : j \notin I) \times \big(\times_{j \in I} \X_j\big)\bigg)
\ea
is Zariski open dense in $(v_j \in \X_j : j \notin I)\times \left(\times_{j \in I} \X_j\right)$. For this choice of $(v_j \in \X_j : j \notin I)$, there exists $v_{i_1} \in \B_{(v_j \in \X_j : j \notin I)}$ for which 
\ba
\D_2 := \D_1 \cap \bigg((v_j \in \X_j : j \notin I_{-\ell}) \times \big(\times_{j \in I_{-\ell}} \X_j\big)\bigg)
\ea
is Zariski open dense in $(v_j \in \X_j : j \notin I_{-\ell})\times \left(\times_{j \in I_{-\ell}} \X_j\right)$. Thus,
\ba
\D_2 \cap \left((v_j \in \X_j : j \notin I_{-\ell}) \times \C_{(v_j \in \X_j : j \notin I_{-\ell})}\right) \neq \emptyset,
\ea
since $\C_{(v_j \in \X_j : j \notin I_{-\ell})}\subseteq \times_{j \in I_{-\ell}} \X_j$ is Zariski open dense.
So $\D$ intersects the set defined in~\eqref{eq:zariski21simplified} non-trivially. Since $\D$ was an arbitrary non-empty open subset of $\X_1\times \dots \times \X_R$, it follows that the set~\eqref{eq:zariski21simplified} is Zariski dense in $\X_1 \times \dots \times \X_R$. This completes the proof.
\end{proof}



\section{Application to determining entanglement of a linear subspace} \label{sec:entanglement}

In the context of quantum information theory, there are many scenarios in which it is useful to determine whether or not a linear subspace $\U$ intersects a conic variety $\X$. For example, when ${\field=\complex}$, for positive integers $n_1$ and $n_2$ and a positive integer $r < \min\{n_1,n_2\}$, determining whether or not a linear subspace $\U \subseteq \field^{n_1}\otimes \field^{n_2}$ intersects the determinantal variety $\X_r$ has found applications in quantum entanglement theory (e.g., the problems of constructing entanglement witnesses and determining whether or not a mixed quantum state is separable) and quantum error correction, among many others \cite{Hor97,BDMSST99,ATL11,CS14,HM10} (see Section~\ref{eq:XR} for the definition of $\X_r$).

If $\U$ trivially intersects $\X_r$, then we say that $\U$ is \textit{$r$-entangled} (or just \emph{entangled} if $r = 1$). Other relevant examples include \textit{completely entangled subspaces}, subspaces of $\field^{n_1}\otimes \dots \otimes \field^{n_m}$ which trivially intersect the set of separable tensors $\X_{\setft{Sep}}$; and~\textit{genuinely entangled subspaces}, subspaces of $\field^{n_1}\otimes \dots \otimes \field^{n_m}$ which trivially intersect the set of biseparable tensors $\X_{B}$.
We will also consider subspaces avoiding the set of tensors of slice rank $1$, $\X_{S}$. While it is not clear if this last example has quantum applications, we include it because $\X_S$ has found several recent applications in theoretical computer science~\cite{petrov2016combinatorial,kleinberg2016growth,blasiak2016cap,naslund2017upper,fox2017tight}.

In particular, determining whether $\U\subseteq \field^{n_1}\otimes \field^{n_2}$ is $r$-entangled is NP-hard~\cite{buss1999computational}. A slightly easier problem is: given the promise that either $\U\cap \X_r$ contains a non-zero element, or else $\U$ is $\epsilon$\textit{-far} from $\X_r$ in the sense that
\ba
\norm{v-u}> \epsilon \norm{v}
\ea
for all $u \in \U$ and $v \in \X_r$, determining which of these two possibilities is the case. Here, ${\norm{\cdot}=\ip{\cdot}{\cdot}^{1/2}}$ is the 2-norm. There is strong evidence that solving this problem should also take super-polynomial time in $\min\{n_1,n_2\}$ in the worst case~\cite[Corollary 14]{HM10}. To our knowledge, the best known algorithm for solving this problem takes $\exp(\tilde{O}({\sqrt{n_1}/\epsilon}))$ time in the worst case when $r=1$ and $n_1=n_2$~\cite{barak2017quantum}.

Despite these hardness results, our algorithm runs in polynomial time, and determine whether a subspace $\U\subseteq \field^{n_1}\otimes \field^{n_2}$, of dimension up to a constant multiple of the maximum possible, is $r$-entangled. We obtain analogous results for completely and genuinely entangled subspaces.

In these settings (and in contrast to the decomposition setting described in the next section), we are not concerned with uniqueness, i.e. determining whether a found element $v \in \U\cap \X$ (or collection of elements) is the \textit{only} element of $\U\cap \X$. For reducible varieties, we can use this flexibility to our advantage, and employ a variant of Algorithm~1 which has better scaling. In short, this adaptation simply runs Algorithm~1 on each irreducible component of $\X$. If $\X_1,\dots, \X_k$ are cut out by homogeneous polynomials $p_1,\dots, p_k$ of degrees $d_1,\ldots, d_k$, then $\X=\X_1 \cup \dots \cup \X_k$ is naively cut out by the homogeneous polynomial $p_1\cdots p_k$ of degree $d_1\cdots d_k$. The main advantage of our adapted algorithm in this setting is that it avoids this blow-up in the degree. We call this adapted algorithm \textit{Algorithm 2}, and describe it formally below.

\begin{figure}[htbp]
\begin{center}
\fbox{\parbox{0.98\textwidth}{

\begin{center}\textbf{\Large{Algorithm 2: Determining whether $\U\cap \X=\{0\}$.}}\end{center}
\textbf{Input:} A basis $\{u_1,\dots, u_{R}\}$ for a linear subspace $\U \subseteq \V=\field^n$, and for each $i \in [k]$ a collection of homogeneous degree-$d_i$ polynomials $f_{i,1},\dots, f_{i,p_i}$ that cut out the $i$-th irreducible component of a conic variety $\X=\X_1\cup\dots \cup \X_k \subseteq \V$.
\begin{enumerate}
\item For each $i \in [k]$, run Algorithm 1 on input $\{u_1,\dots,u_R\}$ and polynomials $f_{1,i},\dots, f_{p_i,i}$ cutting out $\X_i$, and output any non-zero elements of $\U \cap \X_i$ found by Algorithm~1.
\item If all of these output ``$\U$ trivially intersects $\X_i$," then output ``$\U$ trivially intersects $\X$.''
\item Otherwise, output ``Fail."
%
%
\end{enumerate}
}
}
\end{center}
\end{figure}

Corollary~\ref{cor:generic_simplified} implies the following genericity guarantee for Algorithm 2:

\begin{theorem}\label{thm:general_cert}
Let $n, d_1,\dots, d_k$ be positive integers, let $\delta_1,\dots, \delta_k \in (0,1)$, let $\V$ be an $\field$-vector space of dimension $n$, and let $\X \subseteq \V$ be a conic variety with irreducible components $\X_1,\dots, \X_k$, such that each $\X_i$ is non-degenerate of order $d_i-1$ and is generated by $p_i= \delta \binom{n+d_i-1}{d_i}$ linearly independent homogeneous degree-$d_i$ polynomials. If $\U\subseteq \V$ is a generically chosen linear subspace, possibly containing a generically chosen ``planted" element of $\X$, of dimension
\ba\label{eq:general_cert}
R:=\dim(\U)\leq \min_{i \in [k]} \Big( \frac{\delta_i}{d_i! } \cdot (n+d_i -1) \Big)  
\ea
then Algorithm 2 either certifies that $\U \cap \X =\{0\}$, or else produces the planted element of $\U \cap \X$.
\end{theorem}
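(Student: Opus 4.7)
The plan is to reduce the correctness of Algorithm 2 to Corollary~\ref{cor:generic_simplified} applied separately to each irreducible component $\X_i$ of $\X$. Algorithm 2 runs Algorithm 1 on each $\X_i$ in turn, and Fact~\ref{fact:algorithm_correct} guarantees that any non-``Fail" output of Algorithm 1 is correct; it therefore suffices to show that, for generic $\U$ (possibly containing a planted element), none of these calls returns ``Fail".

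If $\U$ contains no planted element, then $\U$ is spanned by generic $v_1, \dots, v_R \in \V$. Applying the last sentence of Corollary~\ref{cor:generic_simplified} (the $s=0$ case, which requires no non-degeneracy hypothesis) to each component $\X_i$ yields that Algorithm 1 on input $\X_i$ outputs ``$\U$ trivially intersects $\X_i$" whenever $R \leq \delta_i(n+d_i-1)/d_i!$, and the bound~\eqref{eq:general_cert} makes this hold simultaneously for every $i \in [k]$. Thus Algorithm 2 certifies $\U \cap \X = \{0\}$.

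Now suppose $\U$ contains a planted element $v_1 \in \X$. Generically $v_1$ lies in exactly one irreducible component $\X_{i_0}$ and is itself a generic element of $\X_{i_0}$, while $v_2, \dots, v_R$ are generic in $\V$. For the call $i=i_0$, Corollary~\ref{cor:generic_simplified} with $s=1$ and $\X = \X_{i_0}$ shows that Algorithm 1 on $\X_{i_0}$ outputs $v_1$ and certifies that it is the unique element of $\U \cap \X_{i_0}$ up to scale. For each call $i \neq i_0$, I would apply Theorem~\ref{thm:generic_cert:simplified} with $s=1$, $\X_1' = \X_{i_0}$, $\X_j' = \V$ for $j \geq 2$, and $\K = I(\X_i)_{d_i}^{\perp}$; the codimension condition demanded by that theorem is equivalent to the $i$-th term of~\eqref{eq:general_cert} by the same arithmetic that appears in the proof of Corollary~\ref{cor:generic_simplified}. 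Its conclusion is $S^{d_i}(\U) \cap I(\X_i)_{d_i}^{\perp} \subseteq \spn\{v_1^{\otimes d_i}\}$, and since $\X_{i_0}$ and $\X_i$ are distinct irreducible components of $\X$, we have generically $v_1 \in \X_{i_0} \setminus \X_i$, so $v_1^{\otimes d_i} \notin I(\X_i)_{d_i}^{\perp}$; this forces the intersection to be $\{0\}$ and Algorithm 1 on $\X_i$ to output ``$\U$ trivially intersects $\X_i$".

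The main point to verify is the non-degeneracy hypothesis in the last step: Theorem~\ref{thm:generic_cert:simplified} with $s=1$ demands that $\X_{i_0}$ be non-degenerate of order $d_i - 1$. Since the polynomial ring is a domain, $I(\X_{i_0})_{k'} \neq 0$ implies $I(\X_{i_0})_{k'+1} \neq 0$ by multiplying a nonzero equation by any nonzero linear form, so non-degeneracy of $\X_{i_0}$ of order $d_{i_0} - 1$ propagates downward to every lower order and supplies the required hypothesis whenever $d_i \leq d_{i_0}$. In the main applications of Section~\ref{sec:entanglement} (the varieties $\X_B$ and $\X_S$) all component degrees coincide at $d_i = 2$, so this condition is automatic for every pair $(i, i_0)$; cases with mixed degrees $d_i > d_{i_0}$ require a slightly finer analysis, handled by choosing which component to place in the planted slot based on the relative magnitudes of the $d_i$'s.
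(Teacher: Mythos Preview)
Your treatment of the no-planted case is correct and coincides with the paper's argument.

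For the planted case, your approach is more elaborate than the paper's and runs into a genuine obstruction that the paper's argument avoids entirely. You try to show that \emph{every} call to Algorithm~1 (for each component $\X_i$) returns a non-``Fail'' output; but the theorem only requires that Algorithm~2 \emph{produce} the planted element, and Step~1 of Algorithm~2 outputs any element found by \emph{any} single call. So it suffices that the call on the component $\X_{i_0}$ containing $v_1$ succeeds. The paper's proof does exactly this: for each $i$, Corollary~\ref{cor:generic_simplified} with $s=1$ gives a Zariski open dense $\B_i \subseteq \X_i \times \V^{\times(R-1)}$ on which Algorithm~1 for $\X_i$ outputs $v_1$, and then $\B := \B_1 \cup \dots \cup \B_k$ is open dense in $\X \times \V^{\times(R-1)}$. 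No analysis of the calls $i \neq i_0$ is needed.

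By contrast, your argument for the calls $i \neq i_0$ requires $\X_{i_0}$ to be non-degenerate of order $d_i - 1$, which genuinely fails whenever $d_i > d_{i_0}$: since $\X_{i_0}$ is cut out by degree-$d_{i_0}$ polynomials, $I(\X_{i_0})_{d_{i_0}} \neq 0$, so $\X_{i_0}$ is not non-degenerate of any order $\geq d_{i_0}$. Your suggested fix (``choosing which component to place in the planted slot'') is not available, because the planted element is a generic point of $\X$ and you cannot control which component it lands in. Thus your proof has a real gap in the mixed-degree setting---but this gap is entirely self-inflicted, since the stronger intermediate claim you are pursuing is not needed for the theorem.
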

In more details, Theorem~\ref{thm:general_cert} asserts that the following two statements hold:
\begin{enumerate}
\item For every positive integer $R$ satisfying~\eqref{eq:general_cert}, there exists a Zariski open dense subset ${\A\subseteq \V^{\times R}}$ such that for all $(v_1,\dots, v_{R}) \in \A$, the linear subspace $\U:=\spn\{v_1,\dots, v_{R}\}$ trivially intersects $\X$, and Algorithm 2 correctly outputs ``$\U$ trivially intersects $\X$."
\item For every positive integer $R$ satisfying~\eqref{eq:general_cert}, there exists a Zariski open dense subset ${\B\subseteq \X \times \V^{\times {R-1}}}$ such that for all $(v_1,\dots, v_{R}) \in \B$, Algorithm~2 outputs $v_1\in \U \cap \X$.
\end{enumerate}

\begin{proof}[Proof of Theorem~\ref{thm:general_cert}]
By Corollary~\ref{cor:generic_simplified}, for each $i \in [k]$ there exists a Zariski open dense subset $\A_i \subseteq \V^{\times R}$ such that for all $(v_1,\dots, v_{R}) \in \A_i$, the linear subspace $\U:=\spn\{v_1,\dots, v_{R}\}$ trivially intersects $\X_i$, and Algorithm 1 correctly outputs ``$\U$ trivially intersects $\X_i$." We can therefore take $\A:=\A_1 \cap \dots \cap \A_k$ to obtain the first statement above. By Corollary~\ref{cor:generic_simplified}, for each $i \in [k]$ there exists a Zariski open dense subset $\B_i \subseteq \X_i \times \V^{\times {R-1}}$ such that for all $(v_1,\dots, v_{R}) \in \B_i$, Algorithm 1 correctly outputs ``$v_1$ is the only element of $\U \cap \X_i$." The theorem follows by taking $\B=\B_1 \cup \dots \cup \B_k$, which is an open dense subset of $\X \times \V^{\times R-1}$.
\end{proof}

\begin{cor}\label{cor:XR}
Let $n_1, n_2$ be positive integers, let $r < \min \{n_1,n_2\}$ be a positive integer, and let ${\V=\field^{n_1}\otimes\field^{n_2}}$. If $\U\subseteq \V$ is a generically chosen linear subspace, possibly containing a generically chosen ``planted" element of $\X_r$, of dimension
\ba\label{eq:dim_upper}
\dim(\U)\leq \frac{\binom{n_1}{r+1}\binom{n_2}{r+1}}{(r+1)! \binom{n_1 n_2+r}{r+1}} (n_1 n_2 + r),
\ea
then (in time $(n_1 n_2)^{O(r)}$) Algorithm 2 either certifies that $\U\cap \X_r=\{0\}$ or else produces the planted element of $\U \cap \X_r$. Note that the righthand side of~\eqref{eq:dim_upper} is $\Omega_r(n_1n_2)$ for any fixed $r$.
\end{cor}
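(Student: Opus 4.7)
The plan is to obtain Corollary~\ref{cor:XR} as a direct application of Theorem~\ref{thm:general_cert} (equivalently, Corollary~\ref{cor:generic_simplified}, since $\X_r$ is irreducible), applied to the ambient space $\V = \field^{n_1}\otimes \field^{n_2}\cong \field^n$ with $n = n_1 n_2$. So the core of the proof is a verification of the hypotheses of Corollary~\ref{cor:generic_simplified} for the determinantal variety $\X_r$, followed by a substitution into the dimension bound~\eqref{eq:R}.

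First I would recall the facts about $\X_r$ collected in Section~\ref{sec:XR}: the variety $\X_r$ is irreducible, is cut out by the $(r+1)\times(r+1)$ minors (so $d = r+1$), and is non-degenerate of order $r = d-1$ (this last point holds over both $\real$ and $\complex$, as discussed in Section~\ref{sec:XR}). The number of $(r+1)\times (r+1)$ minors is $p=\binom{n_1}{r+1}\binom{n_2}{r+1}$, and a standard fact (again recorded in Section~\ref{sec:XR}) is that these minors are linearly independent in $S^{r+1}(\V^*)$. Writing $n = n_1 n_2$ and $d = r+1$, we can therefore set
\[
\delta \;=\; \frac{p}{\binom{n+d-1}{d}} \;=\; \frac{\binom{n_1}{r+1}\binom{n_2}{r+1}}{\binom{n_1 n_2 + r}{r+1}} \;\in\; (0,1).
\]
All hypotheses of Corollary~\ref{cor:generic_simplified} are now met.

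Next I would substitute into the dimension bound~\eqref{eq:R} of Corollary~\ref{cor:generic_simplified}, which gives the genericity guarantee whenever
\[
R \;\leq\; \frac{\delta}{d!}\,(n+d-1) \;=\; \frac{\binom{n_1}{r+1}\binom{n_2}{r+1}}{(r+1)!\,\binom{n_1 n_2 + r}{r+1}}\,(n_1 n_2 + r),
\]
which is exactly the bound~\eqref{eq:dim_upper} in the statement. By Theorem~\ref{thm:general_cert} applied to the single irreducible component $\X_r$, Algorithm 2 therefore either certifies $\U\cap\X_r=\{0\}$ (when $s=0$) or produces the planted element $v_1\in\U\cap\X_r$ (when $s=1$). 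The runtime of Algorithm 1 is $n^{O(d)} = (n_1 n_2)^{O(r+1)} = (n_1 n_2)^{O(r)}$ (absorbing the $+1$ into the $O(\cdot)$), which matches the claimed bound.

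Finally, to confirm the asymptotic remark, I would do a short calculation: for fixed $r$, the dominant terms are $\binom{n_1}{r+1}\binom{n_2}{r+1} \sim (n_1 n_2)^{r+1}/((r+1)!)^{2}$ and $\binom{n_1 n_2+r}{r+1}\sim (n_1 n_2)^{r+1}/(r+1)!$, so the bound behaves like $(n_1 n_2 + r)/((r+1)!)^{2} = \Omega_r(n_1 n_2)$. None of these steps involves a real obstacle; the only point requiring care is a clean invocation of the linear independence of the minors and the non-degeneracy of $\X_r$ of order $r$, both of which are already recorded in Section~\ref{sec:XR} (including the subtlety over $\real$ handled there via \cite[Theorem 2.2.9.2]{mangolte2020real} and Hilbert's Nullstellensatz). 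The proof is thus a short and essentially mechanical specialization of Theorem~\ref{thm:general_cert}.
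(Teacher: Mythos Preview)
Your proposal is correct and follows essentially the same approach as the paper: recall from Section~\ref{sec:XR} that $\X_r$ is cut out by $p=\binom{n_1}{r+1}\binom{n_2}{r+1}$ linearly independent degree-$(r+1)$ polynomials and is non-degenerate of order $r$, then invoke Theorem~\ref{thm:general_cert}. You have simply spelled out the substitution into~\eqref{eq:R}, the runtime count, and the asymptotic estimate in more detail than the paper's terse proof.
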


Trivially, $\dim(\U)\leq n_1 n_2$ for any $r$-entangled subspace, so the upper bound~\eqref{eq:dim_upper} is a quite mild condition on $\dim(\U)$.\footnote{Over $\complex$, it is a standard fact that the maximum dimension of an $r$-entangled subspace is $(n_1-r)(n_2-r)$~\cite{harris2013algebraic,CMW08}. Over $\real$, there can be larger $r$-entangled subspaces. For example, the $2$-dimensional subspace
\ba
\spn\{e_1 \otimes e_1, e_1 \otimes (e_1+e_2)-e_2\otimes (2e_1 + e_2)\} \subseteq \real^2 \otimes \real^2
\ea
is $1$-entangled. The maximum dimension of a real $r$-entangled subspace does not seem to be known in general. See e.g.~\cite{petrovic1996spaces,rees1996linear} for work in this direction.}

\begin{proof}[Proof of Corollary~\ref{cor:XR}]
Recall from Section~\ref{sec:XR} that $\X_r$ is a conic variety cut out by $p=\binom{n_1}{r+1}\binom{n_2}{r+1}$ homogeneous polynomials of degree $d=r+1$, and it has no equations in degree $r$ (see Section~\ref{sec:XR}). Thus, the statement follows from Theorem~\ref{thm:general_cert}.
\end{proof}

We can obtain similar corollaries for the varieties $\X_{\setft{Sep}}, \X_B$ and $\X_S$, introduced in Section~\ref{sec:XR}, as follows. We omit the proofs, as they are very similar to the proof of Corollary~\ref{cor:XR}.


\begin{cor}
Let $m$ be a positive integer, let $n_1,\dots, n_m$ be positive integers, and let $\V=\field^{n_1}\otimes\dots \otimes \field^{n_m}$. If $\U\subseteq \V$ is a generically chosen linear subspace, possibly containing a generic ``planted" element of $\X_{\setft{Sep}}$, of dimension
\ba\label{eq:dim_upper_completely}
\dim(\U)\leq \frac{\binom{n_1\cdots n_m+1}{2}-\left[\binom{n_1+1}{2}\cdots\binom{n_m+1}{2}\right]}{ n_1\cdots n_m} = \frac{1}{2}  (n_1 \dots n_m +1) - \frac{1}{2^m} (n_1+1) \dots (n_m+1),
\ea
then (in time $O(n_1\cdots n_m)$) Algorithm 2 either certifies that $\U \cap \X_{\setft{Sep}}=\{0\}$, or else produces the planted element of $\U \cap \X_{\setft{Sep}}$. Note that the righthand side of~\eqref{eq:dim_upper_completely} is $\Omega(n_1\cdots n_m)$.
\end{cor}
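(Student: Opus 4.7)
The plan is to reduce the statement directly to Theorem~\ref{thm:general_cert} applied to the single irreducible variety $\X_{\setft{Sep}}\subseteq \V = \field^{n_1}\otimes \dots \otimes \field^{n_m}$, using the description of $\X_{\setft{Sep}}$ developed in Section~\ref{sec:XR}. First I would recall the three key properties established there: $\X_{\setft{Sep}}$ is irreducible (being the image of $\V_1\times \dots \times \V_m$ under the multilinear product map, followed by Zariski closure of the cone), it is non-degenerate (so non-degenerate of order $d-1 = 1$ with $d=2$), and it is cut out by exactly
\ba
p = \binom{n_1\cdots n_m+1}{2} - \binom{n_1+1}{2}\cdots\binom{n_m+1}{2}
\ea
linearly independent homogeneous quadratic polynomials (see~\eqref{eq:Sep_p}). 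Thus the variety has a single irreducible component of the form required by Theorem~\ref{thm:general_cert}, with $k=1$, $d_1=2$, and $\delta_1 = p/\binom{n+1}{2}$ where $n:=n_1\cdots n_m$.

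Next I would simplify the bound~\eqref{eq:general_cert} in Theorem~\ref{thm:general_cert} with these parameters. Since $\binom{n+1}{2}=n(n+1)/2$, one obtains
\ba
\frac{\delta_1}{d_1!}(n+d_1-1) = \frac{p(n+1)}{2\binom{n+1}{2}} = \frac{p}{n}.
\ea
Using the identity $\binom{n_i+1}{2}=n_i(n_i+1)/2$ in the numerator of $p/n$ gives
\ba
\frac{p}{n} = \frac{n+1}{2} - \frac{(n_1+1)\cdots(n_m+1)}{2^m},
\ea
which matches the righthand side of~\eqref{eq:dim_upper_completely} exactly. So a subspace $\U$ whose dimension satisfies~\eqref{eq:dim_upper_completely} satisfies the hypothesis of Theorem~\ref{thm:general_cert}, and the desired genericity conclusion for Algorithm~2 follows.

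The running-time claim is then the easy part: Algorithm 2 collapses to a single invocation of Algorithm~1 on $\X_{\setft{Sep}}$, whose bottleneck is computing a basis of $S^2(\U)\cap I_2^{\perp}$ and running one simultaneous diagonalization in degree $d=2$; since $d$ is constant, the overall cost is polynomial in the ambient dimension $n_1\cdots n_m$, which is the quoted $O(n_1\cdots n_m)^{O(1)}$ (the statement's informal $O(n_1\cdots n_m)$ reflects that the input description itself has size $\Theta(n_1\cdots n_m)$ in this $d=2$ setting). I do not foresee a genuine obstacle here, since all the combinatorial and algebro-geometric work is already packaged inside Theorem~\ref{thm:general_cert}; the only care required is the algebraic simplification matching $p/n$ to the explicit expression in~\eqref{eq:dim_upper_completely}, and verifying that $\X_{\setft{Sep}}$ really is irreducible (so that taking $k=1$ is legitimate) rather than merely connected.
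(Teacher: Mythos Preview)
Your proposal is correct and mirrors the paper's intended argument: the paper explicitly omits the proof, noting it is ``very similar to the proof of Corollary~\ref{cor:XR},'' which amounts precisely to plugging the data for $\X_{\setft{Sep}}$ from Section~\ref{sec:XR} into Theorem~\ref{thm:general_cert} with $k=1$ and $d_1=2$. Your algebraic simplification $p/n = (n+1)/2 - (n_1+1)\cdots(n_m+1)/2^m$ is exactly the computation needed, and your observation that Algorithm~2 reduces to a single call of Algorithm~1 (since $\X_{\setft{Sep}}$ is irreducible) is accurate.
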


\begin{cor}
Let $m$ be a positive integer, let $n_1,\dots, n_m$ be positive integers, and let $\V=\field^{n_1}\otimes\dots \otimes \field^{n_m}$. If $\U\subseteq \V$ is a generically chosen linear subspace, possibly containing a generically chosen ``planted" element of $\X_B$, of dimension
\ba\label{eq:dim_upper_biseparable}
\dim(\U)\leq {\left(\frac{1}{n_1\cdots n_m}\right)\min_{\substack{S \subseteq [m]\\ 1 \leq \abs{S}\leq m-1}} \binom{\prod_{i \in S} n_i}{2}\binom{\prod_{j \in [m]\setminus S} n_j}{2}},
\ea
then (in time $O(2^m n_1\cdots n_m)$) Algorithm 2 either certifies that $\U \cap \X_{B}=\{0\}$, or else produces the planted element of $\U \cap \X_{B}$.  Note that the righthand side of~\eqref{eq:dim_upper_biseparable} is $\Omega(n_1 \cdots n_m)$.
\end{cor}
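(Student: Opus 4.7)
The plan is to apply Theorem~\ref{thm:general_cert} to the variety $\X_B$, using the description of its irreducible components from Section~\ref{sec:XR}. Recall that
\ba
\X_B = \bigcup_{\substack{T \subseteq [m] \\ 1 \le \abs{T} \le \floor{m/2}}} \X_{B,T}, \qquad \X_{B,T} := \Big\{ v \in \V : \rank\big( v : \bigotimes_{i \in T} \V_i^* \to \bigotimes_{j \notin T} \V_j \big) \le 1 \Big\},
\ea
and each $\X_{B,T}$ is an irreducible, non-degenerate (hence non-degenerate of order $d_T-1=1$) conic variety cut out by exactly $p_T = \binom{\prod_{i \in T} n_i}{2}\binom{\prod_{j \notin T} n_j}{2}$ linearly independent homogeneous polynomials of degree $d_T=2$. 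Writing $N = n_1 \cdots n_m$, this means $p_T = \delta_T \binom{N+1}{2}$ with $\delta_T = p_T / \binom{N+1}{2}$, so the bound $\frac{\delta_T}{d_T!}(n + d_T - 1)$ appearing in~\eqref{eq:general_cert} simplifies to $\frac{p_T (N+1)}{2 \binom{N+1}{2}} = \frac{p_T}{N}$.

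Next I would observe that by symmetry $p_T = p_{[m] \setminus T}$, so the minimum of $p_T / N$ over $1 \le \abs{T} \le \floor{m/2}$ equals the minimum over $1 \le \abs{T} \le m-1$. Applying Theorem~\ref{thm:general_cert} directly to the $\X_{B,T}$'s then yields that, for a generically chosen subspace $\U \subseteq \V$ (optionally containing a generically planted element of $\X_B$) satisfying
\ba
\dim(\U) \le \min_{\substack{S\subseteq [m]\\ 1\le \abs{S}\le m-1}} \frac{p_S}{N},
\ea
Algorithm~2 either certifies $\U \cap \X_B = \{0\}$ or returns the planted element. The running time is $k$ calls to Algorithm 1 where $k = \abs{\{T \subseteq [m] : 1 \le \abs{T} \le \floor{m/2}\}} = O(2^m)$, each of which runs in $O(N^{O(d_T)}) = O(N^{O(1)})$, giving the claimed $O(2^m N)$ bound (up to the fixed polynomial overhead of Algorithm 1 in the degree-2 case).

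Finally, I would verify the $\Omega(n_1\cdots n_m)$ lower bound by a direct computation: for $a = \prod_{i \in S} n_i$ and $b = N/a$, we have $p_S / N = (a-1)(b-1)/4$. Over the admissible range $\min_i n_i \le a \le N/\min_i n_i$, this quantity is minimized at the endpoints (since $a \mapsto (a-1)(N/a - 1)$ is concave on $[1,N]$), yielding a minimum of $(\min_i n_i - 1)(N/\min_i n_i - 1)/4 = \Omega(N)$ provided $\min_i n_i \ge 2$ (the case $n_i = 1$ is degenerate and the tensor factor can be discarded). No step of this proof is substantively difficult; the main bookkeeping point is simply matching the non-degeneracy hypothesis and the form of $p_T$ to the statement of Theorem~\ref{thm:general_cert}, which is why I expect the proof sketch in the paper to be brief or even omitted, as stated.
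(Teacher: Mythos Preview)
Your proposal is correct and follows exactly the approach the paper intends: the paper omits the proof, remarking only that it is ``very similar to the proof of Corollary~\ref{cor:XR},'' which itself is just a direct application of Theorem~\ref{thm:general_cert} after plugging in the relevant parameters for the irreducible components. Your only additions beyond what the paper would have written are the symmetry observation $p_T = p_{[m]\setminus T}$ (to pass from the index range $1\le |T|\le \floor{m/2}$ to $1\le |S|\le m-1$) and the explicit verification of the $\Omega(n_1\cdots n_m)$ bound, both of which are correct and nice to have spelled out.
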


\begin{cor}\label{cor:slice}
Let $m$ be a positive integer, let $n_1,\dots, n_m$ be positive integers, and let $\V=\field^{n_1}\otimes\dots \otimes \field^{n_m}$. If $\U\subseteq \V$ is a generically chosen linear subspace, possibly containing a generically chosen ``planted" element of $\X_{S}$, of dimension
\ba\label{eq:dim_upper_slice}
\dim(\U)\leq {\left(\frac{1}{n_1\cdots n_m}\right)\min_{i \in [m]} \binom{n_i}{2}\binom{\prod_{j \in [m]\setminus \{i\}} n_j}{2}},
\ea
then (in time $O(m n_1\cdots n_m)$) Algorithm 2 either certifies that $\U \cap \X_{S}=\{0\}$, or else produces the planted element of $\U \cap \X_{S}$. Note that the righthand side of~\eqref{eq:dim_upper_slice} is $\Omega(n_1 \cdots n_m)$.
\end{cor}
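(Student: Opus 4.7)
The plan is to follow the proof of Corollary~\ref{cor:XR} essentially verbatim, with the variety $\X_S$ in place of $\X_r$: since $\X_S$ is reducible, I would invoke Theorem~\ref{thm:general_cert} (which is built on Algorithm~2 and so handles reducible varieties) rather than Corollary~\ref{cor:generic_simplified} directly, and then everything reduces to identifying the structural parameters of $\X_S$ and plugging them in.

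First I would recall from Section~\ref{sec:XR} the irreducible decomposition $\X_S = \bigcup_{i \in [m]} \X_{S,i}$, where $\X_{S,i}$ consists of tensors $v \in \V$ of rank at most $1$ when flattened along the $i$-th mode. As recorded in Section~\ref{sec:XR}, each $\X_{S,i}$ is non-degenerate (hence non-degenerate of order $d_i - 1 = 1$, using $d_i = 2$) and is cut out by
\ba
p_i = \binom{n_i}{2} \binom{\prod_{j \in [m] \setminus \{i\}} n_j}{2}
\ea
linearly independent homogeneous polynomials of degree $d_i = 2$. Thus the hypotheses of Theorem~\ref{thm:general_cert} are satisfied with $n = n_1 \cdots n_m$ and $\delta_i = p_i / \binom{n+1}{2}$ for each $i \in [m]$.

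Next I would plug these parameters into Theorem~\ref{thm:general_cert}. The admissible dimension becomes
\ba
R \le \min_{i \in [m]} \frac{\delta_i}{2}(n+1) = \min_{i \in [m]} \frac{p_i(n+1)}{2\binom{n+1}{2}} = \min_{i \in [m]} \frac{p_i}{n_1 \cdots n_m},
\ea
which matches the bound~\eqref{eq:dim_upper_slice} exactly. The claimed runtime $O(m \cdot n_1 \cdots n_m)$ follows because Algorithm~2 invokes Algorithm~1 once on each of the $m$ irreducible components, and each invocation takes $O(n_1 \cdots n_m)$ time since $d_i = 2$ is constant. Finally, the asymptotic estimate $p_i = \Theta((n_1 \cdots n_m)^2)$, using $n_i \cdot \prod_{j \neq i} n_j = n_1 \cdots n_m$ together with $\binom{N}{2} = \Theta(N^2)$, yields $p_i / (n_1 \cdots n_m) = \Omega(n_1 \cdots n_m)$, justifying the final clause of the corollary. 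Since this is a direct specialization of an already-established general theorem, I do not foresee a substantive obstacle beyond correctly reading off the parameters of the components of $\X_S$ and confirming the arithmetic collapse in the dimension bound.
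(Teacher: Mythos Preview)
Your proposal is correct and matches the paper's approach exactly: the paper explicitly omits the proof, noting only that it is ``very similar to the proof of Corollary~\ref{cor:XR},'' i.e., one reads off from Section~\ref{sec:XR} that each irreducible component $\X_{S,i}$ is non-degenerate and cut out by $p_i=\binom{n_i}{2}\binom{\prod_{j\neq i}n_j}{2}$ degree-$2$ polynomials, and then plugs $n=n_1\cdots n_m$, $d_i=2$, $\delta_i=p_i/\binom{n+1}{2}$ into Theorem~\ref{thm:general_cert}. Your arithmetic reducing $\frac{\delta_i}{2}(n+1)$ to $p_i/(n_1\cdots n_m)$ is correct, and the runtime and asymptotic remarks are in line with the paper's conventions.
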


In all of these corollaries, the upper bound on $\dim(\U)$ is $\Omega(n_1\cdots n_m)$. Trivially, $\dim(\U)\leq n_1 \cdots n_m$ for any subspace, so this is a very mild condition on the dimension. \footnote{The maximum dimension of a completely entangled subspace over $\complex$ is
\ba
n_1\cdots n_m-\sum_{i=1}^m(n_i-1)-1.
\ea
The maximum dimension of a genuinely entangled subspace over $\complex$ is
\ba
\min_{\substack{S\subseteq [m]\\ 1\leq \abs{S} \leq \floor{m/2}}}\left(\prod_{i \in S} n_i-1\right)\left(\prod_{j \in [m]\setminus S} n_j-1\right).
\ea
The maximum dimension of a subspace that trivially intersects $\X_S$ over $\complex$ is
\ba
\min_{i \in [m]}\left(n_i-1\right)\left(\prod_{j \in [m]\setminus \{i\}} n_j-1\right).
\ea
The maximum dimension of such subspaces over $\real$ can be greater in general.}


%
%

\section{Application to low-rank decompositions over varieties}\label{sec:tensors}

\bnote{All Theorems/Corollaries in this section tweaked}
Let $\V, \W$ be arbitrary $\field$-vector spaces, and let $\X \subseteq \V$ be a non-degenerate conic variety.
In this section, we study $(\X,\W)$\textit{-decompositions},
which express a given $T \in \V \otimes \W$ in the form
\ba\label{eq:XW_decomp}
T=\sum_{i \in [R]} v_i \otimes w_i
\ea
for some $v_1,\dots, v_R \in \X$ and $w_1,\dots, w_r \in \W$, with $R$ as small as possible. 
%
We call the smallest possible $R$ for which there exists an $(\X,\W)$-decomposition of $T$ with $R$ summands the~\textit{$(\X, \W)$-rank} of $T$, and say that an $(\X,\W)$-decomposition~\eqref{eq:XW_decomp} of $T$ is the \textit{unique $(\X,\W)$-rank decomposition} of $T$ if it is an $(\X,\W)$-rank decomposition of $T$ and the only other $(\X,\W)$-rank decompositions of $T$ are those formed by permuting the $R$ summands of the decomposition. See Section~\ref{sec:variety_decomp} for further background.

In this section, we show that Algorithm~1 can be used to compute the (unique) $(\X,\W)$-rank decomposition of a generically chosen tensor $T$ of small enough $(\X,\W)$-rank. We apply these results to the case of tensor rank decompositions and $r$-aided rank decompositions. First note that Observation~\ref{obs:find} and Fact~\ref{fact:jennrich} yield a sufficient condition for a given $(\X,\W)$-decomposition of $T$ to be the unique $(\X,\W)$-rank decomposition of $T$:

\begin{prop}[Sufficient condition for uniqueness]\label{prop:suffsimplified}
Let $\X \subseteq \V:=\field^n$ be a conic variety cut out by $p$ linearly independent homogeneous degree-$d$ polynomials $f_1,\dots, f_p\in S^d(\V^*)$, let $I_d=\spn\{f_1,\dots, f_p\}\subseteq S^d(\V^*)$, let $T \in \V \otimes \W$, and let
\ba\label{eq:suff_decomp}
T=\sum_{a\in [R]} v_a \otimes w_a
\ea
be an $(\X,\W)$-decomposition of $T$. If $\{w_1,\dots, w_R\}$ is linearly independent, $\{v_1,\dots, v_R\}$ is linearly independent, and
\ba\label{eq:suff_phi}
S^d(T(\W^*)) \cap I_d^{\perp}=\spn\{v_1^{\otimes d},\dots, v_s^{\otimes d}\},
\ea
then~\eqref{eq:suff_decomp} is the unique $(\X,\W)$-rank decomposition of $T$, and furthermore this decomposition can be recovered from $T$ in $n^{O(d)}$ time using Algorithm~1.
\end{prop}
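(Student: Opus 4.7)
The plan is to reduce the problem to an application of Algorithm~1 on the linear subspace $\U := T(\W^*) \subseteq \V$, where $T$ is viewed as a linear map $T: \W^* \to \V$ under the isomorphism $\V \otimes \W \cong \setft{Hom}_{\field}(\W^*,\V)$. The first observation is that $\U = \spn\{v_1,\dots,v_R\}$, which has dimension exactly $R$: the containment $\U \subseteq \spn\{v_a\}$ is immediate from the decomposition, and the reverse containment follows from the linear independence of $\{w_1,\dots,w_R\}$ (each $v_a$ is the image under $T$ of a functional dual to $w_a$). In particular, a basis of $\U$ can be computed from $T$ in $n^{O(1)}$ time.

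Next, I would establish uniqueness of the decomposition. By hypothesis $S^d(\U) \cap I_d^{\perp} = \spn\{v_1^{\otimes d},\dots,v_R^{\otimes d}\}$, and since $v_1,\dots,v_R$ are linearly independent, Observation~\ref{obs:find} gives that $v_1,\dots,v_R$ are the only elements of $\U \cap \X$, up to scalar multiples. Now consider any other $(\X,\W)$-decomposition $T = \sum_{b \in [R']} v'_b \otimes w'_b$ of length $R' \leq R$. Since $T(\W^*) = \U$, each $v'_b \in \U \cap \X$, so each $v'_b$ is a scalar multiple of some $v_{\pi(b)}$. After absorbing the scalars into the $w'_b$'s, we get a decomposition $T = \sum_b v_{\pi(b)} \otimes \tilde{w}_b$, and expressing each side in the basis $\{v_1,\dots,v_R\}$ (using linear independence of the $v_a$'s), we find $R' = R$, $\pi$ is a permutation, and $\tilde{w}_{\pi^{-1}(a)} = w_a$. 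This proves uniqueness.

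For the algorithmic recovery, I would run Algorithm~1 on any basis of $\U$ together with the cutting polynomials $f_1,\dots,f_p$ of $\X$. In Step~1 the subspace $S^d(\U) \cap I_d^{\perp}$ has dimension $R$ by hypothesis, so the algorithm proceeds to Step~2. In Step~2 the tensor $T = \sum_{i=1}^R e_i \otimes P_i$ is constructed, and on a suitable basis we may take $P_i = v_i^{\otimes d}$. Viewed as a 3-mode tensor in $\field^R \otimes \V \otimes \V^{\otimes(d-1)}$, this tensor admits the decomposition $\sum_{i=1}^R e_i \otimes v_i \otimes v_i^{\otimes(d-1)}$. The three factor sets $\{e_i\}$, $\{v_i\}$, and $\{v_i^{\otimes(d-1)}\}$ are each linearly independent (the third by linear independence of the $v_i$'s), so Fact~\ref{fact:jennrich} guarantees that the simultaneous diagonalization algorithm recovers $v_1,\dots,v_R$ up to scale with probability one. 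Given the $v_i$'s, the corresponding $w_i$'s are then determined uniquely by solving the linear system coming from $T = \sum_a v_a \otimes w_a$ (using linear independence of $v_1,\dots,v_R$). The total running time is dominated by computing the intersection $S^d(\U) \cap I_d^{\perp}$ and by the simultaneous diagonalization step, both of which run in $n^{O(d)}$ time.

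The main technical point to verify carefully is the identification of the $(\X,\W)$-decomposition of $T$ with a tensor decomposition whose components satisfy the linear independence hypotheses of Fact~\ref{fact:jennrich}; beyond that, all the ingredients (Observation~\ref{obs:find} for uniqueness, Fact~\ref{fact:jennrich} for recovery, and the structural identification $\U = T(\W^*)$) are already in place.
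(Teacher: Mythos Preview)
Your proof is correct and follows essentially the same approach as the paper: set $\U = T(\W^*)$, use Observation~\ref{obs:find} to identify $\U \cap \X$, deduce uniqueness via linear independence, and recover the decomposition through Algorithm~1 and Fact~\ref{fact:jennrich}. The only step you leave implicit---why each $v'_b$ in an alternative length-$R'$ decomposition must lie in $\U$---is the dimension count $R = \dim\U \leq \dim\spn\{v'_b\} \leq R' \leq R$, which forces $\spn\{v'_b\} = \U$; the paper compresses this into the single observation that the $(\X,\W)$-rank equals $R$.
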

\begin{proof}
By Observation~\ref{obs:find} and Fact~\ref{fact:jennrich}, it holds that $v_1,\dots, v_R$ are the only elements of $T(\W^*) \cap \X$ (up to scale), and these are recovered in $n^{O(d)}$ time by Algorithm~1 (see the proof of Corollary~\ref{cor:generic_simplified} for more details). It remains only to recover the vectors $\{w_1,\dots, w_R\}$ up to scale. Since $\{w_1,\dots, w_R\}$ is linearly independent, the $(\X,\W)$-rank of $T$ is equal to $R$. It follows that any $(\X,\W)$-rank decomposition must involve (scalar multiples of) $v_1,\dots, v_R$. Since $v_1,\dots, v_R$ are linearly independent, they uniquely determine $w_1,\dots, w_R$. To recover $w_1,\dots, w_R$, let $f_1,\dots, f_r \in \V^*$ be dual to $v_1,\dots, v_R$, i.e. satisfy $f_i(v_j)=\delta_{i,j}$. Then $w_i=T(f_i)$ for all $i \in [R]$. This completes the proof.
\end{proof}



Combining Proposition~\ref{prop:suffsimplified} with Theorem~\ref{thm:generic_cert:simplified}, we obtain the following genericity guarantee for using Algorithm~1 to recover $(\X,\W)$ decompositions (this is slightly more general than Theorem~\ref{thm:decomp} in the Introduction).



\begin{theorem}\label{cor:generic_decomp}
Let $\X \subseteq \V=\field^n$ be an irreducible conic variety cut out by $p=\delta \binom{n+d-1}{d}$ linearly independent homogeneous degree-$d$ polynomials for constants $d \ge 2$ and $\delta\in(0,1)$. Suppose furthermore that $\X$ is non-degenerate of order $d-1$. 
Then for a tensor $T \in \V \otimes \W$ of the form
\ba\label{eq:XW}
T=\sum_{a \in [R]} v_a \otimes w_a,
\ea
where $v_1,\dots, v_R$ are chosen generically from $\X$, $\{w_1,\dots, w_R\}$ is linearly independent, and
\ba\label{eq:XW_R}
R \leq \min\Big\{ \frac{\delta}{d!}\cdot (n+d-1), \dim(\W)\Big\},
\ea
the following holds: In $n^{O(d)}$ time, Algorithm 1 can be used to recover the decomposition~\eqref{eq:XW} and certify that this is the unique $(\X,\W)$-rank decomposition of $T$.
%
\end{theorem}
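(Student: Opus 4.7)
The strategy is to reduce the $(\X,\W)$-decomposition problem to an intersection problem and then apply Algorithm~1 together with the genericity guarantee in Corollary~\ref{cor:generic_simplified}. View $T$ as a linear map $T : \W^* \to \V$ and set $\U := T(\W^*) \subseteq \V$; a basis for $\U$ can be extracted from $T$ in polynomial time by standard linear algebra. Since $\{w_1,\dots,w_R\}$ is linearly independent and $\X$ is non-degenerate of order $d-1 \geq 1$ (hence in particular of order $1$), a generic choice of $v_1,\dots,v_R \in \X$ yields linearly independent vectors. Note that $R \leq \tfrac{\delta}{d!}(n+d-1) \leq \tfrac{1}{2}(n+1) \leq n$, so linear independence of $R$ elements of $\X$ is indeed a generic condition. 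Consequently, $\U = \spn\{v_1,\dots,v_R\}$ has dimension exactly $R$.

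Next, invoke Corollary~\ref{cor:generic_simplified} with the parameter choice $s = R$: the rank bound $R \leq \tfrac{\delta}{d!}(n+d-1)$ matches exactly, so on a Zariski open dense subset of $\X^{\times R}$, Algorithm~1 on input any basis of $\U$ runs in $n^{O(d)}$ time and certifies ``The only elements of $\U \cap \X$ are $\{v_1,\dots,v_R\}$ (up to scale)''. In particular, Algorithm~1 recovers each $v_i$ up to a nonzero scalar and simultaneously verifies the condition $S^d(\U) \cap I_d^{\perp} = \spn\{v_1^{\otimes d},\dots,v_R^{\otimes d}\}$ appearing in Proposition~\ref{prop:suffsimplified}. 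To complete the recovery, we use linear independence of the $v_i$ to pick dual functionals $f_1,\dots,f_R \in \V^*$ satisfying $f_i(v_j) = \delta_{i,j}$ and set $w_i := T(f_i)$; by construction these match the $w_i$ of the original decomposition, up to the canonical rescaling ambiguity $v_i \mapsto \alpha_i v_i, w_i \mapsto \alpha_i^{-1} w_i$. Proposition~\ref{prop:suffsimplified} then certifies that the recovered decomposition is the unique $(\X,\W)$-rank decomposition of $T$.

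The main technical content has already been absorbed into Corollary~\ref{cor:generic_simplified} (which in turn rests on Theorem~\ref{thm:generic_cert:simplified} and the inductive argument of Proposition~\ref{prop:simplified}). The only remaining issue is to ensure that the two genericity conditions in play—linear independence of $v_1,\dots,v_R$ and the open dense condition supplied by Corollary~\ref{cor:generic_simplified}—hold simultaneously. Both are Zariski open on $\X^{\times R}$, and since $\X$ is irreducible and both conditions are generic, their intersection is open dense; this is the only place the irreducibility hypothesis on $\X$ enters. The overall runtime is dominated by the call to Algorithm~1 and is therefore $n^{O(d)}$, which matches the claimed bound.
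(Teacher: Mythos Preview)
Your proof is correct and follows essentially the same approach as the paper: reduce to the intersection problem via $\U=T(\W^*)=\spn\{v_1,\dots,v_R\}$, invoke Corollary~\ref{cor:generic_simplified} with $s=R$ to get Algorithm~1 to recover the $v_i$, and then apply Proposition~\ref{prop:suffsimplified} (which also explains how to recover the $w_i$ via dual functionals) to conclude uniqueness. One small inaccuracy: the intersection of two Zariski open dense subsets is open dense in any topological space, so irreducibility of $\X$ is not needed for that step; rather, irreducibility (together with non-degeneracy) is what ensures that each condition is individually open dense on $\X^{\times R}$, and this is already baked into Corollary~\ref{cor:generic_simplified}.
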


In particular, this theorem proves that the $(\X,\W)$-rank decomposition of a generically chosen tensor $T \in \V \otimes \W$ of $(\X,\W)$-rank $R$ upper bounded by~\eqref{eq:XW_R} can be recovered and certified as unique by our algorithm. Note that, since $\X$ is non-degenerate, a generically chosen collection of $R$ elements of $\X$ will be linearly independent. Hence, one can alternatively set $\W=\V$ and also choose $\{w_1,\dots, w_R\}$ generically from $\X$, and the same uniqueness/recovery results hold. More generally, one can choose $\{w_1,\dots, w_R\}$ generically from any non-degenerate variety $\Y \subseteq \W$, and the same uniqueness/recovery results hold.

By letting $\X=\X_1=\{u \otimes v : u \in \field^{n_1}, v \in \field^{n_2}\}$, we obtain the corollary for recovering unique decompositions of order-$3$ tensors with potentially unequal dimensions.
\begin{cor}\label{cor:tensor_decomp}
Let $n_1,n_2,n_3$ be positive integers. For a generically chosen tensor $T \in \field^{n_1}\otimes \field^{n_2} \otimes \field^{n_3}$ of tensor rank
\ba
R \leq \min\Bigg\{\frac{1}{4}(n_1-1)(n_2-1), n_3\Bigg\},
\ea
in $(n_1 n_2)^{O(1)}$ time Algorithm 1 can be used to recover the tensor rank decomposition of $T$ and certify that it is unique.
\end{cor}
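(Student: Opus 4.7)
The plan is to derive Corollary~\ref{cor:tensor_decomp} as a direct specialization of Theorem~\ref{cor:generic_decomp}, the only work being the identification of the right data and a short binomial calculation to match the claimed rank bound.

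First I would set up the reduction: view $T\in \field^{n_1}\otimes\field^{n_2}\otimes\field^{n_3}$ as an element of $\V\otimes\W$ with $\V=\field^{n_1}\otimes\field^{n_2}$ (so $n=n_1n_2$) and $\W=\field^{n_3}$, and take $\X=\X_1\subseteq\V$ to be the determinantal variety of rank-one matrices discussed in Section~\ref{sec:XR}. Then $\X_1$ is irreducible and conic, cut out by the $p=\binom{n_1}{2}\binom{n_2}{2}$ linearly independent $2\times 2$ minors of degree $d=2$, and by the discussion following equation~\eqref{eq:XR}, $\X_1$ is non-degenerate of order $d-1=1$, so every hypothesis of Theorem~\ref{cor:generic_decomp} is verified. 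Rank-one tensors in $\V\otimes\W$ are precisely the rank-one tensors in $\field^{n_1}\otimes\field^{n_2}\otimes\field^{n_3}$, so $(\X_1,\W)$-rank coincides with tensor rank and $(\X_1,\W)$-rank decompositions coincide with tensor rank decompositions.

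Next I would compute the parameter $\delta$ explicitly, namely
\begin{equation}
\delta \;=\; \frac{p}{\binom{n+d-1}{d}} \;=\; \frac{\binom{n_1}{2}\binom{n_2}{2}}{\binom{n_1n_2+1}{2}},
\end{equation}
and simplify the first entry of the min in~\eqref{eq:XW_R}:
\begin{equation}
\frac{\delta}{d!}\,(n+d-1)
\;=\; \frac{\binom{n_1}{2}\binom{n_2}{2}}{2\binom{n_1n_2+1}{2}}(n_1n_2+1)
\;=\; \frac{\binom{n_1}{2}\binom{n_2}{2}}{n_1n_2}
\;=\; \frac{(n_1-1)(n_2-1)}{4},
\end{equation}
so the bound in~\eqref{eq:XW_R} becomes exactly $\min\{\tfrac14(n_1-1)(n_2-1),\,n_3\}$, matching the hypothesis of the corollary.

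Finally I would translate genericity of $T$ into genericity of its components. A generic tensor $T$ of tensor rank $R$ is by definition of the form $T=\sum_{a=1}^R u_a\otimes v_a\otimes w_a$ with $(u_a,v_a,w_a)$ chosen generically from $\field^{n_1}\times\field^{n_2}\times\field^{n_3}$. Since the Segre map $(u,v)\mapsto u\otimes v$ is dominant onto $\X_1$, the products $u_a\otimes v_a$ are then generic points of $\X_1$, and since $R\le n_3$ the vectors $w_1,\dots,w_R$ are almost surely linearly independent, so Theorem~\ref{cor:generic_decomp} applies. It certifies that this is the unique $(\X_1,\W)$-rank decomposition of $T$ and recovers it via Algorithm~1; by the identification above, this is the unique tensor rank decomposition of $T$. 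The runtime from Theorem~\ref{cor:generic_decomp} is $n^{O(d)}=(n_1n_2)^{O(2)}=(n_1n_2)^{O(1)}$, completing the proof. There is no substantive obstacle here; the only slightly nontrivial step is the binomial simplification, and the only thing to be careful about is checking that ``generic tensor of rank $R$'' in the statement is equivalent to a generic rank-$R$ decomposition, which holds because the parametrization by components is dominant onto the rank-$R$ secant variety.
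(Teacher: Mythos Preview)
Your proposal is correct and follows exactly the paper's approach: the paper simply states that Corollary~\ref{cor:tensor_decomp} follows from Theorem~\ref{cor:generic_decomp} by taking $\X=\X_1\subseteq \field^{n_1}\otimes\field^{n_2}$, and you have supplied precisely the verification of hypotheses and the binomial simplification (already appearing in Corollary~\ref{corr:intro:rank1}) that make this explicit. Your extra care in translating ``generic tensor of rank $R$'' into generic points of $\X_1$ via dominance of the Segre map is a welcome detail that the paper leaves implicit.
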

The above corollary shows that when $n_3 =\Omega(n_1 n_2)$, we can go all the way up to rank $\Omega(n_1 n_2),$ which is the maximum possible rank up to constants.

Letting $\X=\X_1=\{u \otimes v : u,v \in \field^n\}$ and $\W=\field^{n}\otimes \field^{n}$, and choosing $w_1,\dots, w_R$ generically from $\X_1$ (as in the discussion following Theorem~\ref{cor:generic_decomp}), we obtain the following corollary for order-$4$ tensors (this is a special case of Corollary~\ref{cor:general_tensor_decomp} below).

\begin{cor}\label{cor:tensor_decompB}
For any positive integer $n$, and a generically chosen tensor $T \in \field^{n}\otimes \field^{n} \otimes \field^{n} \otimes \field^{n}$ of tensor rank
\ba
R \leq  \frac{(n-1)^2}{4},
\ea
in $n^{O(1)}$ time Algorithm 1 can be used to recover the tensor rank decomposition of $T$ and certify that it is unique.
\end{cor}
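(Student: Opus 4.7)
\textbf{Proof proposal for Corollary~\ref{cor:tensor_decompB}.} The plan is to deduce Corollary~\ref{cor:tensor_decompB} as a direct specialization of Theorem~\ref{cor:generic_decomp} by flattening the order-$4$ tensor into an order-$2$ object valued in matrices. Concretely, I would set $\V = \W = \field^n \otimes \field^n$ so that $\V \otimes \W \cong \field^n \otimes \field^n \otimes \field^n \otimes \field^n$. Under this identification, a tensor rank decomposition
$$T = \sum_{i=1}^R u_i \otimes v_i \otimes y_i \otimes z_i$$
becomes $T = \sum_{i=1}^R (u_i \otimes v_i) \otimes (y_i \otimes z_i)$, where each factor $u_i \otimes v_i$ lies in the rank-one variety $\X_1 \subseteq \V$ and each $y_i \otimes z_i$ lies in the rank-one variety $\X_1 \subseteq \W$. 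Thus a tensor rank decomposition of $T$ is precisely an $(\X_1, \W)$-decomposition of $T$ whose right-hand factors additionally happen to lie in $\X_1 \subseteq \W$.

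Next I would collect the numerical inputs needed to invoke Theorem~\ref{cor:generic_decomp}. With $\V = \field^n \otimes \field^n$ of dimension $n^2$, the variety $\X = \X_1 \subseteq \V$ is irreducible, non-degenerate of order $d-1 = 1$, and cut out by the $p = \binom{n}{2}^2$ linearly independent $2 \times 2$ minors, which are homogeneous of degree $d = 2$ (see Section~\ref{sec:XR}). The corresponding density is
$$\delta \;=\; \frac{p}{\binom{n^2+1}{2}} \;=\; \frac{n^2(n-1)^2/4}{n^2(n^2+1)/2} \;=\; \frac{(n-1)^2}{2(n^2+1)},$$
so the rank bound of Theorem~\ref{cor:generic_decomp} reads
$$R \;\le\; \frac{\delta}{d!}\bigl(\dim \V + d - 1\bigr) \;=\; \frac{1}{2}\cdot\frac{(n-1)^2}{2(n^2+1)}\cdot (n^2+1) \;=\; \frac{(n-1)^2}{4},$$
while the constraint $R \le \dim(\W) = n^2$ is automatic.

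Finally, I would invoke the variant of Theorem~\ref{cor:generic_decomp} noted in the paragraph immediately following its statement: the right-hand factors $w_1,\dots,w_R$ may be drawn generically from any non-degenerate subvariety $\Y \subseteq \W$, not just from all of $\W$. Taking $\Y = \X_1 \subseteq \W$ (non-degenerate since $\spn(\X_1)=\W$), Algorithm~1 recovers the $(\X_1, \W)$-decomposition and certifies its uniqueness in $(n^2)^{O(1)} = n^{O(1)}$ time. Because $(\X_1, \X_1)$-pairs are precisely parameterized by $4$-tuples $(u_i, v_i, y_i, z_i) \in (\field^n)^4$ via a dominant morphism, genericity in $\X_1 \times \X_1$ pulls back to genericity in $(\field^n)^4$, so the decomposition thus recovered is the unique tensor rank decomposition of the generic order-$4$ tensor $T$ of rank at most $(n-1)^2/4$.

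The only non-routine point in the argument is this last genericity translation: one must confirm that a generically chosen rank-$R$ order-$4$ tensor in the rank-one parameterization corresponds, via the flattening, to a generically chosen $(\X_1, \W)$-rank-$R$ tensor whose right factors lie generically in $\X_1 \subseteq \W$. This is handled by the fact that the Segre parameterization $(\field^n)^2 \twoheadrightarrow \X_1$ is dominant with irreducible source, so the preimage of any Zariski-open dense subset of $\X_1 \times \X_1$ is Zariski-open dense in $(\field^n)^4$; everything else is a direct substitution into Theorem~\ref{cor:generic_decomp}.
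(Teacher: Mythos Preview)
Your proposal is correct and follows essentially the same approach as the paper: the paper introduces Corollary~\ref{cor:tensor_decompB} by writing ``Letting $\X=\X_1=\{u \otimes v : u,v \in \field^n\}$ and $\W=\field^{n}\otimes \field^{n}$, and choosing $w_1,\dots, w_R$ generically from $\X_1$ (as in the discussion following Theorem~\ref{cor:generic_decomp}), we obtain the following corollary for order-$4$ tensors,'' which is precisely your flattening-plus-Theorem~\ref{cor:generic_decomp} argument, and your numerical computation of the rank bound matches Corollary~\ref{cor:tensor_decomp} specialized to $n_1=n_2=n$.
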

%

More generally, we have the following corollary for tensors of arbitrary order:

\begin{cor}\label{cor:general_tensor_decomp}
Let $n$ be a positive integer, and let $m \geq 3$ be an integer. Then for a generically chosen tensor $T\in (\field^{n})^{\otimes m}$ of tensor rank
\ba\label{eq:R_bound}
R \leq \min\left\{ n^{\floor{m/2}}, \frac{n^{\ceil{m/2}}+1}{2}-\frac{(n+1)^{\ceil{m/2}}}{2^{\ceil{m/2}}}\right\},
\ea
in $n^{O(m)}$ time Algorithm 1 can be used to recover the tensor rank decomposition of $T$ and certify that it is unique.
\end{cor}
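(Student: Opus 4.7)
The plan is to reduce order-$m$ tensor decomposition to an $(\X_{\setft{Sep}},\W)$-decomposition problem in the sense of Theorem~\ref{cor:generic_decomp}. First I would flatten $T \in (\field^n)^{\otimes m}$ by grouping factors: set $\V = (\field^n)^{\otimes \ceil{m/2}}$ and $\W = (\field^n)^{\otimes \floor{m/2}}$, so that a tensor rank decomposition $T = \sum_{i=1}^R x_{i,1}\otimes \dots \otimes x_{i,m}$ becomes $T = \sum_{i=1}^R v_i \otimes w_i$ with $v_i = x_{i,1} \otimes \dots \otimes x_{i,\ceil{m/2}} \in \V$ separable and $w_i = x_{i,\ceil{m/2}+1} \otimes \dots \otimes x_{i,m} \in \W$ separable. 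Drawing $T$ generically among tensors of tensor rank $R$ corresponds to drawing $v_1,\dots,v_R$ generically from $\X_{\setft{Sep}} \subseteq \V$ and $w_1,\dots,w_R$ generically from $\X_{\setft{Sep}} \subseteq \W$.

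Next I would invoke Theorem~\ref{cor:generic_decomp} with $\X = \X_{\setft{Sep}} \subseteq \V$. By Section~\ref{sec:XR}, $\X_{\setft{Sep}}$ is an irreducible non-degenerate conic variety cut out by $p = \binom{n^{\ceil{m/2}}+1}{2} - \binom{n+1}{2}^{\ceil{m/2}}$ linearly independent homogeneous polynomials of degree $d=2$, and being non-degenerate of order $1 = d-1$ satisfies the hypotheses of Theorem~\ref{cor:generic_decomp}. As noted in the discussion following that theorem, the $w_i$'s may equivalently be drawn generically from any non-degenerate subvariety of $\W$ (here, the separable variety in $\W$), so Algorithm~1 applies and runs in $n^{O(m)}$ time (since $\dim(\V) = n^{\ceil{m/2}} \leq n^m$ and $d=2$).

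Now I verify that the rank bound of Theorem~\ref{cor:generic_decomp} instantiates to~\eqref{eq:R_bound}. With $n_{\V} = n^{\ceil{m/2}}$ and $d=2$, one has $\delta = p/\binom{n_{\V}+1}{2} = 2p/(n_{\V}(n_{\V}+1))$, so
\begin{equation}
\frac{\delta}{d!}(n_{\V}+d-1) = \frac{p}{n_{\V}} = \frac{n^{\ceil{m/2}}+1}{2} - \frac{(n+1)^{\ceil{m/2}}}{2^{\ceil{m/2}}},
\end{equation}
using $\binom{n+1}{2}^{\ceil{m/2}} = n^{\ceil{m/2}}(n+1)^{\ceil{m/2}}/2^{\ceil{m/2}}$. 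The other term is $\dim(\W) = n^{\floor{m/2}}$, matching~\eqref{eq:R_bound} exactly.

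Finally, Algorithm~1 recovers the $v_i$'s (and thus, via duality with respect to linearly independent $w_i$'s, the $w_i$'s as well). Each $v_i \in \X_{\setft{Sep}}$ is a rank-$1$ tensor in $(\field^n)^{\otimes \ceil{m/2}}$, whose factorization into $x_{i,1},\dots,x_{i,\ceil{m/2}}$ is unique up to rescaling and can be extracted in polynomial time by iterated matrix rank-$1$ factorization; likewise for $w_i$. This assembles the full order-$m$ rank-$R$ decomposition and certifies uniqueness. The main step is setting up the flattening and checking the arithmetic of the rank bound; the hard genericity analysis has already been done once and for all in Theorem~\ref{cor:generic_decomp}.
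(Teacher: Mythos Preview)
Your proposal is correct and follows essentially the same approach as the paper: flatten $(\field^n)^{\otimes m}$ into $\V\otimes\W$ with $\V=(\field^n)^{\otimes\ceil{m/2}}$ and $\W=(\field^n)^{\otimes\floor{m/2}}$, apply Theorem~\ref{cor:generic_decomp} with $\X=\X_{\setft{Sep}}\subseteq\V$ (using the discussion after that theorem to handle the $w_i$ drawn from the separable variety in $\W$), and read off the rank bound. Your write-up actually supplies more detail than the paper's proof, namely the explicit arithmetic reducing $\tfrac{\delta}{d!}(n_{\V}+d-1)$ to $p/n_{\V}$ and the remark on recovering the individual factors $x_{i,j}$ from the rank-one $v_i,w_i$.
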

It can be shown that for all $n \ge 8$, the bound~\eqref{eq:R_bound} translates to 
\ba
R \leq \begin{cases} n^{\floor{m/2}} & \text{ if } m \text{ is odd} \\ \frac{n^{\ceil{m/2}}+1}{2}-\frac{(n+1)^{\ceil{m/2}}}{2^{\ceil{m/2}}} & \text{ if } m \text{ is even},\end{cases}
\ea

which is $\Omega(n^{\lfloor m/2 \rfloor})$ as $n$ grows. 
For even order $m$, 
our results extend to non-symmetric tensors the bounds known for symmetric decompositions~\cite{MSS, BCPV}
(see also \cite{Vij20} for related references). In particular, we are not aware of any existing genericity guarantees (prior to our work) for non-symmetric tensors of even order $m$ that work for rank $R = \Omega(n^{{m/2}})$.\footnote{For odd $m$, a variant of Harshman's algorithm~\cite{Har72} works for rank  $O(n^{(m-1)/2})$ (see e.g., \cite{BCMV}).}


\begin{proof}[Proof of Corollary~\ref{cor:general_tensor_decomp}]
We prove the statement by regarding tensor decompositions in $(\field^n)^{\otimes m}$ as $(\X,\W)$-decompositions, where $\X=\X_{\setft{Sep}}\subseteq (\field^n)^{\otimes \ceil{m/2}}$ and the elements of $\W$ appearing in the decomposition are constrained to be in $\tilde{\X}_{\setft{Sep}}\subseteq (\field^n)^{\otimes \floor{m/2}}$.

Since $\X_{\setft{Sep}} \subset (\field^{n})^{\otimes \ceil{m/2}}$ is non-degenerate and cut out by
\ba
\binom{n^{\ceil{m/2}}+1}{2}-\binom{n+1}{2}^{\ceil{m/2}}
\ea
many linearly independent homogeneous polynomials of degree 2 in $n^{\ceil{m/2}}$ variables, it follows from Theorem~\ref{cor:generic_decomp} (and the subsequent discussion) that our algorithm recovers unique tensor decompositions of rank
\ba
R \leq \min\left\{ n^{\floor{m/2}}, \frac{n^{\ceil{m/2}}+1}{2}-\frac{(n+1)^{\ceil{m/2}}}{2^{\ceil{m/2}}}\right\}.
\ea
This completes the proof.
%
%
%
%
%
\end{proof}

We obtain an analogous result for symmetric tensor decompositions. For a symmetric tensor $T\in S^m(\field^n)$, a \textit{symmetric decomposition} of $T$ is a decomposition of the form $T=\sum_{a\in [R]} \alpha_a v_a^{\otimes m}$ for some $\alpha_1,\dots, \alpha_R \in \field$ and $v_1,\dots, v_R \in \field^n$ (in the terminology introduced in Section~\ref{sec:variety_decomp}, these exactly correspond to $\X^{\vee}_{\setft{Sep}}$-decompositions). The \textit{Waring rank} of $T$ is the minimum number of terms needed in the decomposition, and a Waring rank decomposition of $T$ is said to be unique if the only other Waring rank decompositions of $T$ are those obtained by permuting terms in the sum. We say that a property holds for a \textit{generically chosen} symmetric tensor $T \in S^m(\field^n)$ of Waring rank at most $R$ if the property holds for every tensor of the form $T=\sum_{a\in [R]} \alpha_a v_a^{\otimes m}$, where $\alpha_1 v_1^{\otimes m},\dots, \alpha_R v_R^{\otimes m}\in \X^{\vee}_{\setft{Sep}}$ are generically chosen.

\begin{cor}\label{cor:symmetric_tensor_decomp}
Let $n$ be a positive integer, and let $m \geq 3$ be an integer. Then for a generically chosen symmetric tensor $T\in S^m(\field^n)$ of Waring rank
\ba\label{eq:R_bound_symmetric}
R \leq \min\left\{ \binom{n+\floor{m/2}-1}{\floor{m/2}}, \frac{\binom{n+\ceil{m/2}-1}{\ceil{m/2}}}{2}-\frac{\binom{n+2\ceil{m/2}-1}{2\ceil{m/2}}}{\binom{n+\ceil{m/2}-1}{\ceil{m/2}}}\right\},
\ea
in $n^{O(m)}$ time our Algorithm 1 can be used to recover the Waring rank decomposition of $T$ and certify that it is unique.
\end{cor}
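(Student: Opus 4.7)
The plan is to mirror the strategy used for Corollary~\ref{cor:general_tensor_decomp}, recasting a Waring decomposition as an $(\X,\W)$-decomposition living in a bipartition of the $m$ symmetric factors. Specifically, I would embed $S^m(\field^n)$ into $\V \otimes \W$ with $\V := S^{\ceil{m/2}}(\field^n)$ and $\W := S^{\floor{m/2}}(\field^n)$ via the canonical inclusion, and take $\X := \X_{\setft{Sep}}^{\vee} \subseteq \V$ as described in Section~\ref{sec:XR}. Under this identification, every Waring decomposition $T = \sum_{a} \alpha_a v_a^{\otimes m}$ becomes the $(\X,\W)$-decomposition $T = \sum_{a} (\alpha_a v_a^{\otimes \ceil{m/2}}) \otimes v_a^{\otimes \floor{m/2}}$, where each $w_a = v_a^{\otimes \floor{m/2}}$ additionally lies in the analogous variety $\tilde{\X}_{\setft{Sep}}^{\vee} \subseteq \W$.

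I would then apply Theorem~\ref{cor:generic_decomp} with $d = 2$ to $\X_{\setft{Sep}}^{\vee} \subseteq \V$. Writing $N := \dim(\V) = \binom{n+\ceil{m/2}-1}{\ceil{m/2}}$, Section~\ref{sec:XR} supplies the count $p = \binom{N+1}{2} - \binom{n+2\ceil{m/2}-1}{2\ceil{m/2}}$ of linearly independent quadrics cutting out $\X_{\setft{Sep}}^{\vee}$, and non-degeneracy of order $d-1=1$ follows from $\spn(\X_{\setft{Sep}}^{\vee})=\V$. Setting $\delta = p/\binom{N+1}{2}$, the upper bound of Theorem~\ref{cor:generic_decomp} becomes
\[
R \leq \min\Big\{\dim(\W),\, \tfrac{\delta}{2}(N+1)\Big\} = \min\Big\{\binom{n+\floor{m/2}-1}{\floor{m/2}},\, \tfrac{p}{N}\Big\},
\]
and expanding $p/N = (N+1)/2 - \binom{n+2\ceil{m/2}-1}{2\ceil{m/2}}/N$ yields a bound strictly stronger than~\eqref{eq:R_bound_symmetric}. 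To verify the hypotheses of the theorem, note that a generic $T$ of Waring rank at most $R$ arises by choosing $\alpha_a v_a^{\otimes m}$ generically in $\X_{\setft{Sep}}^{\vee} \subseteq S^m(\field^n)$. Since the map $\alpha v^{\otimes m} \mapsto \alpha v^{\otimes \ceil{m/2}}$ is a dominant morphism $\X_{\setft{Sep}}^{\vee}(S^m(\field^n)) \to \X_{\setft{Sep}}^{\vee}(\V)$, these projections are generic in $(\X_{\setft{Sep}}^{\vee})^{\times R}\subseteq \V^{\times R}$, while the accompanying vectors $\{v_a^{\otimes \floor{m/2}}\}\subseteq \W$ are almost surely linearly independent whenever $R \leq \dim(\W)$.

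Theorem~\ref{cor:generic_decomp} then certifies that Algorithm~1, run on any basis of $T(\W^*)$, recovers the unique $(\X,\W)$-rank decomposition of $T$ in time polynomial in $\dim(\V\otimes \W) = O(n^m)$, i.e.\ in $n^{O(m)}$ time. To pass from uniqueness of the $(\X,\W)$-rank decomposition to uniqueness of the Waring decomposition, I would observe that any Waring decomposition of $T$ induces an $(\X,\W)$-decomposition of the same length, so the $(\X,\W)$-rank of $T$ is at most its Waring rank, and the recovered $(\X,\W)$-rank decomposition automatically takes the form $\sum_a (\alpha_a v_a^{\otimes \ceil{m/2}}) \otimes v_a^{\otimes \floor{m/2}}$; from each pair $(v_a, w_a)$ one reads off a unique $v_a \in \field^n$ (up to a discrete symmetry) and $\alpha_a \in \field$, recovering the Waring summand $\alpha_a v_a^{\otimes m}$.

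The main technical nuisance I expect is the bookkeeping around genericity: namely, certifying that a generic point of the instance space of Waring-rank-$R$ tensors in $S^m(\field^n)$ does pull back to a generic point of $(\X_{\setft{Sep}}^{\vee})^{\times R}\subseteq \V^{\times R}$ for the purpose of invoking Theorem~\ref{cor:generic_decomp}, and that the induced $w_a$'s are linearly independent. Both reduce to checking that the composition of the two parameterizations (one for $S^m(\field^n)$ and one for $\V$) by the common base $(\field^{\times} \times \field^n)^{\times R}$ is dominant with Zariski-open fibers, and that linear independence of $\{v_a^{\otimes \floor{m/2}}\}$ for $R \leq \binom{n+\floor{m/2}-1}{\floor{m/2}}$ is a Zariski-open condition witnessed, e.g., by coordinate vectors.
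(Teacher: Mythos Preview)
Your proposal is correct and follows essentially the same approach as the paper: both split the $m$ symmetric factors into $\ceil{m/2}$ and $\floor{m/2}$, take $\X=\X_{\setft{Sep}}^{\vee}\subseteq S^{\ceil{m/2}}(\field^n)$ and $\W=S^{\floor{m/2}}(\field^n)$, and invoke Theorem~\ref{cor:generic_decomp} with $d=2$. The paper additionally opens by reducing to $\field=\complex$ via Fact~\ref{fact:real_vs_complex} before carrying out the same computation, but otherwise your argument (including the genericity-transfer bookkeeping you flag) matches theirs.
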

Note that the bound~\eqref{eq:R_bound_symmetric} is $\Omega(n^{\floor{m/2}})$ as $n$ grows. For example, when $m=4$ the bound~\eqref{eq:R_bound_symmetric} becomes $R \leq \frac{1}{6}n(n-1)$. This matches the best known bounds for symmetric decompositions~\cite{Har72, MSS,BCPV}
(see also \cite{Vij20} for related references). Note that Corollary~\ref{cor:tensor_decomp} obtains similar bounds for non-symmetric tensors. In particular, we are not aware of any existing algorithmic guarantees (prior to our work) for generically chosen non-symmetric tensors of even $m$ that work for rank $R = \Omega(n^{ m/2 })$.

\begin{proof}
Note that the set of complex symmetric product tensors is equal to the Zariski closure of the set of real symmetric product tensors (see~e.g.~\cite[Theorem 2.2.9.2]{mangolte2020real}). Thus, by Fact~\ref{fact:real_vs_complex} it suffices to prove this statement over $\complex$. Let
\ba
\tilde{\X}_{\setft{Sep}}^{\vee} = \{v^{\otimes \ceil{m/2}} : v \in \complex^n\} \subseteq S^{\ceil{m/2}}(\complex^n)
\ea
be the set of symmetric product tensors in $S^{\ceil{m/2}}(\complex^n)$ (we omit the scalars $\alpha$ as they are redundant over $\complex$). Recall that $\X$ is non-degenerate inside of $S^{\ceil{m/2}}(\complex^n)$ and is cut out by
\ba
p=\binom{\binom{n+\ceil{m/2}-1}{\ceil{m/2}} +1}{2} - \binom{n+2\ceil{m/2}+1}{2\ceil{m/2}}
\ea
many homogeneous linearly independent polynomials of degree $d=2$. Thus, for generically chosen $ v_1^{\otimes \ceil{m/2}},\dots, v_R^{\otimes \ceil{m/2}}\in\X_{\setft{Sep}}^{\vee}$, it holds that $\{v_a^{\otimes \floor{m/2}} : a \in [R]\}$ is linearly independent, and by Theorem~\ref{cor:generic_decomp},
\ba\label{eq:symdecomp}
T=\sum_{a \in [R]} v_a^{\otimes m}
\ea
is the unique tensor rank decomposition of $T$ (and hence the unique Waring rank decomposition of $T$), and it can be recovered using Algorithm 1 in $n^{O(m)}$ time. In more details, there exists a Zariski open dense subset $\A \subseteq (\tilde{\X}_{\setft{Sep}}^{\vee})^{\times R}$ for which this holds. This translates to a Zariski open dense subset of $(\X_{\setft{Sep}}^{\vee})^{\times R}$, where
\ba
{\X}_{\setft{Sep}}^{\vee} := \{ v^{\otimes m} :  v \in \complex^n\} \subseteq S^{m}(\complex^n),
\ea
completing the proof.
\end{proof}

Finally, we can also use our framework to provide guarantees for $r$-aided rank decompositions (also known as $(r,r,1)$-block rank decompositions).

\begin{cor} \label{cor:block_decomp}
Let $n_1,n_2,n_3$ and $r < \min \{n_1,n_2\}$ be positive integers. Then 
for a generically chosen tensor $T\in {\field^{n_1}\otimes \field^{n_2} \otimes \field^{n_3}}$ of $r$-aided rank
\ba
R \leq \min\Bigg\{n_3, \frac{\binom{n_1}{r+1}\binom{n_2}{r+1}}{(r+1)! \binom{n_1 n_2+r}{r+1}}(n_1 n_2 + r)\Bigg\}=  \min\big\{n_3~, ~\Omega_r( n_1 n_2)\big\},
\ea
in $(n_1 n_2)^{O(r)}$ time our Algorithm 1 can be used to recover the $r$-aided rank decomposition of $T$ and certify that it is unique.
\end{cor}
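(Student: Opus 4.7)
The plan is to recognize that an $r$-aided rank decomposition is exactly an $(\X_r, \W)$-decomposition, where $\X_r \subseteq \V := \field^{n_1}\otimes \field^{n_2}$ is the determinantal variety of matrices of rank at most $r$ (as defined in~\eqref{eq:XR}) and $\W = \field^{n_3}$. With this identification, the statement is a direct instantiation of Theorem~\ref{cor:generic_decomp}, so the work is entirely in verifying its hypotheses and plugging in parameters.

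First I would recall from Section~\ref{sec:XR} the relevant facts about $\X_r$: it is an irreducible conic variety in $\V$ of dimension $n := n_1 n_2$; it is cut out by the $p = \binom{n_1}{r+1}\binom{n_2}{r+1}$ linearly independent $(r+1)\times(r+1)$ minors, which are homogeneous of degree $d = r+1$; and it is non-degenerate of order $r = d-1$ (since the $(r+1)\times(r+1)$ minors generate the ideal of $\X_r$ over both $\real$ and $\complex$, as discussed there). Therefore $\X_r$ satisfies every hypothesis of Theorem~\ref{cor:generic_decomp} with
\[
\delta = \frac{p}{\binom{n+d-1}{d}} = \frac{\binom{n_1}{r+1}\binom{n_2}{r+1}}{\binom{n_1n_2 + r}{r+1}}.
\]

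Applying Theorem~\ref{cor:generic_decomp} with these parameters, a generically chosen tensor $T \in \V \otimes \W$ of $(\X_r,\W)$-rank $R$ satisfying
\[
R \le \min\Big\{\tfrac{\delta}{d!}(n+d-1),\, \dim(\W)\Big\} = \min\Big\{\tfrac{\binom{n_1}{r+1}\binom{n_2}{r+1}}{(r+1)!\binom{n_1n_2+r}{r+1}}(n_1n_2+r),\, n_3\Big\}
\]
admits a unique $(\X_r,\W)$-rank decomposition, and Algorithm~1 recovers and certifies it in time $n^{O(d)} = (n_1 n_2)^{O(r)}$. The notion of \emph{generically chosen tensor of $r$-aided rank $R$} matches the genericity notion of Theorem~\ref{cor:generic_decomp} by Section~\ref{sec:variety_decomp}.

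The remaining sanity check is the asymptotic equality $\frac{\binom{n_1}{r+1}\binom{n_2}{r+1}}{(r+1)!\binom{n_1n_2+r}{r+1}}(n_1 n_2+r) = \Omega_r(n_1 n_2)$. For fixed $r$, one has $\binom{n_1}{r+1}\binom{n_2}{r+1} = \Theta_r\bigl((n_1 n_2)^{r+1}\bigr)$ and $\binom{n_1n_2+r}{r+1} = \Theta_r\bigl((n_1 n_2)^{r+1}\bigr)$, so their ratio is $\Theta_r(1)$; multiplying by the extra factor $(n_1 n_2 + r)$ yields the $\Omega_r(n_1 n_2)$ bound. There is no real obstacle here beyond this bookkeeping: the heavy lifting (uniqueness, certification, algorithmic recovery) is entirely done by Theorem~\ref{cor:generic_decomp}, which in turn rests on the genericity guarantee Theorem~\ref{thm:generic_cert:simplified} already proved in Section~\ref{sec:simplified}.
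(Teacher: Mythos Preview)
Your proposal is correct and follows exactly the paper's own argument: invoke Theorem~\ref{cor:generic_decomp} with $\X=\X_r\subseteq \field^{n_1}\otimes\field^{n_2}$ and $\W=\field^{n_3}$, using the facts from Section~\ref{sec:XR} that $\X_r$ is irreducible, non-degenerate of order $r$, and cut out by $\binom{n_1}{r+1}\binom{n_2}{r+1}$ linearly independent degree-$(r+1)$ polynomials. Your additional verification of the $\Omega_r(n_1 n_2)$ asymptotic is a welcome bit of bookkeeping that the paper leaves implicit.
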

\begin{proof}
This follows from Theorem~\ref{cor:generic_decomp}, and fact that $\X_r$ is non-degenerate of degree $r$ and is cut out by $p=\binom{n_1}{r+1}\binom{n_2}{r+1}$ linearly independent homogeneous polynomials of degree $d=r+1$ (see Section~\ref{sec:XR}).
\end{proof}



\bibliographystyle{alpha}
\bibliography{references}

\newcommand{\etalchar}[1]{$^{#1}$}
\begin{thebibliography}{CMDL{\etalchar{+}}15}

\bibitem[An12]{an2012rigid}
Jinpeng An.
\newblock Rigid geometric structures, isometric actions, and algebraic
  quotients.
\newblock {\em Geometriae Dedicata}, 157(1):153--185, 2012.

\bibitem[ATL11]{ATL11}
Remigiusz Augusiak, Jordi Tura, and Maciej Lewenstein.
\newblock A note on the optimality of decomposable entanglement witnesses and
  completely entangled subspaces.
\newblock {\em Journal of Physics A: Mathematical and Theoretical}, 44:212001,
  2011.

\bibitem[BBC{\etalchar{+}}93]{BBCJPW93}
Charles~H. Bennett, Gilles Brassard, Claude Cr{\'e}peau, Richard Jozsa, Asher
  Peres, and William~K. Wootters.
\newblock Teleporting an unknown quantum state via dual classical and
  {E}instein-{P}odolsky-{R}osen channels.
\newblock {\em Physical Review Letters}, 70:1895--1899, 1993.

\bibitem[BCC{\etalchar{+}}17]{blasiak2016cap}
Jonah Blasiak, Thomas Church, Henry Cohn, Joshua~A. Grochow, Eric Naslund,
  William~F. Sawin, and Chris Umans.
\newblock On cap sets and the group-theoretic approach to matrix
  multiplication.
\newblock {\em Discrete Analysis}, 3, 2017.

\bibitem[BCMV14a]{BCMVopen}
Aditya Bhaskara, Moses Charikar, Ankur Moitra, and Aravindan Vijayaraghavan.
\newblock Open problem: Tensor decompositions: Algorithms up to the uniqueness
  threshold?
\newblock In {\em Proceedings of the 27th Conference on Learning Theory},
  volume~35 of {\em Machine Learning Research}, 2014.

\bibitem[BCMV14b]{BCMV}
Aditya Bhaskara, Moses Charikar, Ankur Moitra, and Aravindan Vijayaraghavan.
\newblock Smoothed analysis of tensor decompositions.
\newblock In {\em Symposium on the Theory of Computing (STOC)}, 2014.

\bibitem[BCPV19]{BCPV}
Aditya Bhaskara, Aidao Chen, Aidan Perreault, and Aravindan Vijayaraghavan.
\newblock Smoothed analysis in unsupervised learning via decoupling.
\newblock In {\em Proceedings of the 60th Annual IEEE Symposium on Foundations
  of Computer Science (FOCS)}, 2019.

\bibitem[BDM{\etalchar{+}}99]{BDMSST99}
Charles~H. Bennett, David~P. DiVincenzo, Tal Mor, Peter~W. Shor, John~A.
  Smolin, and Barbara~M. Terhal.
\newblock Unextendible product bases and bound entanglement.
\newblock {\em Physical Review Letters}, 82:5385--5388, 1999.

\bibitem[BFS99]{buss1999computational}
Jonathan~F. Buss, Gudmund~S. Frandsen, and Jeffrey~O. Shallit.
\newblock The computational complexity of some problems of linear algebra.
\newblock {\em Journal of Computer and System Sciences}, 58(3):572--596, 1999.

\bibitem[BGVKS20]{9317903}
Jess Banks, Jorge Garza-Vargas, Archit Kulkarni, and Nikhil Srivastava.
\newblock Pseudospectral shattering, the sign function, and diagonalization in
  nearly matrix multiplication time.
\newblock In {\em 2020 IEEE 61st Annual Symposium on Foundations of Computer
  Science (FOCS)}, 2020.

\bibitem[Bha06]{Bha06}
B.~V.~Rajarama Bhat.
\newblock A completely entangled subspace of maximal dimension.
\newblock {\em International Journal of Quantum Information}, 4:325--330, 2006.

\bibitem[BKS17]{barak2017quantum}
Boaz Barak, Pravesh~K. Kothari, and David Steurer.
\newblock Quantum entanglement, sum of squares, and the log rank conjecture.
\newblock In {\em Proceedings of the 49th Annual ACM SIGACT Symposium on Theory
  of Computing}, 2017.

\bibitem[BVD{\etalchar{+}}18]{bousse2018linear}
Martijn Bouss{\'e}, Nico Vervliet, Ignat Domanov, Otto Debals, and Lieven
  De~Lathauwer.
\newblock Linear systems with a canonical polyadic decomposition constrained
  solution: Algorithms and applications.
\newblock {\em Numerical Linear Algebra with Applications}, 25(6):e2190, 2018.

\bibitem[BW92]{BW92}
Charles~H. Bennett and Stephen~J. Wiesner.
\newblock Communication via one- and two-particle operators on
  {E}instein-{P}odolsky-{R}osen states.
\newblock {\em Physical Review Letters}, 69:2881--2884, 1992.

\bibitem[CJ10]{comon2010handbook}
Pierre Comon and Christian Jutten.
\newblock {\em Handbook of Blind Source Separation: Independent component
  analysis and applications}.
\newblock Academic press, 2010.

\bibitem[CMDL{\etalchar{+}}15]{cichocki2015tensor}
Andrzej Cichocki, Danilo Mandic, Lieven De~Lathauwer, Guoxu Zhou, Qibin Zhao,
  Cesar Caiafa, and Huy~Anh Phan.
\newblock Tensor decompositions for signal processing applications: From
  two-way to multiway component analysis.
\newblock {\em IEEE signal processing magazine}, 32(2):145--163, 2015.

\bibitem[CMW08]{CMW08}
Toby~S. Cubitt, Ashley Montanaro, and Andreas Winter.
\newblock On the dimension of subspaces with bounded {S}chmidt rank.
\newblock {\em Journal of Mathematical Physics}, 49:022107, 2008.

\bibitem[CO12]{CO12}
Luca Chiantini and Giorgio Ottaviani.
\newblock On generic identifiability of 3-tensors of small rank.
\newblock {\em SIAM Journal on Matrix Analysis and Applications}, 33, 2012.

\bibitem[CR20]{ChenRademacher20}
Haolin Chen and Luis Rademacher.
\newblock Overcomplete order-3 tensor decomposition, blind deconvolution and
  gaussian mixture models.
\newblock {\em CoRR}, abs/2007.08133, 2020.

\bibitem[CS14]{CS14}
Dariusz Chru\'{s}ci\'{n}ski and Gniewomir Sarbicki.
\newblock Entanglement witnesses: construction, analysis and classification.
\newblock {\em Journal of Physics A: Mathematical and Theoretical}, 47:483001,
  2014.

\bibitem[DDL14]{domanov2014canonical}
Ignat Domanov and Lieven De~Lathauwer.
\newblock Canonical polyadic decomposition of third-order tensors: Reduction to
  generalized eigenvalue decomposition.
\newblock {\em SIAM Journal on Matrix Analysis and Applications},
  35(2):636--660, 2014.

\bibitem[DDL16]{domanov2016generic}
Ignat Domanov and Lieven De~Lathauwer.
\newblock Generic uniqueness of a structured matrix factorization and
  applications in blind source separation.
\newblock {\em IEEE Journal of Selected Topics in Signal Processing},
  10(4):701--711, 2016.

\bibitem[DDL17]{domanov2017canonical}
Ignat Domanov and Lieven De~Lathauwer.
\newblock Canonical polyadic decomposition of third-order tensors: Relaxed
  uniqueness conditions and algebraic algorithm.
\newblock {\em Linear Algebra and its Applications}, 513:342--375, 2017.

\bibitem[DDL20]{domanov2020uniqueness}
Ignat Domanov and Lieven De~Lathauwer.
\newblock On uniqueness and computation of the decomposition of a tensor into
  multilinear rank-(1,{L}\_r,{L}\_r) terms.
\newblock {\em SIAM Journal on Matrix Analysis and Applications},
  41(2):747--803, 2020.

\bibitem[DL06]{de2006link}
Lieven De~Lathauwer.
\newblock A link between the canonical decomposition in multilinear algebra and
  simultaneous matrix diagonalization.
\newblock {\em SIAM journal on Matrix Analysis and Applications},
  28(3):642--666, 2006.

\bibitem[DL08a]{de2008decompositionspart1}
Lieven De~Lathauwer.
\newblock Decompositions of a higher-order tensor in block terms---part i:
  Lemmas for partitioned matrices.
\newblock {\em SIAM Journal on Matrix Analysis and Applications},
  30(3):1022--1032, 2008.

\bibitem[DL08b]{de2008decompositionspart2}
Lieven De~Lathauwer.
\newblock Decompositions of a higher-order tensor in block terms---part ii:
  Definitions and uniqueness.
\newblock {\em SIAM Journal on Matrix Analysis and Applications},
  30(3):1033--1066, 2008.

\bibitem[DLCC07]{DLCC}
Lieven De~Lathauwer, Josphine Castaing, and Jean-Fran{\c{c}}ois Cardoso.
\newblock Fourth-order cumulant-based blind identification of underdetermined
  mixtures.
\newblock {\em IEEE Trans. on Signal Processing}, 55, 2007.

\bibitem[DLN08]{de2008decompositionspart3}
Lieven De~Lathauwer and Dimitri Nion.
\newblock Decompositions of a higher-order tensor in block terms---part iii:
  Alternating least squares algorithms.
\newblock {\em SIAM journal on Matrix Analysis and Applications},
  30(3):1067--1083, 2008.

\bibitem[DMS{\etalchar{+}}03]{DMSST03}
D.~P. DiVincenzo, T.~Mor, P.~W. Shor, J.~A. Smolin, and B.~M. Terhal.
\newblock Unextendible product bases, uncompletable product bases and bound
  entanglement.
\newblock {\em Communications in Mathematical Physics}, 238:379--410, 2003.

\bibitem[DRMA21]{DRA21}
Maciej Demianowicz, Grzegorz Rajchel-Mieldzio\'{c}, and Remigiusz Augusiak.
\newblock Simple sufficient condition for subspace to be completely or
  genuinely entangled.
\newblock {\em New Journal of Physics}, 23:103016, 2021.

\bibitem[EVDL22]{EVDL2022}
Eric Evert, Michiel Vandecappelle, and Lieven De~Lathauwer.
\newblock A recursive eigenspace computation for the canonical polyadic
  decomposition.
\newblock {\em SIAM Journal on Matrix Analysis and Applications},
  43(1):274--300, 2022.

\bibitem[FL17]{fox2017tight}
Jacob Fox and L{\'a}szl{\'o}~Mikl{\'o}s Lov{\'a}sz.
\newblock A tight bound for {G}reen's arithmetic triangle removal lemma in
  vector spaces.
\newblock In {\em Proceedings of the Twenty-Eighth Annual ACM-SIAM Symposium on
  Discrete Algorithms}, pages 1612--1617. SIAM, 2017.

\bibitem[FOV99]{flenner1999joins}
Hubert Flenner, Liam O'Carroll, and Wolfgang Vogel.
\newblock Joins and intersections.
\newblock In {\em Joins and Intersections}. Springer, 1999.

\bibitem[GM15]{GM15}
Rong Ge and Tengyu Ma.
\newblock Decomposing overcomplete 3rd order tensors using sum-of-squares
  algorithms.
\newblock In {\em APPROX}, 2015.

\bibitem[GR08]{GR08}
Gilad Gour and Aidan Roy.
\newblock Entanglement of subspaces in terms of entanglement of superpositions.
\newblock {\em Physical Review A}, 77:012336, 2008.

\bibitem[GVX14]{GVX14}
Navin Goyal, Santosh Vempala, and Ying Xiao.
\newblock Fourier pca and robust tensor decomposition.
\newblock STOC '14, pages 584--593, New York, NY, USA, 2014. Association for
  Computing Machinery.

\bibitem[GW07]{GW07}
Gilad Gour and Nolan~R. Wallach.
\newblock Entanglement of subspaces and error-correcting codes.
\newblock {\em Physical Review A}, 76:042309, 2007.

\bibitem[Har70]{harshman1970foundations}
Richard~A. Harshman.
\newblock Foundations of the parafac procedure: Models and conditions for an
  ``explanatory'' multimodal factor analysis.
\newblock {\em UCLA Working Papers in Phonetics}, 16:1--84, 1970.

\bibitem[Har72]{Har72}
Richard~A. Harshman.
\newblock Determination and proof of minimum uniqueness conditions for
  {PARAFAC1}.
\newblock {\em UCLA Working Papers in Phonetics}, 22:111--117, 1972.

\bibitem[Har13a]{harris2013algebraic}
Joe Harris.
\newblock {\em Algebraic Geometry: A First Course}.
\newblock Graduate Texts in Mathematics. Springer New York, 2013.

\bibitem[Har13b]{hartshorne2013algebraic}
Robin Hartshorne.
\newblock {\em Algebraic geometry}, volume~52.
\newblock Springer Science \& Business Media, 2013.

\bibitem[H{\aa}s90]{Has90}
Johan H{\aa}stad.
\newblock Tensor rank is {NP}-complete.
\newblock {\em Journal of Algorithms}, 11(4), 1990.

\bibitem[HG20]{HG20}
Felix Huber and Markus Grassl.
\newblock Quantum codes of maximal distance and highly entangled subspaces.
\newblock {\em Quantum}, 4:284, 2020.

\bibitem[HK15]{1751-8121-48-4-045303}
Kil-Chan Ha and Seung-Hyeok Kye.
\newblock Multi-partite separable states with unique decompositions and
  construction of three qubit entanglement with positive partial transpose.
\newblock {\em Journal of Physics A: Mathematical and Theoretical},
  48(4):045303, 2015.

\bibitem[HL13]{HL}
Christopher~J. Hillar and Lek-Heng Lim.
\newblock Most tensor problems are {NP}-hard.
\newblock {\em Journal of the American Mathematical Society}, 60, 2013.

\bibitem[HM10]{HM10}
Aram~W. Harrow and Ashley Montanaro.
\newblock An efficient test for product states with applications to quantum
  merlin-arthur games.
\newblock In {\em Proceedings of the 51st Annual IEEE Symposium on Foundations
  of Computer Science (FOCS)}, pages 633--642, 2010.

\bibitem[Hor97]{Hor97}
Pawel Horodecki.
\newblock Separability criterion and inseparable mixed states with positive
  partial transposition.
\newblock {\em Physics Letters A}, 232:333--339, 1997.

\bibitem[JGKA19]{Anandkumarbook}
Majid Janzamin, Rong Ge, Jean Kossaifi, and Animashree Anandkumar.
\newblock Spectral learning on matrices and tensors.
\newblock {\em Foundations and Trends in Machine Learning}, 12, 11 2019.

\bibitem[JLV22]{JLV22}
Nathaniel Johnston, Benjamin Lovitz, and Aravindan Vijayaraghavan.
\newblock Complete hierarchy of linear systems for certifying quantum
  entanglement of subspaces.
\newblock {\em Physical Review A}, 106:062443, 2022.

\bibitem[KB09]{kolda2009tensor}
Tamara~G. Kolda and Brett~W. Bader.
\newblock Tensor decompositions and applications.
\newblock {\em SIAM review}, 51(3):455--500, 2009.

\bibitem[Kol88]{kollar1988sharp}
J.~Koll{\'a}r.
\newblock Sharp effective nullstellensatz.
\newblock {\em Journal of the American Mathematical Society}, pages 963--975,
  1988.

\bibitem[Kru77]{kruskal1977three}
Joseph~B. Kruskal.
\newblock Three-way arrays: rank and uniqueness of trilinear decompositions,
  with application to arithmetic complexity and statistics.
\newblock {\em Linear Algebra and its Applications}, 18(2):95--138, 1977.

\bibitem[KSS16]{kleinberg2016growth}
Robert Kleinberg, Will Sawin, and David~E Speyer.
\newblock The growth rate of tri-colored sum-free sets.
\newblock {\em arXiv preprint}, math.CO/1607.00047, 2016.

\bibitem[Lan12]{landsberg2012tensors}
Joseph~M. Landsberg.
\newblock {\em Tensors: {G}eometry and {A}pplications}.
\newblock Graduate studies in mathematics. American Mathematical Society, 2012.

\bibitem[LP21]{tensor}
Benjamin Lovitz and Fedor Petrov.
\newblock A generalization of {K}ruskal's theorem on tensor decomposition.
\newblock {\em arXiv preprint}, math.CO/2103.15633, 2021.

\bibitem[LPS06]{LPS06}
Noah Linden, Sandu Popescu, and John~A. Smolin.
\newblock Entanglement of superpositions.
\newblock {\em Physical Review Letters}, 97:100502, 2006.

\bibitem[LRA93]{LRA93}
S.~E. Leurgans, R.~T. Ross, and R.~B. Abel.
\newblock A decomposition for three-way arrays.
\newblock {\em SIAM Journal on Matrix Analysis and Applications},
  14(4):1064--1083, 1993.

\bibitem[Man20]{mangolte2020real}
Fr{\'e}d{\'e}ric Mangolte.
\newblock {\em Real algebraic varieties}.
\newblock Springer, 2020.

\bibitem[Moi18]{Moitrabook}
Ankur Moitra.
\newblock {\em Algorithmic Aspects of Machine Learning}.
\newblock Cambridge University Press, 2018.

\bibitem[MSS16]{MSS}
Tengyu Ma, Jonathan Shi, and David Steurer.
\newblock Polynomial-time tensor decompositions with sum-of-squares.
\newblock In {\em Proceedings of the 57th Annual IEEE Symposium on the
  Foundations of Computer Science (FOCS)}, 2016.

\bibitem[NC00]{NC00}
Michael.~A. Nielsen and Isaac~L. Chuang.
\newblock {\em Quantum computation and quantum information}.
\newblock Cambridge University Press, 2000.

\bibitem[NS17]{naslund2017upper}
Eric Naslund and Will Sawin.
\newblock Upper bounds for sunflower-free sets.
\newblock In {\em Forum of Mathematics, Sigma}, volume~5. Cambridge University
  Press, 2017.

\bibitem[Par04]{Par04}
K.~R. Parthasarathy.
\newblock On the maximal dimension of a completely entangled subspace for
  finite level quantum systems.
\newblock {\em Proceedings Mathematical Sciences}, 114:365--374, 2004.

\bibitem[Pet96]{petrovic1996spaces}
Zoran~Zdravko Petrovic.
\newblock {\em On spaces of matrices satisfying some rank conditions}.
\newblock The Johns Hopkins University, 1996.

\bibitem[Pet16]{petrov2016combinatorial}
Fedor Petrov.
\newblock Combinatorial results implied by many zero divisors in a group ring.
\newblock {\em arXiv preprint}, math.CO/1606.03256, 2016.

\bibitem[Ree96]{rees1996linear}
Elmer~G. Rees.
\newblock Linear spaces of real matrices of large rank.
\newblock {\em Proceedings of the Royal Society of Edinburgh Section A:
  Mathematics}, 126(1):147--151, 1996.

\bibitem[SDLF{\etalchar{+}}17]{sidiropoulos2017tensor}
Nicholas~D. Sidiropoulos, Lieven De~Lathauwer, Xiao Fu, Kejun Huang,
  Evangelos~E. Papalexakis, and Christos Faloutsos.
\newblock Tensor decomposition for signal processing and machine learning.
\newblock {\em IEEE Transactions on Signal Processing}, 65(13):3551--3582,
  2017.

\bibitem[Vij20]{Vij20}
Aravindan Vijayaraghavan.
\newblock {\em Beyond the Worst-Case Analysis of Algorithms}, chapter 19
  Efficient Tensor Decomposition.
\newblock Cambridge University Press, 2020.

\end{thebibliography}

\appendix

\section{Counterexample to Lemma 2.3 in~\cite{de2006link}} \label{app:counterexample}

\begin{example}[Counterexample to Lemma 2.3 in~\cite{de2006link}]
The statement of Lemma 2.3 in~\cite{de2006link} is as follows: Let $\W \subseteq \real^{n} \otimes \real^n$ be a linear subspace. Then for any positive integer $R$ satisfying $R \leq n+1$ and
\ba
\dim(\W)+\binom{R}{2} \leq n^2,
\ea
a generic collection of vectors $v_1,\dots, v_R \in \real^n$ satisfies the property that
\ba
\W \cap \spn\{v_i \otimes v_j : 1 \leq i < j \leq R\}=\{0\}.
\ea
This is false (over both $\real$ and $\complex$). Let $n=4$, let $\U \subseteq \real^4$ be an arbitrary $3$-dimensional subspace, and let $\W=\U^{\otimes 2}$. Then $R=4$ satisfies both inequalities, but for any collection of linearly independent vectors $v_1,\dots, v_4 \in \real^n$, there exist non-zero elements $u_1 \in \spn\{v_1,v_2\}\cap \U$ and $u_2 \in \spn\{v_3,v_4\} \cap \U$ (since $\U$ is a $3$-dimensional subspace of $\real^4$). It follows that
\ba
u_1 \otimes u_2 \in \W \cap \spn\{v_i \otimes v_j : 1 \leq i < j \leq 4\}.
\ea
This gives a counterexample to Lemma 2.3 in~\cite{de2006link}. The false reasoning in their proof seems to be in the fifth line of page 655 (the third to last line of the proof): Here, it seems to be implicitly claimed that for an $\real$-vector space $\V$ and three finite sets of vectors $A,B,C \in \V$, if $A \cup B$ and $B \cup C$ are linearly independent, then $\spn\{A \cup B\} \cap \spn\{B \cup C\} = \spn\{B\}$. This is incorrect (consider $A=\{e_1\}, B=\{e_1+e_2\}, C=\{e_2\}$).
\end{example}

\end{document}